\documentclass{article}

\usepackage{fullpage}
\usepackage{amsthm}
\usepackage{amsmath}
\usepackage{amsfonts}
\usepackage{amssymb}
\usepackage{stmaryrd}
\usepackage{verbatim}
\usepackage{epic}

\usepackage{amscd}
\usepackage[dvips]{epsfig}
\usepackage{wrapfig}
\usepackage{enumitem}
\usepackage{pst-tree}
\usepackage{bussproofs}

\usepackage{cmll}

\bibliographystyle{plainurl}

\newcommand{\lone}{1}
	\newcommand{\ltens}{\otimes}

	\newcommand{\lpar}{\parr}
	
%

\newcommand{\target}[1]{t_{#1}}

\theoremstyle{plain}

\newtheorem*{notations}{Notations}
\newtheorem{definition}{Definition}
\newtheorem{theorem}{Theorem}
\newtheorem{example}{Example}
\newtheorem{prop}[theorem]{Proposition}
\newtheorem{lem}[theorem]{Lemma}
\newtheorem{fact}[theorem]{Fact}
\newtheorem{rem}{Remark}

\newcommand{\Nat}{\ensuremath{\mathbb{N}}}
\newcommand{\atoms}[1]{\textit{At}(#1)}
\newcommand{\nontrivialconnected}[3]{\mathcal{S}_{#1}^{#3}(#2)}
\newcommand{\connectedcomponents}[2]{\mathcal{C}^{#2}(#1)}
\newcommand{\size}[1]{\textit{size}(#1)}
\newcommand{\leftwires}[1]{\mathcal{L}(#1)}
\newcommand{\labelofcell}[1]{l_{#1}}
\newcommand{\taylor}[2]{\mathcal{T}(#1)[#2]}
\newcommand{\criticalports}[3]{\mathcal{K}_{#2, #3}(#1)}
\newcommand{\groundof}[1]{\mathcal{G}(#1)}
\newcommand{\finitemultisets}[1]{\mathcal{M}_\textit{fin}(#1)}
\newcommand{\finitesequences}[1]{{#1}^{< \infty}}
\newcommand{\cosize}[1]{\textit{cosize}(#1)}
\newcommand{\depthof}[1]{\textit{depth}(#1)}
\psset{treemode=U,levelsep=5mm}

\def\restriction#1#2{\mathchoice
              {\setbox1\hbox{${\displaystyle #1}_{\scriptstyle #2}$}
              \restrictionaux{#1}{#2}}
              {\setbox1\hbox{${\textstyle #1}_{\scriptstyle #2}$}
              \restrictionaux{#1}{#2}}
              {\setbox1\hbox{${\scriptstyle #1}_{\scriptscriptstyle #2}$}
              \restrictionaux{#1}{#2}}
              {\setbox1\hbox{${\scriptscriptstyle #1}_{\scriptscriptstyle #2}$}
              \restrictionaux{#1}{#2}}}
\def\restrictionaux#1#2{{#1\,\smash{\vrule height .8\ht1 depth .85\dp1}}_{\,#2}} 

\newcommand{\dom}[1]{\textsf{dom}(#1)}

\newcommand{\im}[1]{\textsf{im}(#1)}
\newcommand{\emptysequence}{\varepsilon}
\newcommand{\typesoflinks}{\mathfrak{T}}
\newcommand{\tens}{\otimes}
\newcommand{\one}{1}
\newcommand{\bottom}{\bot}
\newcommand{\cod}{\oc}
\newcommand{\contr}{\wn}

\newcommand{\portsatzero}[1]{\mathcal{P}_0(#1)}
\newcommand{\wiresatzero}[1]{\mathcal{W}_0(#1)}

\newcommand{\arity}[1]{{\textit{a}}_{#1}}
\newcommand{\ports}[1]{\mathcal{P}(#1)}
\newcommand{\conclusions}[1]{\mathcal{P}^{\textsf{f}}(#1)}
\newcommand{\axioms}[1]{\mathcal{A}(#1)}
\newcommand{\multiplicativeports}[1]{\mathcal{P}^\textit{m}(#1)}
\newcommand{\multiplicativeportsatzero}[1]{\mathcal{P}_0^\textit{m}(#1)}

\newcommand{\sm}[1]{\llbracket #1 \rrbracket}
\newcommand{\Card}[1]{\textsf{Card}\left( #1 \right)}
\newcommand{\portsoftype}[2]{\mathcal{P}^{#1}(#2)}
\newcommand{\portsatzerooftype}[2]{\mathcal{P}_0^{#1}(#2)}
\newcommand{\exponentialportsatzero}[1]{\mathcal{P}_0^\textit{e}(#1)}

\newcommand{\conclusionscirc}[1]{\mathcal{P}_\circ^{\textsf{f}}(#1)}
\newcommand{\conclusionsnotcirc}[1]{\mathcal{P}_\bullet^{\textsf{f}}(#1)}

\newcommand{\boxes}[1]{\mathcal{B}(#1)}
\newcommand{\boxesatzero}[1]{\mathcal{B}_{0}(#1)}
\newcommand{\exactboxes}[2]{\mathcal{B}^{=#2}(#1)}
\newcommand{\exactboxesatzero}[2]{\mathcal{B}_{0}^{=#2}(#1)}
\newcommand{\boxesgeq}[2]{\mathcal{B}^{\geq #2}(#1)}
\newcommand{\boxesatzerogeq}[2]{\mathcal{B}_0^{\geq #2}(#1)}
\newcommand{\boxesatzerosmaller}[2]{\mathcal{B}_0^{< #2}(#1)}
\newcommand{\exponentialports}[1]{\mathcal{P}^{\textit{e}}(#1)}
\newcommand{\wires}[1]{\mathcal{W}(#1)}
\newcommand{\supp}[1]{\textit{Supp}(#1)}

\newcommand{\scalefactbis}{0.4}

\newcommand{\scalefactter}{0.33}

\newcommand{\scalefactnine}{0.3}

\newcommand{\scalefactten}{0.45}

\newcommand{\scalefacteight}{0.30}

\newcommand{\scalefactseven}{0.23}

\newcommand{\scalefactfour}{0.7}

\newcommand{\scalefactfive}{0.4}

  {\gdef\scalefactor{#1}\begin{center}\proofSkipAmount \leavevmode}%
  {\scalebox{\scalefactor}{\DisplayProof}\proofSkipAmount \end{center} }

\title{The relational model is injective for Multiplicative Exponential Linear Logic}

\author{Daniel de Carvalho}

\begin{document}

\maketitle

\begin{abstract}
We prove a completeness result for Multiplicative Exponential Linear Logic (MELL): we show that the relational model is injective for MELL proof-nets, i.e. the equality between MELL proof-nets in the relational model is exactly axiomatized by cut-elimination.
\end{abstract}

In the seminal paper by Harvey Friedman \cite{Friedman}, it has been shown that equality between simply-typed lambda terms in the full typed structure $\mathcal{M}_X$ over an infnite set $X$ is completely axiomatized by $\beta$ and $\eta$: we have $\mathcal{M}_X \vDash v = u \Leftrightarrow v \simeq_{\beta \eta} u$. A natural problem is to know whether a similar result could be obtained for Linear Logic. 

Such a result can be seen as a ``separation'' theorem. To obtain such separation theorems, it is a prerequesite to have a ``canonical'' syntax. 
When Jean-Yves Girard introduced Linear Logic (LL) \cite{ll}, he not only introduced a sequent calculus system but also ``proof-nets''. Indeed, as for LJ and LK (sequent calculus systems for intuitionnistic and classical logic, respectively), different proofs in LL sequent calculus can represent ``morally'' the same proof: proof-nets were introduced to find a unique representative for these proofs. 

The technology of proof-nets was completely satisfactory for the multiplicative fragment without units.\footnote{For the multiplicative fragment with units, it has been recently shown \cite{MLLwithunits} that, in some sense, no satisfactory notion of proof-net can exist. Our proof-nets have no jump, so they identify too many sequent calculus proofs, but not more than the relational semantics.} For proof-nets having additives, contractions or weakenings, it was easy to exhibit different proof-nets that should be identified. Despite some flaws, the discovery of proof-nets was striking. In particular, Vincent Danos proved by syntactical means in \cite{phddanos} the confluence of these proof-nets for the Multiplicative Exponential Linear Logic fragment (MELL). 
For additives, the problem to have a satisfactory notion of proof-net has been addressed in \cite{mallpn}. For MELL, a ``new syntax'' was introduced in \cite{hilbert}. In the original syntax, the following properties of the weakening and of the contraction did not hold:
\begin{itemize}
\item the associativity of the contraction;
\item the neutrality of the weakening for the contraction;
\item the contraction and the weakening as morphisms of coalgebras.
\end{itemize}
But they hold in the new syntax; at least for MELL, we got a syntax that was a good candidate to deserve to be considered as being ``canonical''. Then trying to prove that any two ($\eta$-expanded) MELL proof-nets that are equal in some denotational semantics are $\beta$-joinable has become sensible and had at least the two following motivations:
\begin{itemize}
\item to prove the canonicity of the ``new syntax'' (if we quotient more normal proof-nets, then we would identify proof-nets having different semantics);
\item to prove by semantics means the confluence (if a proof-net reduces to two cut-free proof-nets, then they have the same semantics, so they would be $\beta$-joinable, hence equal).
\end{itemize}
The problem of \emph{injectivity}\footnote{The tradition of the lambda-calculus community rather suggests the word ``completeness'' and the terminology of category theory rather suggets the word ``faithfulness'', but we follow here the tradition of the Linear Logic community.} of the denotational semantics for MELL, which is the question whether equality in the denotational semantics between ($\eta$-expanded) MELL proof-nets is exactly axiomatized by cut-elimination or not, can be seen as a study of the separation property with a semantic approach. The first work on the study of this property in the framework of proof-nets is \cite{Marco} where the
authors deal with the translation into LL of the pure $\lambda$-calculus; it has been studied more recently for the intuitionistic multiplicative
fragment of LL \cite{typedbohm} and for differential nets \cite{separationdiff}. For Parigot's $\lambda \mu$-calculus, see \cite{lmbohm} and \cite{separationsaurin}.

Finally the precise problem of  injectivity for MELL has been adressed by Lorenzo Tortora de Falco in his PhD thesis \cite{phdtortora} and in \cite{injectcoh} for the (multiset based) coherence semantics and the multiset based relational semantics. 
He gave partial results and counter-examples for the coherence semantics: the (multiset based) coherence semantics is not injective for MELL.  
Also, it was conjectured that the relational model is
injective for MELL. We prove the conjecture in the present paper.

In \cite{injectcoh}, a proof of the injectivity of the relational model is given for a weak fragment. 
But despite many efforts (\cite{phdtortora}, \cite{injectcoh}, \cite{boudesunifying}, \cite{pagani06a}, \cite{separationdiff}, \cite{taylorexpansioninverse}...), all
the attempts to prove the conjecture failed up to now. 
New progress was made in \cite{LPSinjectivity}, where it has been proved that the relational semantics is injective for ``connected'' MELL proof-nets. Still, there, ``connected'' is understood as a very strong assumption, the set of ``connected'' MELL proof-nets contains the fragment of MELL defined by removing weakenings and units. Actually \cite{LPSinjectivity} proved a much stronger result: in the full MELL fragment two proof-nets $R$ and $R'$ with the same
interpretation are the same ``up to the connections between the doors of exponential
boxes'' (we say that they have the same LPS\footnote{The LPS of a proof-net is the graph obtained by forgetting the outline of the boxes but keeping the auxiliary doors.} - see Figures~\ref{fig: R_1}, \ref{fig: R_2} and \ref{fig: R_3} for an example of three different proof-nets having the same LPS). We wrote: ``This result can be expressed in terms of differential nets: two cut-free proof-nets with different LPS have different Taylor expansions. We also believe this work is an essential step
towards the proof of the full conjecture.'' Despite the fact we obtained a very interesting result about \emph{all} the proof-nets (i.e. also for non-``connected'' proof-nets\footnote{and even adding the MIX rule}), the last sentence was a bit too optimistic, since, in the present paper, which presents a proof of the full conjecture, we could not use any previous result nor any previous technic/idea.

The result of the present paper can be seen as
\begin{itemize}
\item a semantic separation property in the sense of \cite{Friedman};
\item a semantic proof of the confluence property;
\item a proof of the ``canonicity'' of the new syntax of MELL proof-nets;
\item a proof of the fact that if the Taylor expansions of two cut-free MELL proof-nets into differential nets \cite{EhrhardRegnier:DiffNets} coincide, then the two proof-nets coincide.
\end{itemize}

Let us give one more interpretation of its signifance. First, notice that a proof of this result should consist in showing that, given two non $\beta$-equivalent proof-nets $R$ and $R'$, their respective semantics $\sm{R}$ and $\sm{R'}$ are not equal, i.e. $\sm{R} \setminus \sm{R'} \not= \emptyset$ or $\sm{R'} \setminus \sm{R} \not= \emptyset$.\footnote{The converse, i.e. two $\beta$-equivalent proof-nets have the same semantics, holds by definition of soundness.} But, actually, we prove something much stronger: we prove that, given a proof-net $R$, there exist two points $\alpha$ and $\beta$ such that, for any proof-net $R'$, we have $\{ \alpha, \beta \} \subseteq \sm{R'} \Leftrightarrow R \simeq_\beta R'$. 

Now, the points of the relational model can be seen as non-idempotent intersection types\footnote{Idempotency of intersection ($\alpha \cap \alpha = \alpha$) does not hold.}  (see \cite{phddecarvalho} and \cite{Carvalhoexecution} for a correspondance between points of the relational model and System R - System R has also been studied recently in \cite{inhabitation}). And the proof given in the present paper uses the types only to derive the normalization property; actually we prove the injectivity for cut-free proof-nets in an untyped framework:\footnote{For cut-free proof-nets, types guarantee that they are not cyclic as graphs - instead of typing, it is enough to assume this property. Our proof even works for ``non-correct'' proof-structures (correctness is the property characterizing nets corresponding in a typed framework with proofs in sequent calculus): we could expect that if the injectivity of the relational semantics holds for proof-nets corresponding with MELL sequent calculus, then it still holds for proof-nets corresponding with MELL+MIX sequent calculus, since the category \textbf{Rel} of sets and relations is a compact closed category. \cite{k=2} assuming correctness substituted in the proof the ``bridges'' of \cite{LPSinjectivity}  by ``empires''.} substituting the assumption that proof-nets are typed by the assumption that proof-nets are normalizable does not change anything to the proof.\footnote{Except that we have to consider the \emph{atomic} subset of the interpretation instead of the full interpretation (see Remark~\ref{remark: untyped framework}).} 
In \cite{CarvPagTdF10}, we gave a semantic characterization of normalizable untyped proof-nets and we characterized ``head-normalizable'' proof-nets as proof-nets having a non-empty interpretation in the relational semantics. Principal typings in untyped $\lambda$-calculus are intersection types which allow to recover all the intersection types of some term. If, for instance, we consider the System $R$ of \cite{phddecarvalho} and \cite{Carvalhoexecution}, it is enough to consider some \emph{injective $1$-point}\footnote{An \emph{injective $k$-point} is a point in which all the positive multisets have cardinality $k$ and in which each atom occurring in it occurs exactly twice.} to obtain the principal typing of an untyped $\lambda$-term. But, generally, for normalizable MELL proof-nets, \emph{injective $k$-points}, for any $k$, are not principal typings; indeed, two cut-free MELL proof-nets having the same LPS have the same injective $k$-points for any $k \in \Nat$. In the current paper we show that a $1$-point and a \emph{$k$-injective point}\footnote{The reader should not confuse \emph{$k$-injective points} with \emph{injective $k$-points}. \emph{$k$-injective points} are points in which every positive multiset has cardinality $k^j$ for some $j> 0$ and, for any $j > 0$, there is at most one occurrence of a positive multiset having cardinality $k^j$ - they are obtained by \emph{$k$-injective experiments} (see our Definition~\ref{defin: k-injective}).} together allow to recover the interpretation of any normalizable MELL proof-net. So, the result of the current paper can be seen as a first attempt to find a right notion of ``principal typing'' of intersection types in Linear Logic. As a consequence, normalization by evaluation, as in \cite{Rocca88} for $\lambda$-calculus, finally becomes possible in Linear Logic too.

Section~\ref{section: Syntax} formalizes PS's (our cut-free proof-nets). Section~\ref{section: Experiments} gives a sketch of our algorithm leading from $\sm{R}$ to the rebuilding of $R$. Section~\ref{section: one step} describes more precisely one step of the algorithm and states our theorem (Theorem~\ref{thm: injectivity}): 
$\llbracket R \rrbracket = \llbracket R' \rrbracket \Leftrightarrow R \simeq_\beta R'$, 
where $\simeq_\beta$ is the reflexive symmetric transitive closure of the cut-elimination relation.

\begin{notations} 
We denote by $\emptysequence$ any empty sequence. If $a$ is a sequence $(\alpha_1, \ldots, \alpha_n)$, then $\alpha_0:a$ denotes the sequence $(\alpha_0, \ldots, \alpha_n)$; otherwise, it denotes the sequence $(\alpha, a)$ of length $2$. The set of finite sequences of elements of some set $\mathcal{E}$ is denoted by $\finitesequences{\mathcal{E}}$.

A multiset $f$ of elements of some set $\mathcal{E}$ is a function $\mathcal{E} \to \Nat$; we denote by $\supp{f}$ \emph{the support of $f$} i.e. the set $\{ e \in \mathcal{E} ; f(e) \not= 0 \}$. A multiset $f$ is said to be \emph{finite} if $\supp{f}$ is finite. The set of finite multisets of elements of some set $\mathcal{E}$ is denoted by $\finitemultisets{\mathcal{E}}$.

If $f$ is a function $\mathcal{E} \to \mathcal{E'}$, $x_0 \in \mathcal{E}$ and $y \in \mathcal{E'}$, then we denote by $f[x_0 \mapsto y]$ the function $\mathcal{E} \to \mathcal{E'}$ defined by $f[x_0 \mapsto y](x) = \left\lbrace \begin{array}{ll} f(x) & \textit{if $x \not= x_0$;}\\ y & \textit{if $ x = x_0$.} \end{array} \right.$
\end{notations}

\section{Syntax}\label{section: Syntax}

We introduce the syntactical objects we are interested in. As recalled in the introduction, simple types guarantee normalization, so we can limit ourselves to nets without any cut. Correctness does not play any role, that is why we do not restrict our nets to be correct and we rather consider proof-structures (\emph{PS's}). Since our proof is easily extended to MELL with axioms, we remove them for simplicity. Moreover, since it is convenient to represent formally our proof using differential nets with boxes (\emph{differential PS's}), we define PS's as differential PS's satisfying some conditions (Definition~\ref{defin: PS}). More generally, \emph{differential $\circ$-PS's} are defined by induction on the depth: Definition~\ref{defin: diff ground-structure} concerns what happens at depth $0$.

We define the set $\mathbb{T}$ of types as follows: $\mathbb{T} ::= \one \: \vert \: \bot \vert \: (\mathbb{T} \tens \mathbb{T}) \: \vert \: (\mathbb{T} \parr \mathbb{T}) \: \vert \: \cod \mathbb{T} \: \vert \: \contr \mathbb{T}$. We set $\typesoflinks = $ $\{ \tens, $ $\parr, $ $\one, \bottom, \cod, \contr, \circ \}$. Pre-contractions ($\circ$-ports) are an artefact of our inductive definition on the depth and are used to ensure the canonicity of our syntactical objects (see Example~\ref{example: old versus new}).

\begin{definition}\label{defin: diff ground-structure}
A \emph{differential ground-structure} is a $6$-tuple $\mathcal{G} = (\mathcal{W}, \mathcal{P}, l, t, \mathcal{L}, \textsf{T})$, where
\begin{itemize}
\item $\mathcal{P}$ is a finite set; the elements of $\ports{\mathcal{G}}$ are the \emph{ports of $\mathcal{G}$};
\item $l$ is a function $\mathcal{P} \to \typesoflinks$;the element $l(p)$ of $\typesoflinks$ is the \emph{label of $p$ in $\mathcal{G}$};
\item $\mathcal{W}$ is a subset of $\{ p \in \mathcal{P} ; l(p) \not= \circ \}$; the elements of $\wires{\mathcal{G}}$ are the \emph{wires of $\mathcal{G}$};
\item $t$ is a function $\mathcal{W} \to \{ p \in \mathcal{P} ; l(p) \notin \{ \one, \bottom \} \}$ such that, for any port $p$ of $\mathcal{G}$, we have $(l(p) \in$ $\{ \tens, $ $\parr \}$ $\Rightarrow$ $\Card{\{ w \in \mathcal{W} ; t(w) = p \}} = 2)$; 
if $t(w) = p$, then $w$ is a \emph{premise of $p$}; the \emph{arity $\arity{\mathcal{G}}(p)$ of $p$} is the number of its premises;
\item $\mathcal{L}$ is a subset of $\{ w \in \mathcal{W}; l(t(w)) \in \{ \tens, \parr \} \}$ such that $(\forall p \in \mathcal{P})$ $(l(p) \in \{ \tens, \parr \}$ $\Rightarrow$ $\Card{{\{ w \in \mathcal{L} ; t(w) = p \}}} = 1)$; if $w \in \mathcal{L}$ s.t. $t(w) = p$, then $w$ is \emph{the left premise of $p$;}
\item and $\textsf{T}$ is a function $\mathcal{P} \to \mathbb{T}$ such that, for any $p \in \mathcal{P}$,
\begin{itemize}
\item if $l(p) \in \{ \one, \bottom \}$, then $\textsf{T}(p) = l(p)$;
\item if $l(p) = \tens$ (resp. $l(p) = \parr$), then, for any $w_1 \in \mathcal{W} \cap \mathcal{L}$ and any $w_2 \in \mathcal{W} \setminus \mathcal{L}$ such that $t(w_1) = p = t(w_2)$, we have $\textsf{T}(p) = (\textsf{T}(w_1) \tens \textsf{T}(w_2))$ (resp. $\textsf{T}(p) = (\textsf{T}(w_1) \parr \textsf{T}(w_2))$);
\item if $l(p) = \cod$, then $(\exists C \in \mathbb{T}) (\textsf{T}(p) = \cod C \wedge (\forall w \in \mathcal{W}) (t(w) = p \Rightarrow \textsf{T}(w) = C))$;
\item and if $l(p) \in \{ \circ, \contr \}$, then $(\exists C \in \mathbb{T}) (\textsf{T}(p) = \contr C \wedge (\forall w \in \mathcal{W}) (t(w) = p \Rightarrow \textsf{T}(w) = C))$.
\end{itemize}
\end{itemize}
We set 
$\wires{\mathcal{G}} = \mathcal{W}$, $\ports{\mathcal{G}} = \mathcal{P}$, $\labelofcell{\mathcal{G}} = l$, 
$\target{\mathcal{G}} = t$, $\leftwires{\mathcal{G}} = \mathcal{L}$, $\textsf{T}_{\mathcal{G}} = \textsf{T}$. The set $\conclusions{\mathcal{G}} = \mathcal{P} \setminus \mathcal{W}$ is the set of \emph{conclusions of $\mathcal{G}$}. For any $t \in \typesoflinks$, we set $\portsoftype{t}{\mathcal{G}} = \{ p \in \mathcal{P}; l(p) = t \}$; we set $\multiplicativeports{\mathcal{G}} = \portsoftype{\tens}{\mathcal{G}} \cup \portsoftype{\parr}{\mathcal{G}}$; the set $\exponentialports{\mathcal{G}}$ of \emph{exponential ports of $\mathcal{G}$} is $\portsoftype{\cod}{\mathcal{G}} \cup \portsoftype{\contr}{\mathcal{G}} \cup \portsoftype{\circ}{\mathcal{G}}$. 

A \emph{ground-structure} is a differential ground-structure $\mathcal{G}$ s.t. $\im{\target{\mathcal{G}}} \cap (\portsoftype{\cod}{\mathcal{G}} \cup \portsoftype{\circ}{\mathcal{G}}) = \emptyset$.
\end{definition}

Notice that, for any differential ground-structure $\mathcal{G}$, we have $\portsoftype{\circ}{\mathcal{G}} \subseteq \conclusions{\mathcal{G}}$.

\begin{example}
The ground-structure $\mathcal{G}$ defined by: 
$\wires{\mathcal{G}} = \{ p_3, p_4,$ $p_5 \}$; 
$\ports{\mathcal{G}} = \{ p_1, \ldots, p_5 \}$; 
$\labelofcell{\mathcal{G}}(p_1) = \bot$, $\labelofcell{\mathcal{G}}(p_2) = \lpar$, $\labelofcell{\mathcal{G}}(p_3) = \contr = \labelofcell{\mathcal{G}}(p_4)$, $\labelofcell{\mathcal{G}}(p_5) = \lone$; 
$\target{\mathcal{G}}(p_3) = p_2 = \target{\mathcal{G}}(p_4)$, $\target{\mathcal{G}}(p_5) = p_4$; 
$\leftwires{\mathcal{G}} = \{ p_3 \}$; 
and $\textsf{T}_{\mathcal{G}}(p_1) = \bot$, $\textsf{T}_{\mathcal{G}}(p_2) = (\contr 1 \lpar \contr 1)$, $\textsf{T}_{\mathcal{G}}(p_3) = \contr 1 = \textsf{T}_{\mathcal{G}}(p_4)$, $\textsf{T}_{\mathcal{G}}(p_5) =  \one$; 
is the ground-structure of the content of the box $o_1$ of $R$ (the leftmost box of Figure~\ref{fig: new_example}).
\end{example}

The content of every box of our \emph{differential $\circ$-PS's} is a \emph{$\circ$-PS}: every $\cod$-port inside is always the main door of some box.

\begin{definition}\label{defin: differential PS}
For any $d \in \Nat$, we define, by induction on $d$, the set of \emph{differential $\circ$-PS of depth $d$} (resp. the set of \emph{$\circ$-PS of depth $d$}). 
A \emph{differential $\circ$-PS of depth $d$} (resp. a \emph{$\circ$-PS of depth $d$}) is a 4-tuple $S = (\mathcal{G}, \mathcal{B}_0, B, b)$, where
\begin{itemize}
\item $\mathcal{G}$ is a differential ground-structure (resp. a ground-structure);
\item $\mathcal{B}_0 \subseteq \{ p \in \portsoftype{\cod}{\mathcal{G}} ; \arity{\mathcal{G}}(p) = 0 \}$ (resp. $\mathcal{B}_0 = \portsoftype{\cod}{\mathcal{G}}$) and is the set of \emph{boxes of $S$ at depth $0$}; 
\item $B$ is a function that associates with every $o \in \mathcal{B}_0$ a $\circ$-PS $B(o) = (\groundof{B(o)}, $ $\boxesatzero{B(o)}, $ $B_{B(o)}, $ $b_{B(o)})$ of depth $< d$ that enjoys the following property: if $d > 0$, then there exists $o \in \mathcal{B}_0$ s.t. $B(o)$ is a $\circ$-PS of depth $d-1$; the $!$-port $o$ is \emph{the main door of the box $B(o)$};
\item and $b$ is a function that associates with every $o \in \mathcal{B}_0$ a function $b(o) : \conclusions{\groundof{B(o)}} \to \{ o \} \cup \portsoftype{\contr}{\mathcal{G}} \cup \portsoftype{\circ}{\mathcal{G}}$ such that (resp. $\portsoftype{\circ}{\mathcal{G}} \subseteq \bigcup_{o \in \mathcal{B}_0} \im{b(o)}$ and), for any $o \in \mathcal{B}_0$, 
\begin{itemize}
\item $\restriction{b(o)}{\portsoftype{\circ}{\groundof{B(o)}}}$ is injective\footnote{So one cannot (pre-)contract several $\circ$-ports of the same box.} (resp.\footnote{This stronger condition on $\circ$-PS's is \emph{ad hoc}, but it allows to lighten the notations.} $\restriction{b(o)}{\portsoftype{\circ}{\groundof{B(o)}}} = id_{\portsoftype{\circ}{\groundof{B(o)}}}$); if $q = b(o)(p)$ with $p \in {\portsoftype{\circ}{\groundof{B(o)}}}$, then we set $q_{S, o} = p$;
\item $o \in \im{b(o)}$ and $\im{\restriction{b(o)}{\portsoftype{\circ}{\groundof{B(o)}}}} \cap \portsoftype{\cod}{\mathcal{G}} = \emptyset$;
\item for any $p \in \dom{b(o)} \cap \portsoftype{\circ}{\groundof{B_R(o)}}$, we have $\textsf{T}_{\mathcal{G}}(b(o)(p))$ $=$ $\textsf{T}_{\groundof{B(o)}}(p)$;
\item for any $p \in \dom{b(o)} \setminus \portsoftype{\circ}{\groundof{B_R(o)}}$, we have $\textsf{T}_{\mathcal{G}}(b(o)(p)) \in \{  \contr \textsf{T}_{\groundof{B(o)}}(p), \cod \textsf{T}_{\groundof{B(o)}}(p) \}$;
\end{itemize}
\end{itemize}
(resp. moreover no $p \in \ports{\mathcal{G}}$ is a sequence)\footnote{This condition on $\circ$-PS's is \emph{ad hoc}, but it allows to simplify Definition~\ref{defin: Taylor}.}.
For any differential $\circ$-PS $S = (\mathcal{G}, \mathcal{B}_0, B, b)$, we set $\groundof{S} = \mathcal{G}$, $\boxesatzero{S} = \mathcal{B}_0$ and $\boxes{S} = \boxesatzero{S} \cup \bigcup_{o \in \boxesatzero{S}} \{ \mbox{o:o'} ; o' \in \boxes{B_S(o)} \}$ is the set of \emph{boxes of $S$}. We denote by $B_S$ the extension of the function $B$ that associates with each $\mbox{o:o'} \in \boxes{S}$, where $o \in \boxesatzero{S}$, the $\circ$-PS $B_{B_S(o)}(o')$. 
We denote by $b_S$ the extension of the function $b$ that associates with each $\mbox{o:o'} \in \boxes{S}$, where $o \in \boxesatzero{S}$, the function $b_{B_S(o)}(o')$.

We set 
$\wiresatzero{S} = \wires{\groundof{S}}$ and $\portsatzero{S} = \ports{\groundof{S}}$; the elements of $\portsatzero{S}$ (resp. of $\wiresatzero{S}$) are the \emph{ports of $S$ at depth $0$} (resp. the \emph{wires of $S$ at depth $0$}). 
For any $l \in \typesoflinks \cup \{ \textit{m}, \textit{e} \}$, we set $\portsatzerooftype{l}{S} = \portsoftype{l}{\groundof{S}}$. 
We set $\conclusions{S} = \conclusions{\groundof{S}}$, $\conclusionscirc{S} = \portsoftype{\circ}{\groundof{S}}$ and $\conclusionsnotcirc{S} = \conclusions{S} \setminus \conclusionscirc{S}$; the elements of $\conclusions{S}$ are the \emph{conclusions of $S$} and the elements of $\conclusionscirc{S}$ are the \emph{$\circ$-conclusions of $S$}. For any relation $P \in \{ \geq, =, < \}$ on $\Nat$, for any $i \in \Nat$, we set $\mathcal{B}_0^{P i}(S) = \{ o \in \boxesatzero{S} ; \textit{depth}(B_S(o)) P i \}$ and $\mathcal{B}^{P i}(S) = \{ o \in \boxes{S} ; \textit{depth}(B_S(o)) P i \}$.
\end{definition}

\begin{figure}
\centering
\scalebox{\scalefactter}{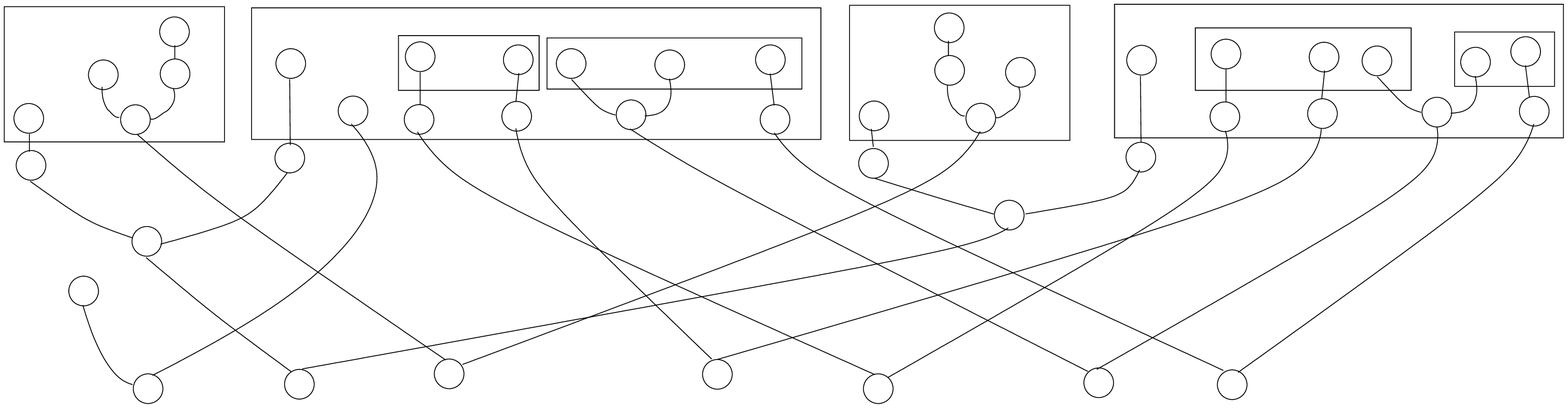}
\caption{PS $R$}
\label{fig: new_example}
\end{figure}

PS's are the MELL proof-nets studied in the present paper: there is no cut and no assumption of correctness property.

\begin{definition}\label{defin: PS}
A PS is a $\circ$-PS $R$ such that $\conclusionscirc{R} = \emptyset$.
\end{definition}

\begin{example}
Consider the PS $R$ of Figure~\ref{fig: new_example}. 
We have $\boxesatzero{R} = \{ o_1, o_2, o_3, o_4 \}$, $\boxes{R} = \{ o_1,$ $o_2,$ $o_3,$ $o_4,$ $(o_2, o),$ $(o_2, o'),$ $(o_4, o),$ $(o_4, o') \}$, $\exactboxes{R}{0} = \{ o_1, (o_2, o), (o_2, o'), o_3, (o_4, o), (o_4, o') \}$. We have $b_{R}(o_2)(o) = p_5$, $b_R(o_2)(p_4) = p_4$, $b_R(o_2)(p_6) = p_6$ and $b_R(o_2)(o') = p_7$.
\end{example}

\begin{example}~\label{example: old versus new}
In order to understand the role of the $\circ$-ports, consider how the proof-nets $O_1$ (Figure~\ref{figure: O1}) and $O_2$ (Figure~\ref{figure: O2}) in the ``old syntax'' (we denoted derelictions, contractions and auxiliary doors of the ``old syntax'' by \textsf{d}, \textsf{c} and \textsf{a}, respectively) are represented by the same PS $N$ (Figure~\ref{figure: N}). Roughly speaking, in our formalism, one pre-contracts (using $\circ$-ports) as soon as possible and one contracts (using $\contr$-ports) as late as possible.

\begin{figure}
\centering
\begin{minipage}{4.5cm}
\centering
\scalebox{\scalefactter}{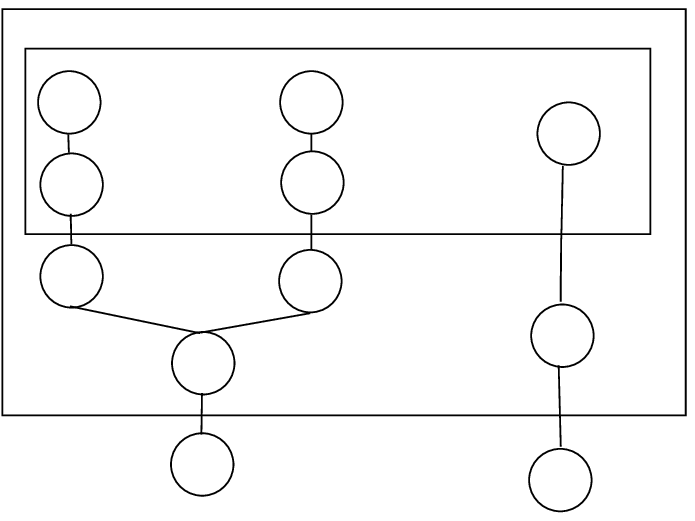}
\caption{$O_1$ (``old syntax'')}
\label{figure: O1}
\end{minipage}\hfill
\begin{minipage}{4.5cm}
\centering
\scalebox{\scalefactter}{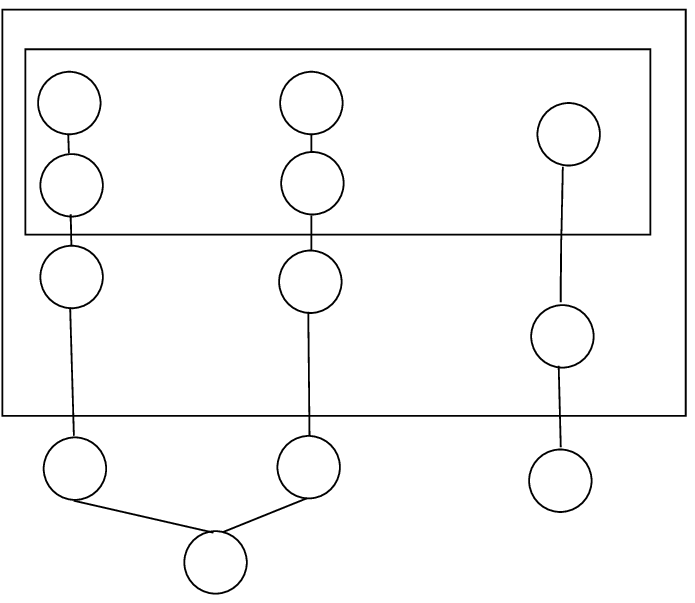}
\caption{$O_2$ (``old syntax'')}
\label{figure: O2}
\end{minipage}
\begin{minipage}{4.5cm}
\centering
\scalebox{\scalefactter}{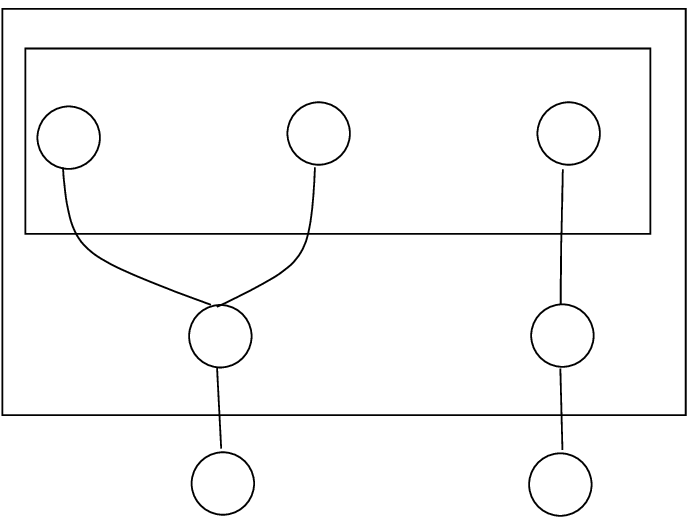}
\caption{PS $N$}
\label{figure: N}
\end{minipage}
\end{figure}
\end{example}


We write $R \simeq R'$ (resp. $R \equiv R'$) if $R$ and $R'$ are the same differential PS's up to the names of their ports (resp. that are not conclusions):

\begin{definition}
An \emph{isomorphism $\varphi: \mathcal{G} \simeq \mathcal{G'}$ of ground-structures} is a structure-preserving bijection $\portsatzero{\mathcal{G}} \simeq \portsatzero{\mathcal{G'}}$. We define, by induction on $\depthof{R}$, when $\varphi : R \simeq R'$ holds for two differential $\circ$-PS's $R$ and $R'$: it holds whenever $\varphi$ is a pair $(\varphi_\mathcal{G}, (\varphi_o)_{o \in \boxesatzero{R}})$ s.t. $\varphi_\mathcal{G} : \groundof{R} \simeq \groundof{R'}$, $\boxesatzero{R'} = \im{\restriction{\varphi_{\mathcal{G}}}{\portsatzerooftype{\cod}{R}}}$ and, 
for any $o \in \boxesatzero{R}$, 
$\varphi_o: B_R(o) \simeq B_{R'}(\varphi_\mathcal{G}(o))$ and $(\forall q \in \conclusions{B_R(o)}) b_{R'}(\varphi_{\mathcal{G}}(o))(\groundof{\varphi_o}(q)) = \varphi_{\mathcal{G}}(b_R(o)(q))$. 
We set $\groundof{\varphi} = \varphi_\mathcal{G}$ and, for any $o \in \boxesatzero{R}$, $\varphi(o) = \varphi_o$. 
We write $\varphi: R \equiv R'$ if $\varphi : R \simeq R'$ s.t. $\restriction{\groundof{\varphi}}{\conclusions{R}} = id_{\conclusions{R}}$.

We write $R \simeq R'$ (resp. $R \equiv R'$) if there exists $\varphi$ s.t. $\varphi : R \simeq R'$ (resp. $\varphi: R \equiv R'$).
\end{definition}

The \emph{arity $\arity{R}(q)$ of a port $q$ in a differential $\circ$-PS $R$} is computed  by ``ignoring'' the $\circ$-conclusions of the boxes of $R$:

\begin{definition}
Let $R$ be a differential $\circ$-PS. We define, by induction on $\depthof{R}$, the integers $\arity{R}(q)$ for any $q \in \portsatzero{R}$ and $\cosize{R}$: we set 
$\arity{R}(q) $ $=$ $\arity{\groundof{R}}(q) +$ $\sum_{o \in \boxesatzero{R}}$ $\Card{\left\lbrace p \in \conclusionsnotcirc{B_R(o)} ; b_{R}(o)(p) = q  \right\rbrace}$ $+$ $\sum_{\substack{o \in \boxesatzero{R}\\ q \in \im{\restriction{b_R(o)}{\conclusionscirc{B_R(o)}}}}} \arity{B_R(o)}(q_{R, o})$ and 
$\cosize{R}$ $=$ $\max(\{ \arity{R}(p) ; p \in \portsatzero{R} \} \cup \{ \cosize{B_R(o)} ; o \in \boxesatzero{R} \})$.
\end{definition}

\begin{example}
We have $\arity{R}(p_6) = 4$ (and not $2$) and $\cosize{R} = 4$ (see Figure~\ref{fig: new_example}).
\end{example}

\section{Experiments and their partial expansions}\label{section: Experiments}


When Jean-Yves Girard introduced proof-nets in \cite{ll}, he also introduced \emph{experiments of proof-nets}. Experiments (see our Definition~\ref{defin: experiment}) are a technology allowing to compute pointwise the interpretation $\llbracket R \rrbracket$ of a proof-net $R$ in the model directly on the proof-net rather than through some sequent calculus proof obtained from one of its sequentializations: the set of \emph{results} of all the experiments of a given proof-net is its interpretation $\llbracket R \rrbracket$. In an untyped framework, experiments correspond with type derivations and results correspond with intersection types. 

\begin{definition}\label{defin: experiment}
For any $C \in \mathbb{T}$, we define, by induction on $C$, the set $\sm{C}$: $\sm{1} = \{ \ast \} = \sm{\bot}$; $\sm{(C_1 \ltens C_2)} = \sm{C_1} \times \sm{C_2} = \sm{(C_1 \lpar C_2)}$; $\sm{\cod C} = \finitemultisets{\sm{C}} = \sm{\contr C}$.

Let $R$ be a differential $\circ$-PS. We define, by induction on $\textit{depth}(R)$, the set of \emph{experiments of $R$}: it is is the set of triples $(R, e_\mathcal{P}, e_\mathcal{B})$, where $e_\mathcal{P}$ is a function that associates with every $p \in \portsatzero{R}$ an element of $\sm{\textsf{T}_{\groundof{R}}(p)}$ and $e_\mathcal{B}$ is a function which associates to every $o \in \boxesatzero{R}$ a finite multiset of experiments of $B_R(o)$ such that
\begin{itemize}
\item for any $p \in \multiplicativeportsatzero{R}$, for any $w_1, w_2 \in \wiresatzero{R}$ such that $\target{\groundof{R}}(w_1) = p = \target{\groundof{R}}(w_2)$, $w_1 \in \leftwires{\groundof{R}}$ and $w_2 \notin \leftwires{\groundof{R}}$, we have $e_\mathcal{P}(p) = (e_\mathcal{P}(w_1), e_\mathcal{P}(w_2))$;
\item for any $p \in \exponentialportsatzero{R}$,
we have $e(p) = \sum_{\substack{w \in \wiresatzero{R}\\ \target{\groundof{R}}(w) = p}} [e_\mathcal{P}(w)] +  \sum_{o \in \boxesatzero{R}} \sum_{e' \in \supp{e_\mathcal{B}(o)}}$ $(\sum_{\substack{q \in \conclusionsnotcirc{B_R(o)}\\ b_R(o)(q) = p}} e_\mathcal{B}(o)(e') \cdot [{e'}_\mathcal{P}(q)] + \sum_{\substack{q \in \conclusionscirc{B_R(o)}\\ b_R(o)(q) = p}} e_\mathcal{B}(o)(e') \cdot {e'}_\mathcal{P}(q) )$.
\end{itemize}
For any experiment $e = (R, e_\mathcal{P}, e_\mathcal{B})$, we set $\ports{e} = e_\mathcal{P}$ and $\boxes{e} = e_\mathcal{B}$. We set $\sm{R} =$ $\{ \restriction{\ports{e}}{\conclusions{R}} ;$ $e \textit{ is an experiment of } R \}$.
\end{definition}

We encode in a more compact way the ``relevant'' information given by an experiment \emph{via} \emph{pseudo-experiments} and the functions $e^\#$:

\begin{definition}\label{defin: experiment induces pseudo-experiment}
For any differential $\circ$-PS $R$, we define, by induction on $\textit{depth}(R)$, the set of \emph{pseudo-experiments of $R$}: it is the set of functions that associate with every $o \in \boxesatzero{R}$ a finite set of pseudo-experiments of $B_R(o)$ and with $\emptysequence$ a pair $(R, m)$ for some $m \in \Nat$.

Given an experiment $e$ of some differential $\circ$-PS $R$, we define, by induction on $\textit{depth}(R)$, a pseudo-experiment $\overline{e}$ of $R$ as follows: $\overline{e}(\emptysequence) = (R, 1)$ and $\overline{e}(o) = \bigcup_{f \in \textit{Supp}(\boxes{e}(o))}$ $\{ \overline{f}[\emptysequence \mapsto (B_R(o), i)] ; 1 \leq i \leq \boxes{e}(o)(f) \}$ for any $o \in \boxesatzero{R}$.

Given a pseudo-experiment $e$ of a differential $\circ$-PS $R$, we define, by induction on $\textit{depth}(R)$, the function $e^\#: \boxes{R} \rightarrow \mathcal{P}_\textit{fin}(\Nat)$ as follows: for any $o \in \boxesatzero{R}$, $e^\#(o) = \{ \Card{e(o)} \}$ and, for any $o' \in \boxes{B_R(o)}$, $e^\#(\mbox{o:o'}) = \bigcup_{e' \in e(o)} {e'}^\#(o')$.
\end{definition}

There are different kinds of experiments:
\begin{itemize}
\item In \cite{injectcoh}, it was shown that given the result of an \emph{injective $k$-obsessional experiment} ($k$ big enough) of a cut-free proof-net in the fragment $A ::= X \vert \contr A \parr A \vert A \parr \contr A \vert A \otimes A \vert !A$, one can rebuild the entire experiment and, so, the entire proof-net. There, ``injective'' means that the experiment labels two different axioms with different atoms and ``obsessional'' means that different copies of the same axiom are labeled by the same atom.
\item In \cite{LPSinjectivity}, it was shown that for any two cut-free MELL proof-nets $R$ and $R'$, we have $LPS(R) = LPS(R')$ iff, for $k$ big enough\footnote{Interestingly, \cite{k=2}, following the approach of \cite{LPSinjectivity}, showed that, if these two proof-nets are assumed to be (recursively) connected, then we can take $k = 2$.}, there exist an \emph{injective $k$-experiment} of $R$ and an \emph{injective $k$-experiment} of $R'$ having the same result; as an immediate corollary we obtained the injectivity of the set of (recursively) connected proof-nets. There, ``injective'' means that not only the experiment labels two different axioms with different atoms, but it labels also different copies of the same axiom by different atoms. Given some proof-net $R$, there is exactly one injective $k$-experiment of $R$ up to the names of the atoms.
\item In the present paper we show that, for any two PS's $R$ and $R'$, given the result $\alpha$ of a \emph{$k$-injective experiment} of $R$ for $k$ big enough, if $\alpha \in \sm{R'}$, then $R'$ is the same PS as $R$. The conditions on $k$ are given by the result of a $1$-experiment, so we show that two (well-chosen) points are enough to determine a PS. The expression ``$k$-injective'' means that, for any two different occurrences of boxes, the experiment never takes the same number of copies: it takes $k^{j_1}$ copies and $k^{j_2}$ copies with $j_1 \not= j_2$ (\emph{a contrario}, in \cite{injectcoh} and \cite{LPSinjectivity}, the experiments always take the same number of copies). As shown by the proof-net S of Figure~\ref{fig: no reconstruction of the experiment}, it is impossible to rebuild the experiment from its result, since there exist four different $4$-injective experiments $e_1, e_2, e_3$ and $e_4$ such that, for any $i \in \{ 1, 2, 3, 4 \}$, we have $e_i(p) = (\ast, \ast)$, $e_i(o_1) = [\ast, \ast, \ast, \ast]$ and $e_i(p') = [[\underbrace{\ast, \ldots, \ast}_{4^2}], \ldots, [\underbrace{\ast, \ldots, \ast}_{4^6}]]$. For instance $e_1$ takes $4$ copies of the box $o_1$ and $16$ copies of the box $o_2$, while $e_2$ takes $4$ copies of the box $o_1$ and $64$ copies of the box $o_2$.
\end{itemize}

\begin{definition}\label{defin: k-injective}
Let $k > 1$. A pseudo-experiment $e$ of a $\circ$-PS $R$ is said to be \emph{$k$-injective} if
\begin{itemize}
\item for any $o \in \boxes{R}$, for any $m \in e^\#(o)$, there exists $j > 0$ such that $m = k^j$;
\item for any $o \in \boxesatzero{R}$, for any $o' \in \boxes{B_R(o)}$, we have $(\forall e_1, e_2 \in e(o))$ $({e_1}^\#(o') \cap {e_2}^\#(o') \not= \emptyset \Rightarrow e_1 = e_2)$;
\item and, for any $o_1, o_2 \in \boxes{R}{}$, we have $({e}^\#(o_1) \cap {e}^\#(o_2) \not= \emptyset \Rightarrow o_1 = o_2)$.
\end{itemize}
An experiment $e$ is said to be \emph{$k$-injective} if $\overline{e}$ is $k$-injective.
\end{definition}

\begin{example}\label{example: pseudo}
There exists a $10$-injective pseudo-experiment $f$ of the proof-net $R$ of Figure~\ref{fig: new_example} such that
$f^\#(o_1) = \{ 10^{223} \}$, 
$f^\#(o_2) = \{ 10 \}$, 
$f^\#(o_3) = \{ 10^{224} \}$, 
$f^\#(o_4) = \{ 100 \}$, 
$f^\#((o_2, o))$ $=$ $\{ 10^3, \ldots, 10^{12} \}$, 
$f^\#((o_2, o'))$ $=$ $\{ 10^{13}, \ldots, 10^{22} \}$, 
$f^\#((o_4, o))$ $=$ $\{ 10^{23}, \ldots, 10^{122} \}$ 
and $f^\#((o_4, o'))$ $=$ $\{ 10^{123}, \ldots, 10^{222} \}$.
\end{example}

In \cite{LPSinjectivity}, the interest for \emph{injective} experiments came from the remark that the result of an \emph{injective} experiment of a \emph{cut-free} proof-net can be easily identified with a differential net of its Taylor expansion in a sum of differential nets \cite{EhrhardRegnier:DiffNets} (it is essentialy the content of our Lemma~\ref{lem: taylor expansion}). Thus any proof using injective experiments can be straightforwardly expressed in terms of differential nets and conversely. Since this identification is trivial, besides the idea of considering injective experiments instead of obsessional experiments, the use of the terminology of differential nets does not bring any new insight\footnote{For proof-nets with cuts, the situation is completely different: the great novelty of differential nets is that differential nets have a cut-elimination; the differential nets appearing in the Taylor expansion of a proof-net with cuts have cuts, while the semantics does not see these cuts. But the proofs of the injectivity only consider cut-free proof-nets.}, it just superficially changes the presentation.That is why we decided in \cite{LPSinjectivity} to avoid introducing explicitely differential nets. In the present paper, we made the opposite choice for the following reason: 
the algorithm leading from the result of a $k$-injective experiment of $R$ to the entire rebuilding of $R$ is done in several steps: in the intermediate steps, we obtain a partial rebuilding where some boxes have been recovered but not all of them; a convenient way to represent this information is the use of ``differential nets with boxes'' (called ``differential PS's'' in the present paper). Now, the differential net representing the result and the proof-net $R$ are both instances of the more general notion of ``differential nets with boxes''.

The rebuilding of the proof-net $R$ is done in $d$ steps, where $d$ is the depth of $R$. We first rebuild the occurrences of the boxes of depth $0$ 
(the deepest ones) and next we rebuild the occurrences of the boxes of depth $1$ 
and so on... This can be formalized using differential nets (with boxes) as follows: if $e$ is an injective experiment of $R$, then $\taylor{\overline{e}}{i}$ is the differential net corresponding with $e$ in which only boxes of depth $\geq i$ are expanded,\footnote{Boxes \emph{of} depth $\geq i$ are boxes whose content is a proof-net of depth $\geq i$; the reader should not confuse \emph{boxes of depth $\geq i$} with \emph{boxes at depth $\geq i$}.} so $\taylor{\overline{e}}{0}$ is (essentially) the same as the result of the experiment and $\taylor{\overline{e}}{d} = R$; the first step of the algorithm builds $\taylor{\overline{e}}{1}$ from $\taylor{\overline{e}}{0}$, the second step builds $\taylor{\overline{e}}{2}$ from $\taylor{\overline{e}}{1}$, and so on... We thus reduced the problem of the injectivity to the problem of rebuilding $\taylor{\overline{e}}{i+1}$ from $\taylor{\overline{e}}{i}$ for any $k$-injective experiment $e$ ($k$ big enough).

\begin{definition}\label{defin: Taylor}
Let $R$ be a $\circ$-PS of depth $d$. Let $e$ be a pseudo-experiment of $R$. Let $i \in \Nat$. 
We define, by induction on $d$, a differential $\circ$-PS $\taylor{e}{i} = $ $((\mathcal{W}_{e, i},$ $\mathcal{P}_{e, i}, $ $l_{e, i}, $ $t_{e, i}, $ $\mathcal{L}_{e, i}),$ $\mathcal{B}_{e, i},$ $B_{e, i},$ $b_{e, i})$ of depth $\min \{ i, d \}$  s.t. 
$\conclusions{R} = \conclusions{\taylor{e}{i}}$ 
and $(\forall p \in \conclusions{R}) \labelofcell{\groundof{R}}(p) $ $=$ $\labelofcell{\groundof{\taylor{e}{i}}}(p)$ as follows: we set $\mathcal{P}_{e, i}^{\bullet} =$ $\bigcup_{o_1 \in \boxesatzerogeq{R}{i}} \bigcup_{e_1 \in e(o_1)}$ $\left\lbrace (o_1, e_1):p ; p \in \mathcal{P}_{e_1, i} \setminus \conclusionscirc{B_R(o_1)} \right\rbrace$;
\begin{itemize}
\item $\mathcal{W}_{e, i} =$ $\wiresatzero{R} \cup \mathcal{P}_{e, i}^{\bullet}$ and $\mathcal{P}_{e, i} = \portsatzero{R}$ $\cup \mathcal{P}_{e, i}^{\bullet}$;
\item $l_{e, i}(p) = \left\lbrace \begin{array}{ll} \labelofcell{\groundof{R}}(p) & \textit{if $p \in \portsatzero{R}$;}\\ l_{e_1, i}(p') & \textit{if $p = (o_1, e_1) : p'$ with $o_1 \in \boxesatzerogeq{R}{i}$;} \end{array} \right.$
\item $t_{e, i}$ is the extension of $\target{\groundof{R}}$ that associates with each $(o_1, e_1) : w' \in \mathcal{W}_{e, i}$, where $o_1 \in \boxesatzerogeq{R}{i}$, the port $\left\lbrace \begin{array}{ll}  
(o_1, e_1) : t_{e_1, i}(w') & \textit{if $w' \in \mathcal{W}_{e_1, i}$ and $t_{e_1, i}(w') \notin \conclusionscirc{B_R(o_1)}$;}\\
b_R(o_1)(t_{e_1, i}(w')) & \textit{if $w' \in \mathcal{W}_{e_1, i}$ and $t_{e_1, i}(w') \in \conclusionscirc{B_R(o_1)}$;} \\
b_R(o_1)(w') & \textit{if $w' \in \conclusionsnotcirc{B_R(o_1)}$;}
\end{array}\right.$
\item $\mathcal{L}_{e, i} = \leftwires{\groundof{R}} \cup \bigcup_{o_1 \in \boxesatzerogeq{R}{i}} \bigcup_{e_1 \in e(o_1)} \{ (o_1, e_1):p ; p \in \mathcal{L}_{e_1, i} \}$
\item $\mathcal{B}_{e, i} = \boxesatzerosmaller{R}{i} \cup \bigcup_{o_1 \in \boxesatzerogeq{R}{i}} \bigcup_{e_1 \in e(o_1)} \{ $ \mbox{$(o_1, e_1):o'$ ;} $o' \in \mathcal{B}_{e_1, i} \}$
\item $B_{e, i}(o) = \left\lbrace \begin{array}{ll} B_R(o) & \textit{if $o \in \boxesatzerosmaller{R}{i}$;}\\ B_{e_1, i}(o') & \textit{if \mbox{$o = (o_1, e_1):o'$} with $o_1 \in \boxesatzerogeq{R}{i}$;} \end{array} \right.$
\item $b_{e, i}$ is the extension of $\restriction{b_R}{\boxesatzerosmaller{R}{i}}$ that associates with each $(o_1, e_1):o' \in\mathcal{B}_{e, i}$, where $o_1 \in \boxesatzerogeq{R}{i}$, the function $p \mapsto \left\lbrace 
\begin{array}{ll} 
(o_1, e_1):b_{e_1, i}(o')(p) & \textit{if $b_{e_1, i}(o')(p) \notin \conclusionscirc{B_R(o_1)}$;}\\
b_{e_1, i}(o')(p) & \textit{if $b_{e_1, i}(o')(p) \in \conclusionscirc{B_R(o_1)}$.} \end{array} \right.$
\end{itemize}
\end{definition}

\begin{figure*}
\scalebox{\scalefactfive}{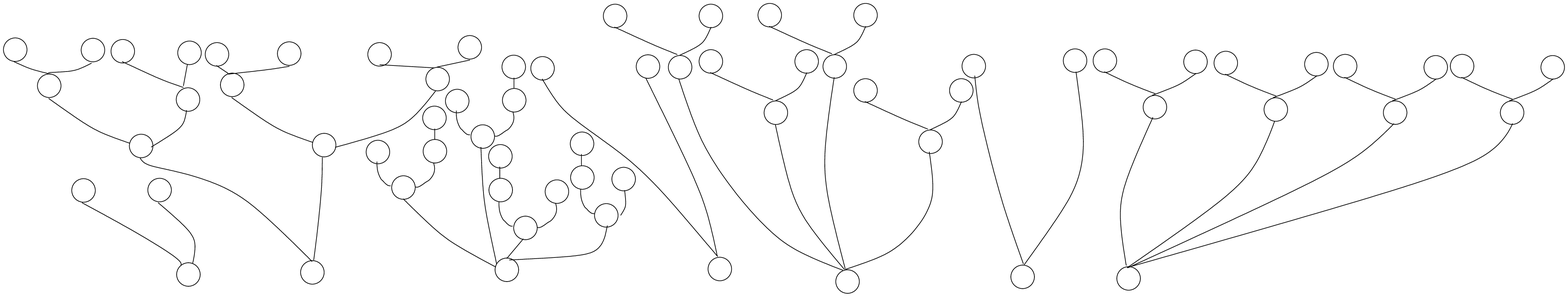}
\caption{$\mathcal{T}(f)[0]$}
\label{fig: T(f)[0]}
\end{figure*}

\begin{figure*}
\scalebox{\scalefacteight}{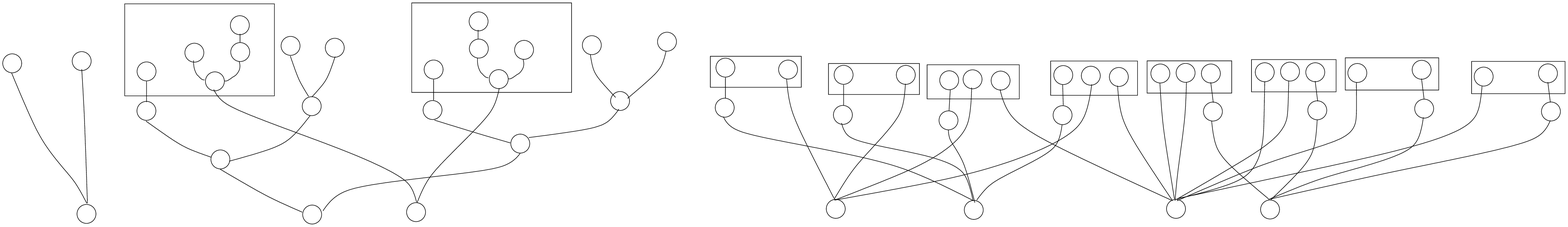}
\caption{$\mathcal{T}(f)[1]$}
\label{fig: T(f)[1]}
\end{figure*}

\begin{example}
If $f$ is a pseudo-experiment of the proof-net $R$ of Figure~\ref{fig: new_example} with $f^\#$ like in Example~\ref{example: pseudo}, then Figures~\ref{fig: T(f)[0]} and~\ref{fig: T(f)[1]} represent respectively $\taylor{f}{0}$ and $\taylor{f}{1}$.
\end{example}


The injectivity of the relational semantics for differential PS's of depth $0$ is trivial (one can proceed by induction on the cardinality of the set of ports). Since $\sm{\taylor{\overline{e}}{0}} = \{ \restriction{e}{\conclusions{R}} \}$, one can easily identify the result $\restriction{e}{\conclusions{R}}$ of an experiment $e$ with the differential net $\taylor{\overline{e}}{0}$:

\begin{lem}\label{lem: taylor expansion}
Let $R$ and $R'$ be two $\circ$-PS's such that $\conclusions{R} = \conclusions{R'}$. Let $e$ be an experiment of $R$ and let $e'$ be an experiment of $R'$ such that $\restriction{e}{\conclusions{R}} = \restriction{e'}{\conclusions{R'}}$. Then $\taylor{\overline{e}}{0} \equiv \taylor{\overline{e'}}{0}$.
\end{lem}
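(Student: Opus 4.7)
The plan is to reduce Lemma~\ref{lem: taylor expansion} to the two observations given just above its statement: that the relational semantics of a differential $\circ$-PS of depth $0$ is a singleton that can be read off directly from its structure, and that the injectivity of the relational model for differential PS's of depth $0$ is trivial, proved by induction on the cardinality of the set of ports.

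First, I would unwind Definition~\ref{defin: Taylor} at $i = 0$ and check that both $\taylor{\overline{e}}{0}$ and $\taylor{\overline{e'}}{0}$ are differential $\circ$-PS's of depth $0$: every box of $R$ at every depth gets replicated once per element of the corresponding set $\overline{e}(o)$, the replicas being tagged by pairs $(o_1, e_1)$, and the function $b_R$ re-routes each non-$\circ$ conclusion of each replica to the exponential port that originally contracted that box. Since $\conclusions{R} = \conclusions{\taylor{\overline{e}}{0}}$ by construction, and likewise for $R'$, the two nets share the same conclusions.

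Second, I would prove by induction on $\depthof{R}$ that $\sm{\taylor{\overline{e}}{0}} = \{ \restriction{\ports{e}}{\conclusions{R}} \}$. The unique experiment of $\taylor{\overline{e}}{0}$ is forced at each multiplicative and unit port, and at each exponential port the multiset-sum clause of Definition~\ref{defin: experiment} reproduces summand by summand the definition of $\ports{e}(p)$: a replica port $(o_1, e_1){:}q$ landing on $p$ contributes exactly $\ports{e_1}(q)$, in one-to-one correspondence (via Definition~\ref{defin: experiment induces pseudo-experiment}) with the multiplicities recorded in $\boxes{e}(o_1)$. The analogous identity holds for $e'$ and $R'$, so together with the hypothesis $\restriction{\ports{e}}{\conclusions{R}} = \restriction{\ports{e'}}{\conclusions{R'}}$ we obtain $\sm{\taylor{\overline{e}}{0}} = \sm{\taylor{\overline{e'}}{0}}$.

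Applying the trivial depth-$0$ injectivity to two depth-$0$ differential $\circ$-PS's with the same conclusions and the same semantics then yields an isomorphism fixing the conclusions pointwise: by induction on the number of ports, each conclusion together with the cell it targets is pinned down by its semantic value, so one can peel cells off one at a time. This gives $\taylor{\overline{e}}{0} \equiv \taylor{\overline{e'}}{0}$. The main obstacle will be the bookkeeping of the second step: one has to check that the nested tree of replicas recorded by $\overline{e}$ reproduces exactly the multiset structure of $\ports{e}$, treating $\circ$-conclusions transparently (they are re-routed to the contracting port rather than contributing a separate wire) while ensuring non-$\circ$ conclusions contribute one wire per replica. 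Once this correspondence is pinned down, the remaining two steps are essentially routine.
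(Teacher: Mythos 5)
Your proposal is correct and follows exactly the route the paper itself indicates in the paragraph preceding the lemma: establish $\sm{\taylor{\overline{e}}{0}} = \{ \restriction{\ports{e}}{\conclusions{R}} \}$ (so the result of the experiment can be identified with the depth-$0$ differential net), then invoke the trivial injectivity of the relational semantics for depth-$0$ differential PS's, proved by induction on the number of ports. The paper leaves both steps as remarks, so your additional bookkeeping on the exponential-port sums and the replica tagging is a faithful expansion of the intended argument rather than a deviation from it.
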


Now, the following fact shows that 
if we are able to recover $\taylor{\overline{e}}{\depthof{R}}$ from $\taylor{\overline{e}}{0}$, then we are done.

\begin{fact}\label{fact: Taylor[i] with i large}
Let $R$ be a $\circ$-PS. Let $e$ be a pseudo-experiment of $R$. Then $\taylor{e}{\depthof{R}} = R$.
\end{fact}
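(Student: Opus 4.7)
The plan is to prove this fact by simply unfolding Definition~\ref{defin: Taylor} at $i = \depthof{R}$ and observing that every ``expansion'' contribution vanishes. The crucial observation is that $\boxesatzerogeq{R}{\depthof{R}} = \emptyset$: indeed, the depth of $R$ is (by the standard recursive convention underlying Definition~\ref{defin: differential PS}, where every box content of an object of depth $d$ has depth $< d$) strictly greater than the depth of each of its box contents, so every $o \in \boxesatzero{R}$ satisfies $\depthof{B_R(o)} < \depthof{R}$, i.e.\ $o \in \boxesatzerosmaller{R}{\depthof{R}}$.

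With this observation in hand, I would set $d = \depthof{R}$ and read off each field of $\taylor{e}{d}$ in turn. Because $\boxesatzerogeq{R}{d} = \emptyset$, the auxiliary set $\mathcal{P}_{e, d}^{\bullet}$ is empty, and so the ``$\cup\,\mathcal{P}_{e, d}^{\bullet}$'' and ``$\cup \bigcup_{o_1 \in \boxesatzerogeq{R}{d}} \cdots$'' clauses in the definitions of $\mathcal{W}_{e, d}$, $\mathcal{P}_{e, d}$, $\mathcal{L}_{e, d}$, and $\mathcal{B}_{e, d}$ contribute nothing; likewise, the extensions in the definitions of $t_{e, d}$, $B_{e, d}$, $b_{e, d}$ collapse to the base functions $\target{\groundof{R}}$, $B_R$, $b_R$. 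Reading field by field then yields $\mathcal{W}_{e, d} = \wiresatzero{R}$, $\mathcal{P}_{e, d} = \portsatzero{R}$, $l_{e, d} = \labelofcell{\groundof{R}}$, $t_{e, d} = \target{\groundof{R}}$, $\mathcal{L}_{e, d} = \leftwires{\groundof{R}}$, $\mathcal{B}_{e, d} = \boxesatzerosmaller{R}{d} = \boxesatzero{R}$, $B_{e, d} = B_R$, and $b_{e, d} = b_R$, which is literally the data defining $R$.

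There is no real obstacle here: since the recursive call $\taylor{e_1}{i}$ in Definition~\ref{defin: Taylor} is only invoked for $o_1 \in \boxesatzerogeq{R}{i}$, the entire expansion machinery is dormant at $i = d$, and $B_{e, d}(o)$ is taken verbatim as $B_R(o)$ rather than as a recursively transformed copy. If one wishes to present the argument in the same style as Definition~\ref{defin: Taylor} itself, the natural framing is a short induction on $d$: the base case $d = 0$ has $\boxesatzero{R} = \emptyset$, which makes every clause trivially empty, and the inductive step is exactly the computation above applied to $i = d$, with the inductive hypothesis never actually invoked because no recursive call $\taylor{\cdot}{d}$ on a strictly smaller $\circ$-PS is triggered.
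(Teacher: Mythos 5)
Your proposal is correct and takes essentially the same approach as the paper, whose (elided) proof is precisely the one-line observation that $\boxesatzerosmaller{R}{\depthof{R}} = \boxesatzero{R}$ and $\boxesatzerogeq{R}{\depthof{R}} = \emptyset$. Your field-by-field unfolding of Definition~\ref{defin: Taylor} just makes explicit what that observation immediately yields.
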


If $e$ is a $k$-injective experiment of $R$, then, for any $i \in \Nat$, there exists a bijection $\cod_{e, i} : \bigcup_{o \in \boxesgeq{R}{i}} \{ \log_k(m) ; m \in e^\#(o) \} \simeq \portsatzerooftype{\cod}{\taylor{e}{i}} \setminus \boxesatzero{\taylor{e}{i}}$ such that, for any $j \in \dom{\cod_{e, i}}$, we have  $(\arity{\taylor{e}{i}} \circ !_{e, i})(j) = k^j$. In Subsection~\ref{subsection: outline}, we will show how to recover $\bigcup_{o \in \boxesgeq{R}{i}} \{ \log_k(m) ; $ $m \in e^\#(o) \}$ from $\taylor{e}{0}$. There are two kinds of boxes of $\taylor{e}{i+1}$ at depth $0$: the ``new'' boxes of depth $i$ and the boxes of depth $< i$, which are the ``old'' boxes (i.e. that already were in $\taylor{e}{i}$) that do not go inside some ``new'' box:

\begin{fact}\label{fact : boxes of taylor{e}{i+1} of depth < i}
Let $R$ be a $\circ$-PS. Let $e$ be a pseudo-experiment of $R$. Let $i \in \Nat$. Then we have 
\begin{itemize}
\item $\boxesatzerosmaller{\taylor{e}{i+1}}{i} = \boxesatzero{\taylor{e}{i}} \cap \portsatzerooftype{\cod}{\taylor{e}{i+1}}$;
\item $\restriction{B_{\taylor{e}{i+1}}}{\boxesatzerosmaller{\taylor{e}{i+1}}{i}} = \restriction{B_{\taylor{e}{i}}}{\boxesatzerosmaller{\taylor{e}{i+1}}{i}}$; 
\item and $\restriction{b_{\taylor{e}{i+1}}}{\boxesatzerosmaller{\taylor{e}{i+1}}{i}} = \restriction{b_{\taylor{e}{i}}}{\boxesatzerosmaller{\taylor{e}{i+1}}{i}}$.
\end{itemize}
\end{fact}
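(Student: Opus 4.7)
The plan is to proceed by induction on $\depthof{R}$. The base case $\depthof{R} = 0$ is immediate: $\boxesatzero{R} = \emptyset$, hence $\taylor{e}{i} = R = \taylor{e}{i+1}$ and the three equalities hold trivially between empty sets. More generally, whenever $i \geq \depthof{R}$ one has $\boxesatzerogeq{R}{i+1} \subseteq \boxesatzerogeq{R}{i} = \emptyset$, so $\taylor{e}{i} = R = \taylor{e}{i+1}$ (either by unfolding the definition or via Fact~\ref{fact: Taylor[i] with i large} applied at indices $i$ and $i+1$), and the claims hold trivially. I may therefore assume $i < \depthof{R}$ for the inductive step.

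The first step is to unfold $\mathcal{B}_{e, i+1}$ using Definition~\ref{defin: Taylor} and to isolate those boxes with content-depth $< i$. The boxes of $\taylor{e}{i+1}$ at depth $0$ come from two families: the ``inert'' boxes $o \in \boxesatzerosmaller{R}{i+1}$, whose attached content is $B_R(o)$; and the ``expanded'' tokens $(o_1, e_1){:}o'$ with $o_1 \in \boxesatzerogeq{R}{i+1}$, whose attached content is $B_{\taylor{e_1}{i+1}}(o')$. Filtering by content-depth $<i$ gives, on the first family, exactly $\boxesatzerosmaller{R}{i}$, and on the second family, exactly the tokens $(o_1, e_1){:}o'$ with $o_1 \in \boxesatzerogeq{R}{i+1}$ and $o' \in \boxesatzerosmaller{\taylor{e_1}{i+1}}{i}$.

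The second step is to compare this with $\boxesatzero{\taylor{e}{i}} \cap \portsatzerooftype{\cod}{\taylor{e}{i+1}}$, via a three-way case analysis on $o_1 \in \boxesatzero{R}$. In case (a), $o_1 \in \boxesatzerosmaller{R}{i}$: here $o_1$ itself lies in both $\boxesatzero{\taylor{e}{i}}$ and in $\portsatzerooftype{\cod}{R} \subseteq \portsatzerooftype{\cod}{\taylor{e}{i+1}}$, and contributes $\boxesatzerosmaller{R}{i}$ to the right-hand side. In case (b), $\depthof{B_R(o_1)} = i$: the tokens $(o_1, e_1){:}o'$ do live in $\boxesatzero{\taylor{e}{i}}$, but since $o_1 \notin \boxesatzerogeq{R}{i+1}$ they are not ports of $\taylor{e}{i+1}$ at all, so the intersection with $\portsatzerooftype{\cod}{\taylor{e}{i+1}}$ removes them; moreover $o_1$ itself is not in $\boxesatzero{\taylor{e}{i}}$ (it was expanded at step $i$), so this case contributes nothing. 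In case (c), $o_1 \in \boxesatzerogeq{R}{i+1}$: the label clause of Definition~\ref{defin: Taylor} gives $l_{e, i+1}((o_1, e_1){:}o') = l_{e_1, i+1}(o')$, so $(o_1, e_1){:}o' \in \portsatzerooftype{\cod}{\taylor{e}{i+1}}$ is equivalent to $o' \in \portsatzerooftype{\cod}{\taylor{e_1}{i+1}}$, and the inductive hypothesis applied to $B_R(o_1)$ with pseudo-experiment $e_1$ gives $\boxesatzero{\taylor{e_1}{i}} \cap \portsatzerooftype{\cod}{\taylor{e_1}{i+1}} = \boxesatzerosmaller{\taylor{e_1}{i+1}}{i}$, matching the contribution of case (c) on the left-hand side. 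This proves the first equality; the second and third follow by plugging the same three-way split into the definitions of $B_{e, i}, B_{e, i+1}$ and of $b_{e, i}, b_{e, i+1}$ in Definition~\ref{defin: Taylor} and invoking the inductive hypothesis again on the expanded tokens.

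The main obstacle is purely notational: one must keep the three types of $o_1$ separate, and in particular never confuse ``$o_1$ is a $\cod$-port of $\taylor{e}{i}$'' with ``$o_1$ is a box of $\taylor{e}{i}$''. For $o_1 \in \boxesatzerogeq{R}{i}$, the main door $o_1$ persists as a $\cod$-port (since $\portsatzero{R} \subseteq \portsatzero{\taylor{e}{i}}$) but is deliberately removed from $\mathcal{B}_{e, i}$ by Definition~\ref{defin: Taylor}; conversely the tokens $(o_1, e_1){:}o'$ with $\depthof{B_R(o_1)} = i$ only exist in $\taylor{e}{i}$ and vanish in $\taylor{e}{i+1}$. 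Once this asymmetric behavior between ports and boxes is tracked faithfully, all three identities collapse to routine rewriting plus one application of the inductive hypothesis.
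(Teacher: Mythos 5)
Your proof is correct, and it is exactly the definitional unfolding the paper has in mind: the statement is asserted as a \emph{Fact} with no written proof, the intended justification being precisely the split of $\mathcal{B}_{e,i+1}$ into the inert boxes $\boxesatzerosmaller{R}{i}$ and the expanded tokens $(o_1,e_1){:}o'$, handled by induction on $\depthof{R}$. Your three-way case analysis on $o_1$ (in particular noting that for $\depthof{B_R(o_1)}=i$ the tokens $(o_1,e_1){:}o'$ are boxes of $\taylor{e}{i}$ but not ports of $\taylor{e}{i+1}$, while $o_1$ itself is a $\cod$-port of both but a box only of $\taylor{e}{i+1}$) is the right bookkeeping and matches what the author leaves implicit.
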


The challenge is the rebuilding of the ``new'' boxes at depth $0$ of depth $i$.

\section{From $\taylor{e}{i}$ to $\taylor{e}{i+1}$}\label{section: one step}

\subsection{The outline of the boxes}\label{subsection: outline}

In this subsection we first show how to recover the set $\bigcup_{o \in \boxesgeq{R}{i}} \{ \log_k(m) ; m \in e^\#(o) \}$ and, therefore, the set $\portsatzerooftype{\cod}{\taylor{e}{i}} \setminus \boxesatzero{\taylor{e}{i}}$ (Lemma~\ref{lem: M_i}). Next, we show how to determine, from $\taylor{e}{i}$, the set $\exactboxesatzero{\taylor{e}{i+1}}{i}$ of ``new'' boxes and, for any such ``new'' box $o \in \exactboxesatzero{\taylor{e}{i+1}}{i}$, the set $\im{b_{\taylor{e}{i+1}}(o)}$ of exponential ports that are immediately below (Proposition~\ref{prop: critical ports below new boxes}). In particular, we have $\exactboxesatzero{\taylor{e}{i+1}}{i} = \im{\restriction{!_{e, i}}{\mathcal{N}_i(e)}}$, where the set $\mathcal{N}_i(e) \subseteq \Nat$ is defined from the set $\mathcal{M}_0(e)$ of the numbers of copies of boxes taken by the pseudo--experiment $e$:



\begin{definition}\label{defin: the algorithm (b)}
Let $R$ be a differential $\circ$-PS. Let $k > 1$. Let $e$ be a $k$-injective pseudo-experiment of $R$. For any $i \in \Nat$, we define, by induction on $i$, $\mathcal{M}_i(e) \subseteq \Nat \setminus \{ 0 \}$ and $(m_{i, j}(e))_{j \in \Nat} \in \{ 0, \ldots, k-1 \}^\Nat$ as follows. 
We set $\mathcal{M}_0(e) = \bigcup_{o \in \boxes{R}{}} \{ j \in \Nat; k^j \in e^\#(o) \}$ and we write $\Card{\mathcal{M}_i(e)}$ in base $k$: $\Card{\mathcal{M}_i(e)} = \sum_{j \in \Nat} m_{i, j}(e) \cdot k^j$; we set $\mathcal{M}_{i+1}(e) =$ $\{ j > 0 ; m_{i, j}(e) \not= 0 \}$.

For any $i \in \Nat$, we set $\mathcal{N}_i(e) =  \mathcal{M}_i(e) \setminus \mathcal{M}_{i+1}(e)$.
\end{definition}

Notice that all the sets $\mathcal{M}_i(e)$ and $\mathcal{N}_i(e)$ can be computed from $\taylor{e}{0}$, since
 we have $\mathcal{M}_0(e) =$ $\{ \arity{\taylor{e}{0}}(p) ;$ $p \in \portsatzerooftype{\cod}{\taylor{e}{0}} \}$.

\begin{example}\label{example: N_i(e)}
If $f$ is a $10$-injective pseudo-experiment as in Example~\ref{example: pseudo}, then $\mathcal{M}_0(f) = \{ 1, \ldots, 224 \}$. We have $\Card{\mathcal{M}_0(f)} = 4 + 2 \cdot 10^1 + 2 \cdot 10^2$, hence $\mathcal{M}_1(f) = \{ 1, 2 \}$ and $\mathcal{N}_0(f) = \{ 3, \ldots, 224 \}$. We have $\Card{\mathcal{M}_1(f)} = 2$, hence $\mathcal{M}_2(f) = \emptyset$ and $\mathcal{N}_1(f) = \{ 1, 2 \}$.  
\end{example}

The following lemma shows that, for any $k$-injective pseudo-experiment $e$ of $R$, for any $i \in \Nat$, the function $\cod_{e, i}$ is actually a bijection $\mathcal{M}_i(e) \to \portsatzerooftype{\cod}{\taylor{e}{i}} \setminus \boxesatzero{\taylor{e}{i}}$ such that, for any $j \in \mathcal{M}_i(e)$,  we have  $(\arity{\taylor{e}{i}} \circ \cod_{\taylor{e}{i}})(j) = k^j$. 

\begin{lem}\label{lem: M_i}
Let $R$ be a $\circ$-PS. 
Let $k > \Card{\boxes{R}{}}$. 
For any $k$-injective pseudo-experiment $e$ of $R$, for any $i \in \Nat$, we have $\mathcal{M}_i(e) = \bigcup_{o \in \boxesgeq{R}{i}} \{ j \in \Nat ; k^j \in e^\#(o) \}$, hence $\mathcal{N}_i(e) = \bigcup_{o \in \exactboxes{R}{i}} \{ j \in \Nat ; k^j \in e^\#(o) \}$.
\end{lem}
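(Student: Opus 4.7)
I will prove the main equality $\mathcal{M}_i(e) = \bigcup_{o \in \boxesgeq{R}{i}} \{j \in \Nat : k^j \in e^\#(o)\}$ by induction on $i$; the statement about $\mathcal{N}_i(e)$ then follows at once from the partition $\boxesgeq{R}{i} = \exactboxes{R}{i} \sqcup \boxesgeq{R}{i+1}$ together with the fact that the sets $J(o) := \{j \in \Nat : k^j \in e^\#(o)\}$ are pairwise disjoint across distinct $o \in \boxes{R}$ (third clause of $k$-injectivity). The base case $i=0$ holds by definition of $\mathcal{M}_0(e)$ since $\boxesgeq{R}{0} = \boxes{R}$.

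The combinatorial core of the induction step is a recursive formula for the cardinality of $J(o)$: for a non-root box $o = o_1 : \ldots : o_m \in \boxes{R}$ with $m \geq 2$ and parent $o' := o_1 : \ldots : o_{m-1}$,
\begin{equation*}
|J(o)| \;=\; \sum_{j \in J(o')} k^j, \qquad |J(o_1)| \;=\; 1 \text{ for } o_1 \in \boxesatzero{R}.
\end{equation*}
This is obtained by induction on $m$ from the identity $e^\#(o_1 : o') = \bigsqcup_{e_1 \in e(o_1)} {e_1}^\#(o')$, which is a \emph{disjoint} union by the second clause of $k$-injectivity. In particular, $|J(o)|$ depends only on the parent $o'$, not on the choice of the last box $o_m$.

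Assuming inductively $\mathcal{M}_i(e) = \bigsqcup_{o \in \boxesgeq{R}{i}} J(o)$, I group the non-root boxes of $\boxesgeq{R}{i}$ by their parent (which necessarily lies in $\boxesgeq{R}{i+1}$, since passing to the parent strictly increases content-depth) and obtain
\begin{equation*}
|\mathcal{M}_i(e)| \;=\; a_i \;+\; \sum_{o' \in \boxesgeq{R}{i+1}} c_i(o') \cdot \sum_{j \in J(o')} k^j,
\end{equation*}
where $a_i := \Card{\boxesgeq{R}{i} \cap \boxesatzero{R}}$ and $c_i(o')$ counts the children of $o'$ that lie in $\boxesgeq{R}{i}$. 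Every $o' \in \boxesgeq{R}{i+1}$ admits at least one such child (so $c_i(o') \geq 1$). The hypothesis $k > \Card{\boxes{R}}$ forces $a_i$ and each $c_i(o')$ to lie in $\{0, \ldots, k-1\}$, while the $J(o')$'s are pairwise disjoint; the displayed right-hand side is therefore the canonical base-$k$ expansion of $|\mathcal{M}_i(e)|$. Reading off the nonzero digits at positions $j > 0$ then gives exactly $\bigsqcup_{o' \in \boxesgeq{R}{i+1}} J(o') = \mathcal{M}_{i+1}(e)$, closing the induction.

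The delicate step is justifying that the displayed sum really is the \emph{canonical} base-$k$ expansion. The bound $k > \Card{\boxes{R}}$ keeps every digit below $k$, so no carries occur between digit positions; and the global disjointness of the $J(o')$'s (from $k$-injectivity) prevents two boxes from contributing to the same digit position. These two properties together are exactly what make the base-$k$ expansion of $|\mathcal{M}_i(e)|$ encode faithfully the distribution of exponents across boxes of content-depth $\geq i+1$.
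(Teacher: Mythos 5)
Your proof is correct, and it takes a genuinely different route from the paper's. The paper proceeds by recursion on the nesting structure: Fact~\ref{fact: M_0 pairwise disjoint} and Lemmas~\ref{lem: M_1} and~\ref{lemma: M_j(e)} establish, by a nested induction on $i$ and on $\depthof{R}$ that descends into the sub-pseudo-experiments $e' \in e(o)$, the identities $m_{i,0}(e) = \Card{\boxesatzerogeq{R}{i}}$ and $\mathcal{M}_{i+1}(e) = \bigcup_{o \in \boxesatzerogeq{R}{i+1}} \{ j ; e^\#(o) = \{ k^j \} \} \cup \bigcup_{o \in \boxesatzerogeq{R}{i+2}} \bigcup_{e' \in e(o)} \mathcal{M}_{i+1}(e')$, and only then unfolds this recursion by a further induction on $\depthof{R}$. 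You instead run a single induction on $i$, working globally with the sets $J(o)$ for all $o \in \boxes{R}$ at once; your key extra ingredient is the parent--child formula $\Card{J(o)} = \sum_{j \in J(o')} k^j$, which lets you read the right-hand side of $\Card{\mathcal{M}_i(e)} = a_i + \sum_{o'} c_i(o') \cdot \sum_{j \in J(o')} k^j$ directly as the canonical base-$k$ expansion (all digits below $k$ because $k > \Card{\boxes{R}}$, no collisions between digit positions because the $J(o')$ are pairwise disjoint and avoid position $0$). This is shorter and dispenses with the paper's auxiliary lemmas, at the cost of the bookkeeping, which you carry out correctly, needed for the parent--child formula. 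One caveat applies equally to both arguments: obtaining the disjoint decomposition $e^\#(o_1{:}o'') = \bigcup_{e_1 \in e(o_1)} e_1^\#(o'')$ at every level of your induction on $m$ requires each $e_1 \in e(o_1)$ to be itself $k$-injective as a pseudo-experiment of $B_R(o_1)$, whereas the second clause of Definition~\ref{defin: k-injective} is stated only at depth $0$; this hereditary reading must be taken as the intended one, and the paper's own proof relies on it in exactly the same way when it applies its lemmas inductively to the $e' \in e(o)$.
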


\begin{example}~\label{example: N_i(e) - 2}
(Continuation of Example~\ref{example: N_i(e)}) We thus have $\mathcal{M}_1(f) = \{ 1, 2 \}$ and $\portsatzerooftype{\cod}{\taylor{f}{i}} \setminus \boxesatzero{\taylor{f}{1}} = \{ o_2, o_4 \}$ with $\arity{\taylor{f}{1}}(o_2)$ $=$ $10^1$ and $\arity{\taylor{f}{1}}(o_4)$ $=$ $10^2$ (see Figure~\ref{fig: T(f)[1]}).
\end{example}

The set $\criticalports{S}{k}{\mathcal{N}_i(e)}$ of ``critical ports'' is a set of exponential ports that will play a crucial role in our algorithm.


\begin{definition}\label{definition: critical}
Let $S$ be a differential $\circ$-PS. Let $ k > 1$. For any $p \in \portsatzero{S}$, we define the sequence $(m_{k, j}(S)(p))_{j \in \Nat} \in \{ 0, \ldots, k-1 \}^{\Nat}$ as follows: $\arity{S}(p) = \sum_{j \in \Nat} m_{k, j}(S)(p) \cdot k^j$. For any $j \in \Nat$, we set $\criticalports{S}{k}{j} = $ $\{ p \in \portsatzero{S} ; m_{k, j}(S)(p) \not= 0 \} \cap \exponentialports{\groundof{S}}$ and, for any $J \subseteq \Nat$, we set $\criticalports{S}{k}{J} =$  $\bigcup_{j \in J} \criticalports{S}{k}{j}$. 
\end{definition}

\begin{example}
We have $\criticalports{S}{10}{1} = \{ p_1, p_4, p_5, p_6, p_7, o_2 \}$ and $\criticalports{S}{10}{2} = \{ p_4, p_5, p_6, p_7, o_4 \}$, where $S$ is the PS of Figure~\ref{fig: T(f)[1]}. So we have $\criticalports{S}{10}{\{ 1, 2\}} = \{ p_1, p_4, p_5, p_6, p_7, o_2, o_4 \}$.
\end{example}

Critical ports are defined by their arities. We show that they are exponential ports that are immediately below the ``new'' boxes:

\begin{prop}\label{prop: critical ports below new boxes}
Let $R$ be a $\circ$-PS. 
Let $k > \Card{\boxes{R}{}}, \cosize{R}$. Let $e$ be a $k$-injective pseudo-experiment of $R$ and let $i \in \Nat$. 
Then we have $\exactboxesatzero{\taylor{e}{i+1}}{i} = \im{\restriction{!_{e, i}}{\mathcal{N}_{i, e}}}$. Furthermore, for any $j \in \mathcal{N}_i(e)$, we have $\im{b_{\taylor{e}{i+1}}(!_{e, i}(j))} = \criticalports{\taylor{e}{i}}{k}{j}$ and, if $!_{e, i}(j) \notin \exactboxesatzero{R}{i}$, then there exist $o_1 \in \boxesatzerogeq{R}{i+1}$ and $e_1 \in e(o_1)$ such that $j \in \mathcal{N}_i(e_1)$. 
In particular, we have $\criticalports{\taylor{e}{i}}{k}{\mathcal{N}_i(e)} \subseteq \exponentialportsatzero{\taylor{e}{i+1}}$.
\end{prop}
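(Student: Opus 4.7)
The plan is to proceed by induction on $\depthof{R}$, with the crucial computational input being a base-$k$ decomposition of arities in $\taylor{e}{i}$. Lemma~\ref{lem: M_i} already identifies $\mathcal{N}_i(e) = \bigcup_{o \in \exactboxes{R}{i}} \{ j ; k^j \in e^\#(o) \}$, so each $j \in \mathcal{N}_i(e)$ canonically labels a box $o_j \in \exactboxes{R}{i}$ (uniqueness follows from $k$-injectivity, since distinct boxes yield disjoint sets in $e^\#$). The bijection $!_{e,i}$ sends $j$ to the ``copy'' of $o_j$ that appears as a $\oc$-port-without-box at depth $0$ in $\taylor{e}{i}$.

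For the equality $\exactboxesatzero{\taylor{e}{i+1}}{i} = \im{\restriction{!_{e, i}}{\mathcal{N}_i(e)}}$, I would unpack Definition~\ref{defin: Taylor} for $\taylor{e}{i+1}$: the boxes at depth $0$ whose content has depth exactly $i$ are either (a) original boxes $o \in \exactboxesatzero{R}{i}$ (the case $!_{e,i}(j) \in \exactboxesatzero{R}{i}$), or (b) tagged copies $(o_1, e_1):o'$ with $o_1 \in \boxesatzerogeq{R}{i+1}$ and $o'$ a content-depth-$i$ box at depth $0$ of $\taylor{e_1}{i+1}$ (which, by induction, corresponds to some $j \in \mathcal{N}_i(e_1)$, giving the third assertion). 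Comparing with Definition~\ref{defin: Taylor} for $\taylor{e}{i}$, exactly these same $o$'s and $(o_1, e_1):o'$'s appear in $\taylor{e}{i}$ as $\oc$-ports without boxes, and they are precisely the image of $!_{e,i}$ on $\mathcal{N}_i(e)$.

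The main obstacle is the identity $\im{b_{\taylor{e}{i+1}}(!_{e, i}(j))} = \criticalports{\taylor{e}{i}}{k}{j}$, which requires reading back the auxiliary connections from the arities. Fix $j \in \mathcal{N}_i(e)$ and let $\hat o = !_{e,i}(j)$, the $\oc$-port of interest. For any exponential port $p$ of $\taylor{e}{i}$, expanding the arity formula through all the $\taylor{\cdot}{i}$-layers yields a decomposition
\[
\arity{\taylor{e}{i}}(p) \;=\; a_0(p) \;+\; \sum_{j' \in \mathcal{M}_i(e)} c_{j'}(p)\cdot k^{j'},
\]
where $a_0(p)$ gathers contributions from wires and conclusions of still-boxed material (so $a_0(p)\le \cosize{R}$), and $c_{j'}(p)$ counts how many auxiliary doors of $!_{e,i}(j')$ are sent onto $p$ by $b_{\taylor{e}{i+1}}(!_{e,i}(j'))$ (so again $c_{j'}(p)\le \cosize{R}$). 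Since $k>\cosize{R}$ and $k>\Card{\boxes{R}}$, both $a_0(p)$ and each $c_{j'}(p)$ are digits in base $k$, and by $k$-injectivity the labels $j'$ occurring are pairwise distinct; hence this is \emph{the} base-$k$ expansion and $m_{k,j}(\taylor{e}{i})(p) = c_j(p)$. Thus $p \in \criticalports{\taylor{e}{i}}{k}{j}$ iff $c_j(p)\neq 0$ iff $p$ is in the image of $b_{\taylor{e}{i+1}}(\hat o)$; and when $p = \hat o$ itself we get $c_j(\hat o) = 1$ from the single ``main door'' contribution, matching $\hat o \in \im{b_{\taylor{e}{i+1}}(\hat o)}$.

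The final inclusion $\criticalports{\taylor{e}{i}}{k}{\mathcal{N}_i(e)} \subseteq \exponentialportsatzero{\taylor{e}{i+1}}$ is then immediate from the previous claim, since by Definition~\ref{defin: differential PS} the range of every $b(o)$ consists of exponential ports ($\oc$, $\contr$, or $\circ$). The delicate point throughout is to push the induction through the nested tagged notation $(o_1,e_1):(\ldots)$ of Definition~\ref{defin: Taylor} without losing track of which $j$ labels which box; this is exactly where $k$-injectivity and the size bounds on $k$ carry the proof.
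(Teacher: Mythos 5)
Your proof follows essentially the same route as the paper: an induction on $\depthof{R}$ whose engine is the observation that, since $k>\cosize{R},\Card{\boxes{R}}$ and the experiment is $k$-injective, the contributions of wires, of still-boxed doors, and of each expanded box's auxiliary doors to $\arity{\taylor{e}{i}}(p)$ are all digits attached to pairwise distinct powers of $k$, so the base-$k$ expansion of the arity reads back the doors of the new boxes (this is exactly the content of the paper's Lemma~\ref{lem: arity{taylor{e}{i}}(c) for c at depth 0} and Lemma~\ref{lem: M_i}, which the paper then exploits in its four-step induction). The only cosmetic difference is that you phrase the digit $c_j(p)$ directly in terms of $b_{\taylor{e}{i+1}}$ whereas the paper defines it intrinsically via the sets $\mathcal{B}_R^{\geq i}(q)$ and only afterwards identifies it with the door map; this avoids any appearance of circularity but does not change the argument.
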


\begin{example}
(Continuation of Example~\ref{example: N_i(e) - 2}) We thus have $\im{\restriction{!_{f, 1}}{\mathcal{N}_1(f)}}$ $= \{ o_2, o_4 \}$; and indeed $o_2$ and $o_4$ are the boxes of depth $1$ at depth $0$ of $\taylor{f}{2} \equiv R$ (see Figure~\ref{fig: new_example}). Moreover we have $\criticalports{\taylor{f}{1}}{10}{1} = \{ p_1, p_4, p_5, p_6, p_7, o_2 \}$ and $\criticalports{\taylor{f}{1}}{10}{2} = \{ p_4, p_5, p_6, p_7, o_4 \}$; and indeed, in Figure~\ref{fig: new_example}, we have $\im{b_{\taylor{f}{2}}(o_2)}$ $=$ $\{ p_1, p_4, p_5, p_6, p_7, o_2 \}$ and $\im{b_{\taylor{f}{2}}(o_4)}$ $=$ $\{ p_4, p_5, p_6, p_7, o_4 \}$.
\end{example}

As the following example shows, the information we obtain is already strong, but not strong enough.

\begin{example}
The PS's $R_1$, $R_2$ and $R_3$ of Figures~\ref{fig: R_1}, \ref{fig: R_2} and \ref{fig: R_3} respectively have the same LPS. But if we know that $p \in \im{b_R(o_1)}$, then we know that $R \not= R_3$. Still we are not able to distinguish between $R_1$ and $R_2$.

\begin{figure}
\centering
\begin{minipage}{0.3\textwidth}
\centering
\scalebox{\scalefactter}{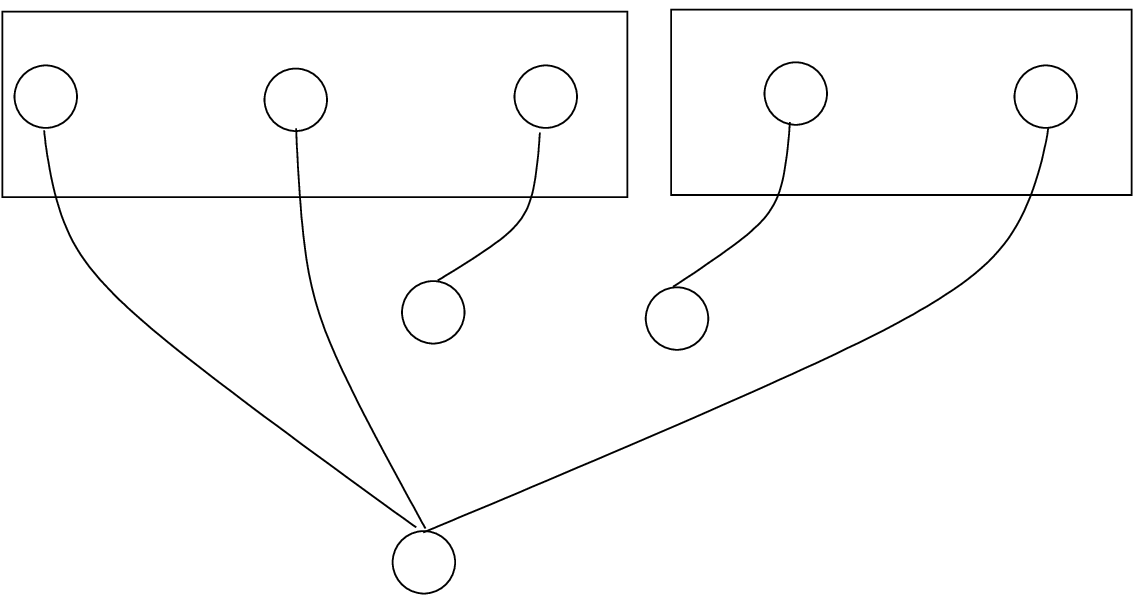}
\caption{$R_1$}
\label{fig: R_1}
\end{minipage}\hfill
\begin{minipage}{0.3\textwidth}
\centering
\scalebox{\scalefactter}{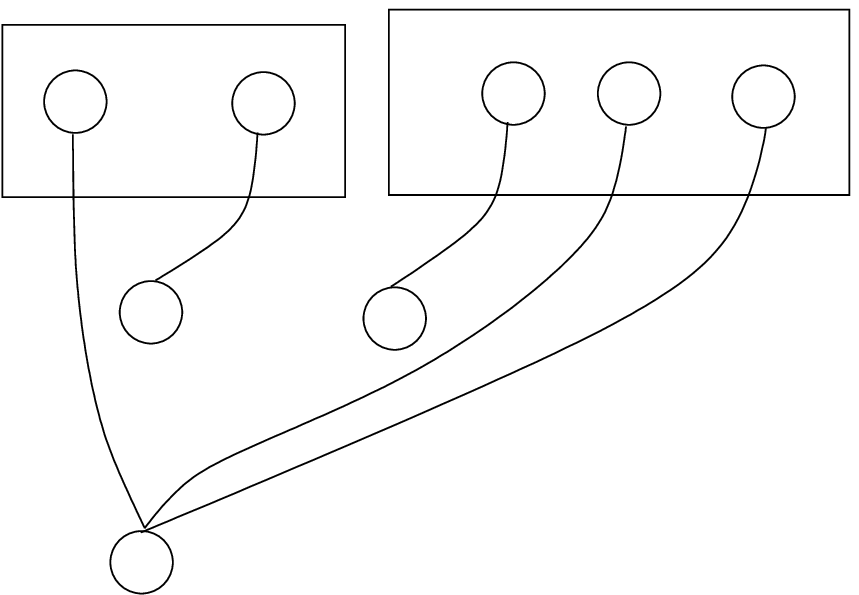}
\caption{$R_2$}
\label{fig: R_2}
\end{minipage}
\begin{minipage}{0.3\textwidth}
\centering
\scalebox{\scalefactter}{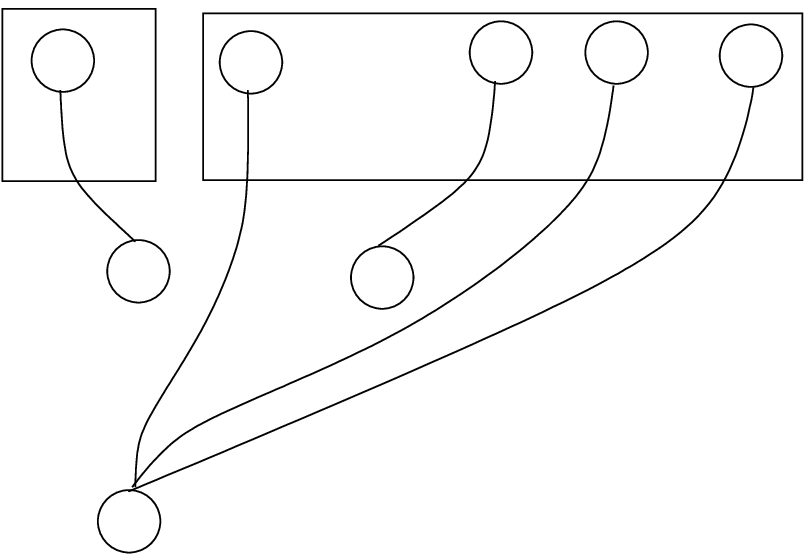}
\caption{$R_3$}
\label{fig: R_3}
\end{minipage}
\end{figure}
\end{example}

\subsection{Connected components}

In order to rebuild the content of the boxes, we introduce our notion of \emph{connected component} (Definition~\ref{defin: connected components}), which uses the auxiliary notions of \emph{substructure} (Definition~\ref{defin: substructure}) and \emph{connected substructure} (Definition~\ref{defin: connected substructure}). A differential $\circ$-PS $R$ is a substructure of a differential $\circ$-PS $S$ (we write $R \sqsubseteq S$) if $R$ is obtained from $S$ by erasing some ports and wires. More precisely:

\begin{definition}\label{defin: substructure}
Let $R$ and $S$ be two differential $\circ$-PS's. Let $\mathcal{Q} \subseteq \exponentialportsatzero{S}$. We write $R \sqsubseteq_{\mathcal{Q}} S$ to denote that $\portsatzero{R} \subseteq \portsatzero{S}$, $\wiresatzero{R} = \{ w \in \wiresatzero{S} \cap (\portsatzero{R} \setminus \mathcal{Q}) ; \target{\groundof{S}}(w) \in \portsatzero{R} \}$, $\labelofcell{\groundof{R}} = \restriction{\labelofcell{\groundof{S}}}{\portsatzero{R}}$, $\target{\groundof{R}} = \restriction{\target{\groundof{S}}}{\wiresatzero{R}}$, $\leftwires{\groundof{R}} = \leftwires{\groundof{S}} \cap \{ w \in \wiresatzero{S} ; \target{\groundof{S}}(w) \in \multiplicativeportsatzero{R} \}$, $\textsf{T}_{\groundof{R}}= \restriction{\textsf{T}_{\groundof{S}}}{\portsatzero{R}}$, 
$\boxesatzero{R} = \boxesatzero{S} \cap \portsatzero{R}$, $B_R = \restriction{B_S}{\boxesatzero{R}}$ and $b_R = \restriction{b_S}{\boxesatzero{R}}$. We write $R \sqsubseteq S$ if there exists $\mathcal{Q}$ such that $R \sqsubseteq_{\mathcal{Q}} S$.  
\end{definition}

\begin{rem}
Let $R \sqsubseteq S$ and $\mathcal{Q} \subseteq \exponentialportsatzero{S}$. We have $R \sqsubseteq_{\mathcal{Q}} S$ iff $\mathcal{Q} \cap \portsatzero{R} \subseteq \conclusions{R}$.
\end{rem}

\begin{rem}
If $R, R' \sqsubseteq_{\mathcal{Q}} S$ and $\portsatzero{R} = \portsatzero{R'}$, then $R = R'$.
\end{rem}

Let us explain with the following example why we sometimes need to erase some wires (so the notion of $\sqsubseteq_{\emptyset}$ is not enough).  

\begin{example}
We need to erase some wires whenever there exist a box $o$ and $p, q \in \im{b_{R'}(o)}$ such that $\target{\groundof{R'}}(q) = p$. Consider, for instance, Figure~\ref{fig: R'}. If $e'$ is a $k$-injective pseudo-experiment of $R'$, then we want to be able to consider the PS $U$ of Figure~\ref{fig: U} as a substructure of $\taylor{e'}{1}$, so we need to erase the wire $q$; we thus have $U \sqsubseteq_{\{ p, q, o \}} \taylor{e'}{1}$.
\begin{figure}
\begin{minipage}{0.25\textwidth}
\centering
\scalebox{\scalefactbis}{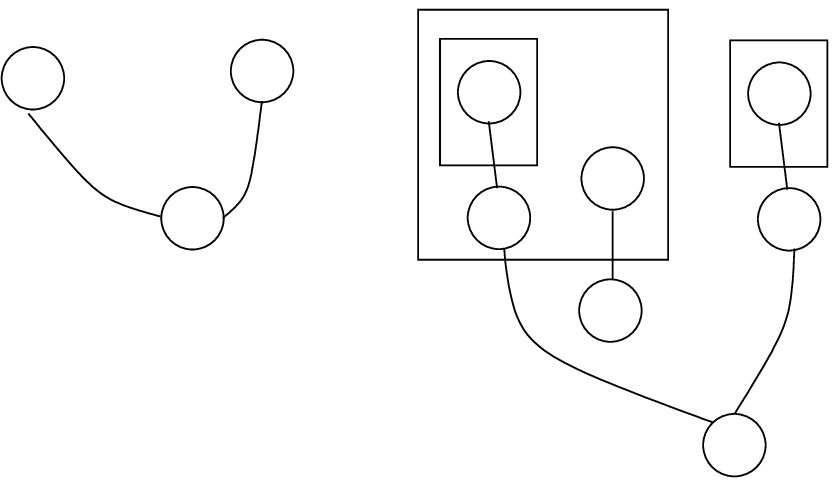}
\caption{PS $S$}
\label{fig: no reconstruction of the experiment}
\end{minipage}
\centering
\begin{minipage}{0.17\textwidth}
\centering
\scalebox{\scalefactnine}{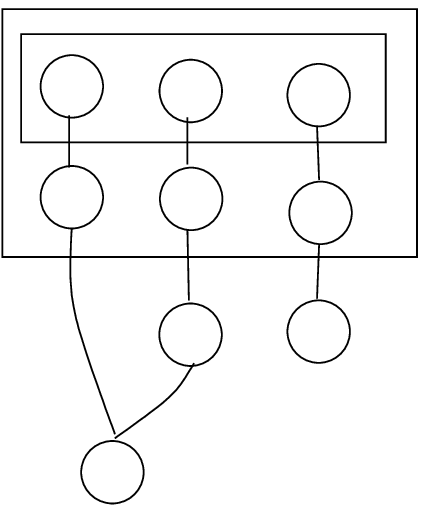}
\caption{PS $R'$}
\label{fig: R'}
\end{minipage}\hfill
\begin{minipage}{0.16\textwidth}
\centering
\scalebox{\scalefactnine}{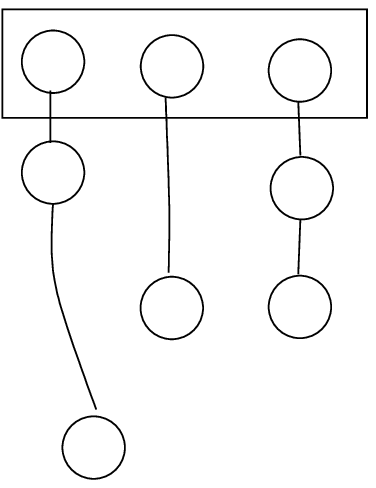}
\caption{PS $U$}
\label{fig: U}
\end{minipage}
\begin{minipage}{0.21\textwidth}
\centering
\scalebox{\scalefactter}{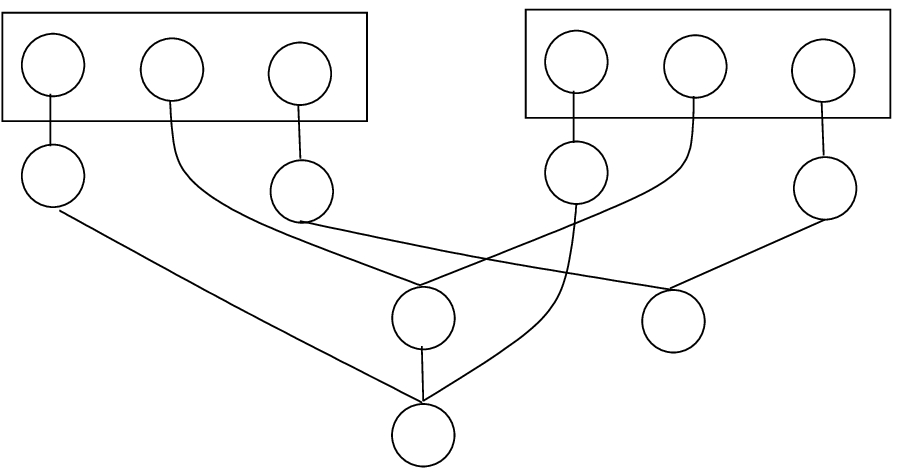}
\caption{PS $\taylor{e'}{1}$}
\label{fig: T(e')[1]}
\end{minipage}
\begin{minipage}{0.14\textwidth}
\centering
\scalebox{\scalefactten}{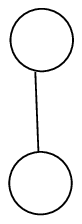}
\caption{PS $T$}
\label{fig: T}
\end{minipage}\hfill
\end{figure}
\end{example}

The relation $\coh_S$ formalizes the notion of ``connectness" between two ports of $S$ at depth $0$. But be aware that, here, ``connected'' has nothing to do with ``connected'' in the sense of \cite{LPSinjectivity}: here, any two doors of the same box are always ``connected''.

\begin{definition}
Let $S$ be a differential $\circ$-PS. We define the binary relation $\coh_S$ on $\portsatzero{S}$ as follows: for any $p, p' \in \portsatzero{S}$, we have $p \coh_S p'$ iff $(p \in \wiresatzero{S}$ and $p' = \target{\groundof{S}}(p))$ or $(p' \in \wiresatzero{S}$ and $p = \target{\groundof{S}}(p'))$ or $((\exists o \in \boxesatzero{S}) \{ p, p' \} \subseteq \im{b_S(o)})$.
\end{definition}

\begin{definition}\label{defin: connected substructure}
Let $S$ and $T$ be two differential $\circ$-PS's. Let $\mathcal{Q} \subseteq \exponentialportsatzero{S}$ such that $T \sqsubseteq_{\mathcal{Q}} S$. We write $T \trianglelefteq_\mathcal{Q} S$ if, for any $p, p' \in \portsatzero{T}$, there exists a finite sequence $(p_0, \ldots, p_n)$ of elements of $\portsatzero{T}$ such that $p_0 = p$, $p_n = p'$ and, for any $j \in \{ 0, \ldots, n-1 \}$, we have $p_j \coh_S p_{j+1}$ and $(p_j \in \mathcal{Q} \Rightarrow j = 0)$.
\end{definition}

\begin{rem}
If $T \trianglelefteq_{\mathcal{Q}} S$, $\mathcal{Q'} \subseteq \exponentialportsatzero{S}$ and $\portsatzero{T} \cap \mathcal{Q'} \subseteq \mathcal{Q}$, then $T \trianglelefteq_{\mathcal{Q'}} S$.
\end{rem}

The sets $\nontrivialconnected{S}{(\mathcal{Q}, \mathcal{Q}_0)}{k}$ of ``components $T$ of $S$ above $\mathcal{Q}$ and $\mathcal{Q}_0$ that are connected \emph{via} other ports than $\mathcal{Q}$ and such that $\cosize{T} < k$'' will play a crucial role in the algorithm of the rebuilding of $\taylor{\overline{e}}{i+1}$ from $\taylor{\overline{e}}{i}$. The reader already knows that, here, ``connected'' has nothing to do with the ``connected proof-nets" of \cite{LPSinjectivity}: there, the crucial tool used was rather the ``bridges'' that put together two doors of the same copy of some box only if they are connected in the LPS of the proof-net.

\begin{definition}\label{defin: connected components}
Let $k \in \Nat$. 
Let $S$ be a differential $\circ$-PS. Let $\mathcal{Q} \subseteq \exponentialportsatzero{S}$ and $\mathcal{Q}_0 \subseteq \portsatzero{S}$. We set 
$\nontrivialconnected{S}{(\mathcal{Q}, \mathcal{Q}_0)}{k} = \left\lbrace T \trianglelefteq_\mathcal{Q} S ; \begin{array}{l} 
 \cosize{T} < k \textit{ and } \conclusions{T} \subseteq \mathcal{Q} \cup \mathcal{Q}_0 \textit{ and} \\ \portsatzero{T} \setminus \mathcal{Q} \not= \emptyset 
\textit{ and } (\forall p \in \portsatzero{T}) (\forall q \in \portsatzero{S}) \\
((p \coh_S q \textit{ and } q \notin \portsatzero{T}) \Rightarrow p \in \mathcal{Q})  \end{array} \right\rbrace$. 
We write also $\nontrivialconnected{S}{\mathcal{Q}}{k}$ instead of $\nontrivialconnected{S}{(\mathcal{Q}, \emptyset)}{k}$ and $\connectedcomponents{S}{k} = \nontrivialconnected{S}{(\conclusionscirc{S}, \conclusionsnotcirc{S})}{k}$.
\end{definition}

A port at depth $0$ of $S$ that is not in $\mathcal{Q}$ cannot belong to two different components:

\begin{fact}\label{fact: two components with a common port}
Let $k \in \Nat$. Let $S$ be a differential $\circ$-PS. Let $\mathcal{Q}, \mathcal{Q}_0 \subseteq \portsatzero{S}$. Let $T, T' \in \nontrivialconnected{S}{(\mathcal{Q}, \mathcal{Q}_0)}{k}$ such that $(\portsatzero{T} \cap \portsatzero{T'}) \setminus \mathcal{Q} \not= \emptyset$. Then $ T = T'$.
\end{fact}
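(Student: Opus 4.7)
The plan is to prove $\portsatzero{T} = \portsatzero{T'}$ first and then invoke the remark stating that if $T, T' \sqsubseteq_{\mathcal{Q}} S$ and $\portsatzero{T} = \portsatzero{T'}$, then $T = T'$ (both inclusions $T, T' \sqsubseteq_{\mathcal{Q}} S$ follow from $T, T' \trianglelefteq_\mathcal{Q} S$).

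Fix some $p_0 \in (\portsatzero{T} \cap \portsatzero{T'}) \setminus \mathcal{Q}$. To show $\portsatzero{T} \subseteq \portsatzero{T'}$, take $p \in \portsatzero{T}$. Since $T \trianglelefteq_{\mathcal{Q}} S$, there is a sequence $(p_0, p_1, \ldots, p_n)$ of elements of $\portsatzero{T}$ with $p_n = p$ such that $p_j \coh_S p_{j+1}$ for $0 \le j < n$ and $p_j \in \mathcal{Q} \Rightarrow j = 0$. Since $p_0 \notin \mathcal{Q}$, in fact $p_j \notin \mathcal{Q}$ for every $j \in \{0, \ldots, n-1\}$. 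I then prove by induction on $j \in \{0, \ldots, n\}$ that $p_j \in \portsatzero{T'}$. The base case holds by the choice of $p_0$. For the inductive step, assume $p_j \in \portsatzero{T'}$ with $j < n$. We have $p_j \coh_S p_{j+1}$; if $p_{j+1} \notin \portsatzero{T'}$, then the ``maximality'' condition in the definition of $\nontrivialconnected{S}{(\mathcal{Q}, \mathcal{Q}_0)}{k}$ applied to $T'$ would force $p_j \in \mathcal{Q}$, contradicting $p_j \notin \mathcal{Q}$. Hence $p_{j+1} \in \portsatzero{T'}$, and in particular $p = p_n \in \portsatzero{T'}$.

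This gives $\portsatzero{T} \subseteq \portsatzero{T'}$, and by symmetry the reverse inclusion. Applying the remark that $R, R' \sqsubseteq_{\mathcal{Q}} S$ together with $\portsatzero{R} = \portsatzero{R'}$ imply $R = R'$, we conclude $T = T'$.

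The proof is mostly a direct unwinding of the two features of $\nontrivialconnected{S}{(\mathcal{Q}, \mathcal{Q}_0)}{k}$: (i) internal $\mathcal{Q}$-avoiding reachability from any port to any other, and (ii) external maximality stating that any $\coh_S$-neighbour of $T$ outside $T$ must enter through $\mathcal{Q}$. The one point that requires care is the asymmetric role of $p_0$ in the definition of $\trianglelefteq_\mathcal{Q}$ (only the starting port is allowed to lie in $\mathcal{Q}$); choosing the shared port outside $\mathcal{Q}$ is exactly what makes the induction go through and is the only subtlety in an otherwise routine argument.
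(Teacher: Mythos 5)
Your proof is correct and follows essentially the same route as the paper's: fix a shared port outside $\mathcal{Q}$, propagate membership in $\portsatzero{T'}$ along a $\mathcal{Q}$-avoiding $\coh_S$-path in $T$ using the maximality clause of $\nontrivialconnected{S}{(\mathcal{Q},\mathcal{Q}_0)}{k}$, and conclude via the remark that substructures $\sqsubseteq_{\mathcal{Q}} S$ with the same ports coincide. Nothing to add.
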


\begin{example}
We have $\Card{\nontrivialconnected{S}{\{ p_1, p_4, p_5, p_6, p_7, o_2 \}}{10}} = 241$ and $\Card{\nontrivialconnected{S}{\{ p_4, p_5, p_6, p_7, o_4 \}}{10}} = 320$, where $S$ is the PS of Figure~\ref{fig: T(f)[1]}.
\end{example}

The operator $\sum$ glues together several $\circ$-PS's that share only $\circ$-conclusions:

\begin{definition}
Let $\mathcal{U}$ be a set of $\circ$-PS's. We say that $\mathcal{U}$ is \emph{gluable} if, for any 
$R, S \in \mathcal{U}$ s.t. $R \not= S$, we have $\portsatzero{R} \cap \portsatzero{S} \subseteq \conclusionscirc{R} \cap \conclusionscirc{S}$. 
If $\mathcal{U}$ is gluable, then $\sum \mathcal{U}$ is the $\circ$-PS such that $\conclusionscirc{\sum \mathcal{U}} = \bigcup \{ \conclusionscirc{R} ; R \in \mathcal{U} \}$ obtained by glueing all the elements of $\mathcal{U}$.
\end{definition}


The set $\connectedcomponents{R}{k}$ (for $k$ big enough) is an alternative way to describe a $\circ$-PS $R$:

\begin{fact}\label{fact: connected components}
Let $R$ be a $\circ$-PS. Let $k > \cosize{R}$. We have $R = \sum \connectedcomponents{R}{k}$.
\end{fact}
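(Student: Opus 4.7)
The plan is to parametrize $\connectedcomponents{R}{k}$ by the equivalence classes of the relation $\sim$ on $\portsatzero{R} \setminus \conclusionscirc{R}$ defined by: $p \sim p'$ iff there exists a finite sequence $(p_0, \ldots, p_n)$ in $\portsatzero{R} \setminus \conclusionscirc{R}$ with $p_0 = p$, $p_n = p'$ and $p_j \coh_R p_{j+1}$ for every $j$. This is clearly reflexive, symmetric and transitive. First I would define, for each equivalence class $C$, a substructure $T_C$ of $R$ whose port set is $C \cup \{ q \in \conclusionscirc{R} ; (\exists p \in C) \, q \coh_R p \}$, with $\wiresatzero{T_C}$, $\labelofcell{\groundof{T_C}}$, $\target{\groundof{T_C}}$, $\leftwires{\groundof{T_C}}$, $\textsf{T}_{\groundof{T_C}}$, $\boxesatzero{T_C}$, $B_{T_C}$ and $b_{T_C}$ inherited from $R$ in the natural way.

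Next I would verify $T_C \sqsubseteq_{\conclusionscirc{R}} R$ and then $T_C \in \connectedcomponents{R}{k}$. The substructure relation requires checking that every wire $p \in C$ of $R$ has $\target{\groundof{R}}(p) \in \portsatzero{T_C}$: since $\portsoftype{\circ}{\groundof{R}} \subseteq \conclusions{R}$, a wire never targets a $\circ$-port, so $\target{\groundof{R}}(p) \notin \conclusionscirc{R}$, and since $p \coh_R \target{\groundof{R}}(p)$ we get $\target{\groundof{R}}(p) \sim p$, hence $\target{\groundof{R}}(p) \in C$. The connectedness $T_C \trianglelefteq_{\conclusionscirc{R}} R$ is immediate from the definition of $\sim$ (the possible endpoints in $\conclusionscirc{R}$ only appear as $p_0$ or $p_n$, which is permitted). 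For $\cosize{T_C} < k$: since $B_{T_C} = \restriction{B_R}{\boxesatzero{T_C}}$, inner cosizes are unchanged; and for any port $p \in \portsatzero{T_C}$ both the multiplicative and exponential contributions to $\arity{T_C}(p)$ are bounded by those in $R$, so $\cosize{T_C} \leq \cosize{R} < k$. The inclusion $\conclusions{T_C} \subseteq \conclusions{R}$ follows because any port $p \in \portsatzero{T_C} \setminus \conclusionscirc{R}$ that lies in $\wiresatzero{R}$ also lies in $\wiresatzero{T_C}$ by the previous check. Non-emptiness $\portsatzero{T_C} \setminus \conclusionscirc{R} = C \neq \emptyset$ is obvious, and maximality holds because if $p \in T_C \setminus \conclusionscirc{R} = C$ and $p \coh_R q$, then either $q \in \conclusionscirc{R}$ (in which case $q \in T_C$ by definition) or $q \notin \conclusionscirc{R}$ (in which case $q \sim p$, so $q \in C \subseteq T_C$).

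Conversely, given $T \in \connectedcomponents{R}{k}$, I fix any $p \in \portsatzero{T} \setminus \conclusionscirc{R}$ and let $C$ be its $\sim$-class; then $T = T_C$ by the following bookkeeping. The inclusion $C \subseteq \portsatzero{T}$ follows by induction along a witnessing $\sim$-path, using maximality of $T$ at each non-$\conclusionscirc{R}$ vertex. The reverse $\portsatzero{T} \setminus \conclusionscirc{R} \subseteq C$ uses the $\trianglelefteq_{\conclusionscirc{R}}$ property: any two non-$\conclusionscirc{R}$ ports of $T$ are joined by a $\coh_R$-path inside $\portsatzero{T}$ whose internal vertices lie outside $\conclusionscirc{R}$, which is exactly a $\sim$-witness. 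The $\circ$-conclusion layer matches similarly: if $q \in \conclusionscirc{R} \cap \portsatzero{T}$, take a $T$-internal path from $q$ to any element of $C$; the step leaving $q$ lands in $C$, giving $q \in T_C$; and if $q \in T_C \setminus T$, maximality of $T$ at the element of $C$ it neighbours forces that element into $\conclusionscirc{R}$, a contradiction. Finally, gluability of $\connectedcomponents{R}{k}$ is immediate from Fact~\ref{fact: two components with a common port} together with the observation that any $q \in \conclusionscirc{R} \cap \portsatzero{T_C}$ is a $\circ$-conclusion of $T_C$ (it is not a wire and keeps its label $\circ$). The equality $\sum \connectedcomponents{R}{k} = R$ then follows because each $p \in \portsatzero{R} \setminus \conclusionscirc{R}$ sits in $T_{[p]_\sim}$, and each $p \in \conclusionscirc{R}$ sits in $T_{[o]_\sim}$ for any $o \in \boxesatzero{R}$ with $p \in \im{b_R(o)}$ (such $o$ exists by the $\circ$-PS axiom $\portsoftype{\circ}{\groundof{R}} \subseteq \bigcup_{o \in \boxesatzero{R}} \im{b_R(o)}$), while the wire, label, target, left-wire, box and $b$ data are glued verbatim from $R$. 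The main delicate point is the boundary bookkeeping at $\conclusionscirc{R}$: ensuring that no wire is accidentally erased by the substructure restriction and that every $\circ$-conclusion ends up in at least one component.
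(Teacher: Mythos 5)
Your proof is correct and establishes exactly the content the paper needs: the $\sim$-classes of ports outside $\conclusionscirc{R}$ index the elements of $\connectedcomponents{R}{k}$, each class yields a genuine member of $\connectedcomponents{R}{k}$ (the key checks being that wire targets and $\tens/\parr$-premises never leave a class, that $\cosize{}$ only decreases under restriction, and that every $\circ$-conclusion is caught because $\portsoftype{\circ}{\groundof{R}} \subseteq \bigcup_{o \in \boxesatzero{R}} \im{b_R(o)}$ with $o \coh_R q$ for $q \in \im{b_R(o)}$), and gluability plus coverage give $R = \sum \connectedcomponents{R}{k}$. The paper leaves this fact essentially unproved (its suppressed argument is a one-line induction on $\Card{\portsatzero{R}}$ showing every port lies in some component, relying on the same observation about $\circ$-conclusions), so your explicit equivalence-class construction is the same idea carried out in full rather than a genuinely different route.
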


Definition~\ref{defin: overline} allows to formalize the operation of ``putting a connected component inside a box'', which will be useful for building the boxes of depth $i$  of $\taylor{\overline{e}}{i+1}$: from some \emph{boxable} differential $\circ$-PS $R \sqsubseteq \taylor{\overline{e}}{i}$, we build a $\circ$-PS $\overline{R}$ such that, for some $o \in \exactboxesatzero{\taylor{\overline{e}}{i+1}}{i}$, there exists $T \in \connectedcomponents{B_{\taylor{\overline{e}}{i+1}}(o)}{k}$ such that $T \simeq \overline{R}$.


\begin{definition}\label{defin: overline}
Let $R$ be a differential $\circ$-PS. If $\conclusions{R} \subseteq \exponentialportsatzero{R}$, $\conclusions{R} \cap \boxesatzero{R} = \emptyset$ and $\portsatzerooftype{\cod}{R} \setminus \boxesatzero{R} \subseteq \conclusions{R}$, then one says that $R$ is \emph{boxable} and we define a $\circ$-PS $\overline{R}$ s.t. $\portsatzero{\overline{R}} \subseteq \portsatzero{R}$, $\conclusionsnotcirc{\overline{R}} \subseteq \wiresatzero{R}$ and $\conclusionscirc{\overline{R}} = \conclusions{R} \cap \portsatzero{\overline{R}}$ as follows:
\begin{itemize}
\item $\portsatzero{\overline{R}} = \wiresatzero{R} \cup \bigcup_{o \in \boxesatzero{R}} (\im{b_R(o)} \cap \conclusions{R})$;
\item $\wiresatzero{\overline{R}} = \{ w \in \wiresatzero{R} ; \target{\groundof{R}}(w) \in \wiresatzero{R} \}$
\item $\labelofcell{\groundof{\overline{R}}}(p) = \left\lbrace \begin{array}{ll} 
\labelofcell{\groundof{R}}(p) & \textit{if $p \in \wiresatzero{R}$;}\\
\circ & \textit{otherwise;}
\end{array} \right.$
\item $\target{\groundof{\overline{R}}} = \restriction{\target{\groundof{R}}}{\wiresatzero{\overline{R}}}$; $\leftwires{\groundof{\overline{R}}} = \leftwires{\groundof{R}} \cap \wiresatzero{\overline{R}}$; $\boxesatzero{\overline{R}} = \boxesatzero{R}$; $b_{\overline{R}} = b_R$.
\end{itemize}
If $\mathcal{U}$ is a set of boxable differential $\circ$-PS's, 
then we set $\overline{\mathcal{U}} = \{ \overline{R} ; R \in \mathcal{U} \}$.
\end{definition}

In the proof of the following proposition, we finally describe the complete algorithm leading from $\taylor{e}{i}$ to $\taylor{e}{i+1}$.
 Informally: for every $j_0 \in \mathcal{N}_i(e)$, for every equivalence class $\mathfrak{T} \in \nontrivialconnected{\taylor{e}{i}}{\criticalports{\taylor{e}{i}}{k}{j_0}}{k}_{/_\equiv}$, if $\Card{\mathfrak{T}} = \sum_{j \in \Nat} m_j \cdot k^j$ (with $0 \leq m_j <k$), then we remove $m_{j_0} \cdot k^{j_0}$ elements of $\mathfrak{T}$ from $\groundof{\taylor{e}{i}}$ and we put $m_{j_0}$ such elements inside the (new) box $!_{e, i}(j_0)$ of depth $i$. For every $j_0 \in \mathcal{N}_i(e)$, the set $\mathcal{U}_{j_0}$ of the proof is the union of the sets of such $m_{j_0}$ elements for all the equivalence classes $\mathfrak{T} \in \nontrivialconnected{\taylor{e}{i}}{\criticalports{\taylor{e}{i}}{k}{j_0}}{k}_{/_\equiv}$.

\begin{prop}\label{prop: from i to i+1}
Let $R$ and $R'$ be two PS's. Let $k > \cosize{R}$, $\cosize{R'}$, $\Card{\boxes{R}}$, $\Card{\boxes{R'}}$. Let $e$ be a $k$-injective pseudo-experiment of $R$ and let $e'$ be a $k$-injective pseudo-experiment of $R'$ s.t. $\taylor{e}{0} \equiv \taylor{e'}{0}$. Then, for any $i \in \Nat$, we have $\taylor{e}{i} \equiv \taylor{e'}{i}$.
\end{prop}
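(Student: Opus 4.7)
The plan is induction on $i$, with the base case $i = 0$ given by hypothesis. For the inductive step, I would show that $\taylor{e}{i+1}$ is determined, up to $\equiv$, by $\taylor{e}{i}$ and $k$; hence applying the same reconstruction to $\taylor{e'}{i}$ yields $\taylor{e}{i+1} \equiv \taylor{e'}{i+1}$. As observed just after Definition~\ref{defin: the algorithm (b)}, the sets $\mathcal{M}_i(e)$ and $\mathcal{N}_i(e)$ are computed purely from base-$k$ cardinality data and hence depend only on $\taylor{e}{0}$; by Lemma~\ref{lem: M_i} they can equally be read off from $\taylor{e}{i}$ as $\{ \log_k \arity{\taylor{e}{i}}(p) ; p \in \portsatzerooftype{\cod}{\taylor{e}{i}} \setminus \boxesatzero{\taylor{e}{i}} \}$ together with its further base-$k$ decomposition.

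The boxes at depth $0$ of $\taylor{e}{i+1}$ split into two kinds. By Fact~\ref{fact : boxes of taylor{e}{i+1} of depth < i}, the ``old'' ones (of depth $<i$) are exactly those boxes of $\taylor{e}{i}$ whose main door remains a $\cod$-port in $\taylor{e}{i+1}$, and they inherit their contents and $b$-functions from $\taylor{e}{i}$ verbatim. By Proposition~\ref{prop: critical ports below new boxes}, the ``new'' boxes at depth $0$ of depth $i$ are precisely $\im{\restriction{!_{e,i}}{\mathcal{N}_i(e)}}$, each $!_{e,i}(j)$ being identified inside $\taylor{e}{i}$ as the unique non-box $\cod$-port of arity $k^j$; moreover $\im{b_{\taylor{e}{i+1}}(!_{e,i}(j))} = \criticalports{\taylor{e}{i}}{k}{j}$, which fixes the outer interface of every new box purely in terms of $\taylor{e}{i}$.

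To fill in the content of a new box $!_{e,i}(j_0)$ (for $j_0 \in \mathcal{N}_i(e)$) I follow the informal algorithm described just before the statement: partition $\nontrivialconnected{\taylor{e}{i}}{\criticalports{\taylor{e}{i}}{k}{j_0}}{k}$ into $\equiv$-classes $\mathfrak{T}$ (well-defined by Fact~\ref{fact: two components with a common port}), write $\Card{\mathfrak{T}} = \sum_j m_j \cdot k^j$ in base $k$, apply $\overline{(\cdot)}$ to $m_{j_0}$ representatives of each class (Definition~\ref{defin: overline}), and glue these via $\sum$ (using Fact~\ref{fact: connected components}) to form $B_{\taylor{e}{i+1}}(!_{e,i}(j_0))$; the associated $b_{\taylor{e}{i+1}}(!_{e,i}(j_0))$ is read off the critical-port attachments. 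Simultaneously, $m_{j_0} \cdot k^{j_0}$ representatives of each class are discarded from depth $0$ of $\taylor{e}{i+1}$, so the remainder of $\taylor{e}{i+1}$ at depth $0$ is the untouched portion of $\groundof{\taylor{e}{i}}$.

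The main obstacle will be establishing the correctness of this reconstruction, namely that in $\taylor{e}{i}$ the multiplicity of every $\equiv$-class of components above $\criticalports{\taylor{e}{i}}{k}{j_0}$ genuinely admits a base-$k$ expansion whose digit $m_{j_0}$ equals the number of copies of that component inside $B_R(!_{e,i}(j_0))$, with higher digits $m_j$ for $j > j_0$ accounting for the contributions of other boxes of depth $\geq i$ that contain the same component. This will rely crucially on $k$-injectivity, which guarantees that distinct boxes contribute copies at distinct $k^j$-scales so that the base-$k$ decomposition is unambiguous, and on the hypothesis $k > \cosize{R}, \Card{\boxes{R}}$, which both keeps each digit in $\{0, \dots, k-1\}$ and ensures every connected component at stake has cosize $<k$. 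Once this combinatorial bookkeeping is verified, every step of the reconstruction depends only on $\taylor{e}{i}$ (up to $\equiv$) and on $k$, and carrying it out identically from $\taylor{e'}{i}$ yields $\taylor{e}{i+1} \equiv \taylor{e'}{i+1}$, closing the induction.
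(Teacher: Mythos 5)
Your proposal follows essentially the same route as the paper's own (sketched) proof: read off $\mathcal{M}_i$, $\mathcal{N}_i$ and the critical ports from $\taylor{e}{i}$, keep the ``old'' boxes via Fact~\ref{fact : boxes of taylor{e}{i+1} of depth < i}, locate the ``new'' boxes and their interfaces via Proposition~\ref{prop: critical ports below new boxes}, and fill each new box by selecting $m_{j_0}^T$ representatives of every $\equiv$-class of connected components above the corresponding critical ports. The combinatorial verification you flag as the main obstacle is exactly what the paper establishes in Propositions~\ref{prop: crucial} and~\ref{prop: rebuilding boxes} (together with Lemmas~\ref{lem: P_k subseteq ports of taylor} and~\ref{lem: second condition}) in the appendix, so your outline is the intended argument.
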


\begin{proof}
(Sketch) By induction on $i$. We assume that $\taylor{e}{i} \equiv \taylor{e'}{i}$. We set $\mathcal{M} = \mathcal{M}_i(e) = \mathcal{M}_i(e')$ and $\mathcal{N} = \mathcal{N}_i(e) = \mathcal{N}_i(e')$.

Let $S \equiv \taylor{e}{i}$. There is a bijection $\cod : \mathcal{M} \to \portsatzerooftype{\cod}{S} \setminus \boxesatzero{S}$ such that, for any $j \in \mathcal{M}$,  we have  $(\arity{S} \circ \cod)(j) = k^j$. 
For any $j \in \mathcal{N}$, we set $\mathcal{K}_j = \criticalports{S}{k}{j}$ and $\mathcal{T}_{j} = \nontrivialconnected{S}{\mathcal{K}_{j}}{k}$. We set $\mathcal{T} = \bigcup_{j \in \Nat} \mathcal{T}_j$. For any $T \in \mathcal{T}$, we define $(m_j^T)_{j \in \Nat} \in {\{ 0, \ldots, k-1 \}}^\Nat$ as follows: $\Card{\{ T' \in \mathcal{T} ; T \equiv T'\}} = \sum_{j \in \Nat} m_j^T \cdot k^j$. We set $\mathcal{P} =  \{ p \in \portsatzero{S} ; $ $p \notin \bigcup_{j \in \mathcal{N}} \bigcup_{\begin{array}{c} T \in \nontrivialconnected{S}{\mathcal{K}_j}{k} \end{array}} \portsatzero{T} \}$. 
For any $j \in \mathcal{N}$, we are given $\mathcal{U}_j \subseteq \mathcal{T}_j$ such that, for any $T \in \mathcal{T}_j$, we have $\Card{\{ T' \in \mathcal{U}_j ; T' \equiv T \}} = m_j^T$. Let $S'$ be some differential PS such that $\groundof{\restriction{S}{\mathcal{P}}} \sqsubseteq_\emptyset \groundof{S'} \sqsubseteq_\emptyset \groundof{S}$, where $\restriction{S}{\mathcal{P}}$ is the unique $S_0 \sqsubseteq_{\emptyset} S'$ s.t. $\portsatzero{S_0} = \mathcal{P}$, and:
\begin{itemize}
\item $\bigcup_{j_0 \in \mathcal{N}} \mathcal{T'}_{j_0} \subseteq \bigcup_{j_0 \in \mathcal{N}} \mathcal{T}_{j_0}$
\item for any $T \in \mathcal{T}$, we have 
$\Card{\{ T' \in \bigcup_{j \in \Nat} \nontrivialconnected{S'}{\mathcal{K}_{j}}{k} ; T' \equiv T \}} = \sum_{j \notin \mathcal{N}} m_j^T \cdot k^j$;
\item $\boxesatzero{S'} = (\boxesatzero{S} \cap \portsatzero{S'}) \cup \im{\restriction{\cod}{\mathcal{N}}}$
\item for any $o_1 \in \boxesatzero{S} \cap \portsatzero{S'}$, we have $B_{S'}(o_1) = B_S(o_1)$ and $b_{S'}(o_1) = b_S(o_1)$
\item for any $j \in \mathcal{N}$, there exists $\rho^j : B_{S'}(\cod(j)) \simeq \sum \overline{\mathcal{U}_j}$ such that $b_{S'}(\cod(j))(q) =$\\ $\left\lbrace \begin{array}{ll} 
{\groundof{\rho^{j}}}(q) & \textit{if $q \in \conclusionscirc{B_{S'}(\cod(j))}$;}\\
\target{\groundof{S}}(\groundof{\rho^{j}}(q)) & \textit{if $q \in \conclusionsnotcirc{B_{S'}(\cod(j))}$.}
\end{array} \right.$
\end{itemize}
Then 
one can show that $\taylor{e}{i+1} \equiv S' \equiv \taylor{e'}{i+1}$.
\end{proof}

\begin{example}
Consider Figure~\ref{fig: T(f)[1]}. We set $j_0 = 1$. We have $o_2 = \cod_{f, 1}(j_0)$. We set $\mathcal{T}_{j_0} = \nontrivialconnected{\taylor{f}{1}}{\criticalports{\taylor{f}{1}}{10}{1}}{10}$. Let $T$ be the $\circ$-PS of Figure~\ref{fig: T}. We have $T \in \mathcal{T}$. We have $\Card{\{ T' \in \mathcal{T} ; T' \equiv T \}} = 1 \cdot 10^0 + 1 \cdot 10^1$. We set $\mathcal{U}_{j_0} = \{ T \}$. The $\circ$-PS $\sum \overline{\mathcal{U}_{j_0}} = \overline{T}$ is the $\circ$-PS of depth $0$ that consists of only one port: the port $p$ labelled by $\bot$. There exists $\rho^{j_0}: B_{\taylor{f}{2}}(o_2) \simeq \sum \overline{\mathcal{U}_{j_0}}$ such that $b_{\taylor{f}{2}}(o_2)(q_2) = p_1 = \target{\groundof{\taylor{f}{1}}}(q)$ defined by $\groundof{\rho^{j_0}}(q_2) = q$ (see Figure~\ref{fig: new_example}).
\end{example}

\subsection{Injectivity}

\begin{theorem}\label{thm: injectivity}
Let $R$ and $R'$ be two PS's s.t. $\conclusions{R} = \conclusions{R'}$. If $\sm{R} = \sm{R'}$, then $R \equiv R'$.
\end{theorem}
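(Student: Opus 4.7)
The machinery preceding the theorem already does almost all the work: by Lemma~\ref{lem: taylor expansion}, two experiments with the same result give $\taylor{\overline{e}}{0} \equiv \taylor{\overline{e'}}{0}$, by Proposition~\ref{prop: from i to i+1} this equivalence is lifted to every depth $i$, and by Fact~\ref{fact: Taylor[i] with i large} taking $i$ at least $\max(\depthof{R}, \depthof{R'})$ collapses $\taylor{\overline{e}}{i}$ onto $R$ and $\taylor{\overline{e'}}{i}$ onto $R'$. So the whole proof reduces to producing, for some sufficiently large $k$, a single point $\alpha \in \sm{R} = \sm{R'}$ that is the result of a $k$-injective experiment on \emph{both} sides simultaneously.

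The plan is to choose $k$ in two stages. First, pick any $1$-experiment $e_1$ of $R$; its result lies in $\sm{R} = \sm{R'}$ and is therefore also the result of some experiment of $R'$, so Lemma~\ref{lem: taylor expansion} (applied at depth $0$) fixes $\taylor{\overline{e_1}}{0}$ up to $\equiv$. Reading off the number of ports and the arities from this common differential PS of depth $0$ yields upper bounds on $\Card{\boxes{R'}}$ and $\cosize{R'}$; symmetrically one bounds the quantities for $R$ using a $1$-experiment of $R'$. Fix $k$ strictly greater than all of $\cosize{R}, \cosize{R'}, \Card{\boxes{R}}, \Card{\boxes{R'}}$. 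Next, build a $k$-injective pseudo-experiment $e$ of $R$: the inductive definition of pseudo-experiments makes this a straightforward recursion on depth, choosing at each occurrence of a box a fresh exponent, larger than all previously chosen exponents, so that the three clauses of Definition~\ref{defin: k-injective} are satisfied by construction. Let $\alpha$ be the result of any underlying experiment of $e$; by hypothesis $\alpha \in \sm{R'}$, so fix an experiment $\tilde{e}'$ of $R'$ with result $\alpha$ and set $e' = \overline{\tilde{e}'}$.

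The heart of the argument is to show that $e'$ is also $k$-injective, because only then may Proposition~\ref{prop: from i to i+1} be applied to $e$ and $e'$. The idea is that the multiset cardinalities appearing inside $\alpha$ are exactly the arities of the exponential ports of $\taylor{\overline{e}}{0}$, which by Lemma~\ref{lem: taylor expansion} equal those of $\taylor{\overline{e'}}{0}$. Since $k > \Card{\boxes{R'}}, \cosize{R'}$, the base-$k$ expansion of each such arity is rigid: its nonzero digits cannot be produced by any combination of copy counts other than the powers of $k$ that appear in $e^{\#}$, and no two boxes of $R'$ can share such a power. Pushing this rigidity inductively through the depth of $R'$ (using the definition of experiments on exponential ports together with the bound $k > \cosize{R'}$ to prevent digit overflow in the base-$k$ sums) gives $e'$ the required three properties. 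Once $e$ and $e'$ are both $k$-injective, iterating Proposition~\ref{prop: from i to i+1} yields $\taylor{\overline{e}}{i} \equiv \taylor{\overline{e'}}{i}$ for every $i$, and Fact~\ref{fact: Taylor[i] with i large} at $i = \max(\depthof{R}, \depthof{R'})$ concludes $R \equiv R'$. The main obstacle is the $k$-injectivity transfer for $e'$: intuitively ``large $k$ forces unique base-$k$ decompositions'', but to make this rigorous one must carefully identify which copy counts of $\tilde{e}'$ contribute to which arity, and exploit both the digit-rigidity from $k > \Card{\boxes{R'}}$ and the absence of digit carries from $k > \cosize{R'}$.
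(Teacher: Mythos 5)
Your proof follows exactly the paper's route: produce a $k$-injective experiment of $R$ and one of $R'$ sharing the same result, apply Lemma~\ref{lem: taylor expansion} to get $\taylor{\overline{e}}{0} \equiv \taylor{\overline{e'}}{0}$, iterate Proposition~\ref{prop: from i to i+1}, and conclude with Fact~\ref{fact: Taylor[i] with i large}. The only difference is that the paper simply asserts (with a footnote) the existence of matching $k$-injective experiments on both sides and takes the bound $k > \cosize{R}, \Card{\boxes{R}}$ for granted, whereas you make explicit why the primed quantities are bounded by a shared $1$-point and sketch why any experiment of $R'$ with the chosen result is forced to be $k$-injective --- a step the paper leaves implicit, and whose essential content (the copy counts of $e'$ are readable off the arities of the $\cod$-ports of the common depth-$0$ expansion, which are pairwise distinct powers of $k$) your rigidity argument correctly identifies.
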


\begin{proof}
We set $d = $ $\max{\{ \depthof{R}, \depthof{R'} \}}$. 
For any $k > 1$, there exist a $k$-injective experiment $e$ of $R$ and a $k$-injective experiment $e'$ of $R'$\footnote{This is not necessary true for the multiset based coherent semantics.} such that $\restriction{e}{\conclusions{R}} = \restriction{e'}{\conclusions{R'}} \in \sm{R} \cap \sm{R'}$. By Lemma~\ref{lem: taylor expansion}, we have $\taylor{\overline{e}}{0} \equiv \taylor{\overline{e'}}{0}$. Therefore if $k > \cosize{R}, \Card{\boxes{R}}$, then, by Proposition~\ref{prop: from i to i+1}, we have $\taylor{\overline{e}}{d} \equiv \taylor{\overline{e'}}{d}$. Now, by Fact~\ref{fact: Taylor[i] with i large}, we have $R = \taylor{\overline{e}}{d} \equiv \taylor{\overline{e'}}{d} = R'$.
\end{proof}

\begin{rem}
From any $1$-point of $\sm{R}$ (i.e. the result of some $1$-experiment of $R$) one can recover $\cosize{R}$ and $\Card{\boxes{R}}$. This remark shows that for characterizing $R$, two points are enough: a $1$-point of its interpretation, from which one can bound $\cosize{R}$ and $\Card{\boxes{R}}$, and a $k$-injective point of its interpretation with $k > \cosize{R},  \Card {\boxes{R}}$.
\end{rem}

\begin{rem}\label{remark: with axioms}
If we want to extend our theorem to PS's with axioms, then we assume that the interpretation of any ground type is an infinite set and we consider a $k$-injective experiment $e$ that is ``injective'' in the sense that every atom (the atoms are the elements of the interpretations of the ground types) occurring in $\im{\restriction{e}{\conclusions{R}}}$ occurs exactly twice.
\end{rem}

\begin{rem}\label{remark: untyped framework}
In an untyped framework with axioms, we need to add the constraint on the injective $k$-injective point one considers to be {$\llbracket R \rrbracket$-atomic}, i.e. a point of $\llbracket R \rrbracket$ that cannot be obtained from another point of $\llbracket R \rrbracket$ by some substitution that is not a renaming. Atomic points are results of atomic experiments (experiments that label axioms with atoms) - the converse does not necessarily hold.
\end{rem}

\bibliography{ll_new}

\onecolumn

\appendix

\section{Proof of Lemma~\ref{lem: M_i}}


\begin{fact}\label{fact: M_0 pairwise disjoint}
Let $R$ be a $\circ$-PS. 
Let $k > 1$. Let $e$ be a $k$-injective pseudo-experiment of $R$. Then 
\begin{itemize}
\item for any $o, o' \in \boxesatzero{R}$, for any $e' \in e(o')$, we have $(\forall j \in \mathcal{M}_0(e')) e^\#(o) \not= \{ k^j \}$;
\item and, for any $o_1, o_2 \in \boxesatzero{R}$, for any $e_1 \in e(o_1)$, for any $e_2 \in e(o_2)$, we have $\mathcal{M}_0(e_1) \cap \mathcal{M}_0(e_2) \not= \emptyset \Rightarrow (o_1 = o_2 \textit{ and } e_1 = e_2)$.
\end{itemize}
\end{fact}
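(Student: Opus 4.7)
The plan is to reduce both items to the three disjointness clauses of $k$-injectivity by tracing elements of the auxiliary $\mathcal{M}_0$ sets up to values of $e^\#$ on longer sequences in $\boxes{R}$. The key syntactic observation is that any $o \in \boxesatzero{R}$ is a single port, whereas any element of $\boxes{R} \setminus \boxesatzero{R}$ is, by construction of $\boxes{R}$, a sequence of length at least $2$; in particular, a depth-$0$ box is never equal as a box-identifier to a strictly deeper one.

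For the first item, I fix $o, o' \in \boxesatzero{R}$, $e' \in e(o')$ and $j \in \mathcal{M}_0(e')$. Unfolding the definition of $\mathcal{M}_0(e')$ gives some $o'' \in \boxes{B_R(o')}$ with $k^j \in (e')^\#(o'')$. Since $e' \in e(o')$, the clause $e^\#(o':o'') = \bigcup_{e_1 \in e(o')} {e_1}^\#(o'')$ from Definition~\ref{defin: experiment induces pseudo-experiment} yields $k^j \in e^\#(o':o'')$. If one supposed $e^\#(o) = \{ k^j \}$, then $e^\#(o) \cap e^\#(o':o'') \neq \emptyset$, and the third clause of $k$-injectivity would force $o = o':o''$, which contradicts the length observation above.

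For the second item, given $j \in \mathcal{M}_0(e_1) \cap \mathcal{M}_0(e_2)$, the same unfolding produces $o'_1 \in \boxes{B_R(o_1)}$ and $o'_2 \in \boxes{B_R(o_2)}$ with $k^j \in e^\#(o_1:o'_1) \cap e^\#(o_2:o'_2)$. The third $k$-injectivity clause forces the sequences $o_1:o'_1$ and $o_2:o'_2$ to coincide, whence $o_1 = o_2$ and $o'_1 = o'_2$. Then $e_1, e_2 \in e(o_1)$ satisfy $e_1^\#(o'_1) \cap e_2^\#(o'_1) \ni k^j$, and the second $k$-injectivity clause yields $e_1 = e_2$.

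There is no real obstacle: the whole statement is essentially a definitional unfolding, and the only point one must be careful about is the distinction between a depth-$0$ box (a port) and a deeper box (a sequence whose first entry is a depth-$0$ box), which is exactly what rules out the offending identifications.
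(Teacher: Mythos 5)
Your proof is correct and follows essentially the same route as the paper: unfold $\mathcal{M}_0$ to exhibit $k^j \in e^\#(o':o'')$, invoke the third clause of $k$-injectivity to force an identification of box names, and derive the contradiction from the fact that a depth-$0$ box is a port (hence, by the \emph{ad hoc} requirement on $\circ$-PS's, not a sequence) while a deeper box is a sequence. The only difference is that you also spell out the second item (via the second $k$-injectivity clause), which the paper leaves implicit.
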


\begin{proof}
Let $o, o' \in \boxesatzero{R}$ and let $e' \in e(o')$. For any $j \in \mathcal{M}_0(e')$, there exists $o'' \in \boxes{B_R(o')}$ such that $k^j \in e'^\#(o'')$. So, if there exists $j \in \mathcal{M}_0(e')$ such that $e^\#(o) = \{ k^j \}$, then there exists $o'' \in \boxes{R}(o')$ such that $e^\#(o':o'') \cap e^\#(o) \not= \emptyset$; but, by the definition of \emph{$k$-injective experiment} (Definition~\ref{defin: k-injective}), this entails that $o = o':o''$; now, since $\emptysequence \notin \boxes{B_R(o'')}$, we obtain a contradiction with the following requirement of the definition of $\circ$-PS's (Definition~\ref{defin: differential PS}): no $p \in \ports{\groundof{R}}$ is a sequence.
\end{proof}

\begin{lem}\label{lem: M_1}
Let $k > \Card{\boxes{R}{}}$. 
For any $k$-injective pseudo-experiment $e$ of $R$, we have 
\begin{itemize}
\item if $\textit{depth}(R) \leq 1$, then $\mathcal{M}_1(e) = \emptyset$
\item $m_{0, 0}(e) = \Card{\boxesatzero{R}}$
\item 
  $\mathcal{M}_1(e) = \bigcup_{o \in \boxesatzerogeq{R}{1}} \{ j \in \Nat ; e^\#(o) = \{ k^j \} \} \cup \bigcup_{o \in \boxesatzerogeq{R}{2}} \bigcup_{e' \in e(o)} \mathcal{M}_1(e')$
\item $\mathcal{M}_1(e) \subseteq \mathcal{M}_0(e)$.
\end{itemize}
\end{lem}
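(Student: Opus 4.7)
The plan is to prove all four bullets together by induction on $\depthof{R}$, exploiting the disjoint-union decomposition that Fact~\ref{fact: M_0 pairwise disjoint} provides for $\mathcal{M}_0(e)$. In the base case $\depthof{R} \leq 1$, no nontrivial nesting occurs: $\boxes{R} = \boxesatzero{R}$, each $o \in \boxesatzero{R}$ contributes a single $j_o$ (distinct across boxes by the third $k$-injectivity condition), so $\Card{\mathcal{M}_0(e)} = \Card{\boxesatzero{R}} < k$ and its base-$k$ expansion is trivial. All four bullets are then immediate, and the two unions on the right-hand side of the third bullet are over empty index sets.

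For the inductive step I would first write down the disjoint decomposition
\[
\mathcal{M}_0(e) \;=\; \{j_o : o \in \boxesatzero{R}\} \;\sqcup\; \bigsqcup_{o_1 \in \boxesatzerogeq{R}{1}} \bigsqcup_{e_1 \in e(o_1)} \mathcal{M}_0(e_1),
\]
noting that boxes in $\boxesatzerosmaller{R}{1}$ contribute nothing to the inner union (their contents have no inner boxes) and that each $e_1 \in e(o_1)$ is itself $k$-injective for $B_R(o_1)$, so the IH applies. The second bullet follows quickly: by IH $\Card{\mathcal{M}_0(e_1)} \equiv \Card{\boxesatzero{B_R(o_1)}} \pmod{k}$, and since $\Card{e(o_1)} = k^{j_{o_1}}$ with $j_{o_1} \geq 1$, the sum over $e_1$ is a multiple of $k$; hence $\Card{\mathcal{M}_0(e)} \equiv \Card{\boxesatzero{R}} \pmod{k}$, and $\Card{\boxesatzero{R}} \leq \Card{\boxes{R}} < k$ pins $m_{0,0}(e)$ to $\Card{\boxesatzero{R}}$. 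The fourth bullet falls out of the decomposition combined with the IH version of itself.

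For the third bullet I would expand $\Card{\mathcal{M}_0(e_1)} = \Card{\boxesatzero{B_R(o_1)}} + \sum_{j \geq 1} m_{0,j}(e_1) k^j$ (from the IH of the second and third bullets), collect coefficients, and rewrite
\[
\Card{\mathcal{M}_0(e)} \;=\; \Card{\boxesatzero{R}} \;+\; \sum_{j \geq 1} D_j \, k^j,
\]
where
\[
D_j \;=\; \sum_{\substack{o_1 \in \boxesatzerogeq{R}{1}\\ j_{o_1} = j}} \Card{\boxesatzero{B_R(o_1)}} \;+\; \sum_{o_1 \in \boxesatzerogeq{R}{1}} \sum_{e_1 \in e(o_1)} m_{0,j}(e_1).
\]
The heart of the proof, and the main obstacle, will be showing $D_j \leq k-1$ so that no carrying occurs in base $k$ and one reads off $\mathcal{M}_1(e) = \{ j > 0 : D_j \neq 0 \}$ directly. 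Three inputs control carrying: the third $k$-injectivity condition makes $o_1$ with $j_{o_1} = j$ unique, so the first sum has at most one term, bounded by $\Card{\boxes{R}} < k$; the IH of the fourth bullet embeds each $\mathcal{M}_1(e_1)$ inside $\mathcal{M}_0(e_1)$, and then Fact~\ref{fact: M_0 pairwise disjoint} makes these pairwise disjoint across all pairs $(o_1, e_1)$, so the second sum has at most one nonzero term (bounded by $k-1$); and the first clause of the same fact prevents the two sums from being simultaneously nonzero at the same $j$. Reading off $\mathcal{M}_1(e)$ then produces exactly the union claimed, using that depth-$1$ boxes contribute an empty $\mathcal{M}_1(e_1)$ (base case) to absorb the discrepancy between $\boxesatzerogeq{R}{1}$ and $\boxesatzerogeq{R}{2}$ in the recursive term.
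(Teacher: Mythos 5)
Your proof is correct and follows essentially the same route as the paper's: induction on $\depthof{R}$, the decomposition of $\Card{\mathcal{M}_0(e)}$ into $\Card{\boxesatzero{R}} + \sum_{o_1 \in \boxesatzerogeq{R}{1}}\sum_{e_1\in e(o_1)}\Card{\mathcal{M}_0(e_1)}$ justified by Fact~\ref{fact: M_0 pairwise disjoint}, and the two disjointness clauses of that fact (together with the inductive inclusion $\mathcal{M}_1(e_1)\subseteq\mathcal{M}_0(e_1)$) to rule out carrying in base $k$ before reading off $\mathcal{M}_1(e)$. Your explicit bound $D_j\le k-1$ merely makes visible a step the paper leaves implicit.
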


\begin{proof}
The first item follows from the fact that if $\textit{depth}(R) = 0$, then $\Card{\mathcal{M}_0(e)} = 0$ and if $\textit{depth}(R) = 1$, then $\Card{\mathcal{M}_0(e)} = \Card{\boxes{R}{}} < k$. 

We prove the three other items by induction on $\textit{depth}(R)$. We have:
\begin{eqnarray*}
& & \Card{\mathcal{M}_0(e)} \\
& = & \Card{\bigcup_{o \in \boxesatzero{R}} \{ j \in \Nat ; k^j \in e^\#(o) \}} \allowdisplaybreaks \\
& = & \sum_{o \in \boxes{R}{}} \Card{\{ j \in \Nat; k^j \in e^\#(o) \}} \allowdisplaybreaks \\
& = & \sum_{o \in \boxesatzero{R}} \Card{\{ j \in \Nat ; k^j \in e^\#(o) \}} + \sum_{o \in \boxesatzerogeq{R}{1}} \sum_{o' \in \boxes{B_R(o)}{}} \Card{\{ j \in \Nat; k^j \in e^\#(o:o') \}} \allowdisplaybreaks \\
& = & \Card{\boxesatzero{R}} + \sum_{o \in \boxesatzerogeq{R}{1}} \sum_{o' \in \boxes{B_R(o)}{}} \Card{\bigcup_{e' \in e(o)} \{ j \in \Nat; k^j \in {e'}^\#(o') \}} \allowdisplaybreaks \\
& = & \Card{\boxesatzero{R}} + \sum_{o \in \boxesatzerogeq{R}{1}} \sum_{o' \in \boxes{B_R(o)}{}} \sum_{e' \in e(o)} \Card{\{ j \in \Nat; k^j \in {e'}^\#(o') \}} \allowdisplaybreaks \\
& = & \Card{\boxesatzero{R}} + \sum_{o \in \boxesatzerogeq{R}{1}} \sum_{e' \in e(o)} \sum_{o' \in \boxes{B_R(o)}{}} \Card{\{ j \in \Nat; k^j \in {e'}^\#(o') \}} \allowdisplaybreaks \\
& = & \Card{\boxesatzero{R}} + \sum_{o \in \boxesatzerogeq{R}{1}} \sum_{e' \in e(o)} \Card{\bigcup_{o' \in \boxes{B_R(o)}{}} \{ j \in \Nat; k^j \in {e'}^\#(o') \}} \\
& & \textit{(since $(\forall o \in \boxesatzerogeq{R}{1}) (\forall e' \in e(o)) (\forall o' \in \boxes{B_R(o)}{}) {e'}^\#(o') \subseteq e^\#(o:o')$)} \allowdisplaybreaks \\
& = & \Card{\boxesatzero{R}} + \sum_{o \in \boxesatzerogeq{R}{1}} \sum_{e' \in e(o)} \Card{\mathcal{M}_0(e')} \allowdisplaybreaks \\
& = & \Card{\boxesatzero{R}} + \sum_{o \in \boxesatzerogeq{R}{1}} \sum_{e' \in e(o)} (m_{0, 0}(e') + \sum_{j \in \mathcal{M}_1(e')} m_{0, j}(e') \cdot k^j) \allowdisplaybreaks \\
& = & \Card{\boxesatzero{R}} + \sum_{o \in \boxesatzerogeq{R}{1}} \sum_{e' \in e(o)} (\Card{\boxesatzero{B_R(o)}} + \sum_{j \in \mathcal{M}_1(e')} m_{0, j}(e') \cdot k^j)  \\
& & \textit{(by the induction hypothesis)} \allowdisplaybreaks \\
& = & \Card{\boxesatzero{R}} + \sum_{o \in \boxesatzerogeq{R}{1}} (\Card{\boxesatzero{B_R(o)}} \cdot \Card{e(o)} + \sum_{e' \in e(o)} \sum_{j \in \mathcal{M}_1(e')} m_{0, j}(e') \cdot k^j) \allowdisplaybreaks \\
& = & \Card{\boxesatzero{R}} + \sum_{o \in \boxesatzerogeq{R}{1}} \Card{\boxesatzero{B_R(o)}} \cdot \Card{e(o)} + \sum_{o \in \boxesatzerogeq{R}{2}} \sum_{e' \in e(o)} \sum_{j \in \mathcal{M}_1(e')} m_{0, j}(e') \cdot k^j \\
& & \textit{(by the first item).}
\end{eqnarray*}
By the induction hypothesis, for any $o \in \boxesatzero{R}$, for any $e' \in e(o)$, we have $\mathcal{M}_1(e') \subseteq \mathcal{M}_0(e')$, hence, by Fact~\ref{fact: M_0 pairwise disjoint},
\begin{itemize}
\item for any $o \in \boxesatzero{R}$, for any $o' \in \boxesatzero{R}$, for any $e' \in e(o')$, we have $(\forall j \in \mathcal{M}_1(e')) e^\#(o) \not= \{ k^j \}$;
\item and, for any $o_1, o_2 \in \boxesatzero{R}$, for any $e_1 \in e(o_1)$, for any $e_2 \in e(o_2)$, we have $\mathcal{M}_1(e_1) \cap \mathcal{M}_1(e_2) \not= \emptyset \Rightarrow (o_1 = o_2 \textit{ and } e_1 = e_2)$.
\end{itemize}
We obtain $\mathcal{M}_1(e) = \bigcup_{o \in \boxesatzerogeq{R}{1}} \{ j \in \Nat ; e^\#(o) = \{ k^j \} \} \cup \bigcup_{o \in \boxesatzerogeq{R}{2}} \bigcup_{e' \in e(o)} \mathcal{M}_1(e')$. Since, by the induction hypothesis, for any $o \in \boxesatzerogeq{R}{2}$, for any $e' \in e(o)$, we have $\mathcal{M}_1(e') \subseteq \mathcal{M}_0(e')$, we obtain $\mathcal{M}_1(e) \subseteq \mathcal{M}_0(e)$.
\end{proof}


\begin{lem}\label{lemma: M_j(e)}
Let $R$ be a $\circ$-PS. 
Let $k > \Card{\boxes{R}{}}$. 
For any $k$-injective pseudo-experiment $e$ of $R$, for any $i \in \Nat$, we have 
\begin{itemize}
\item for any $o, o' \in \boxesatzero{R}$, for any $e' \in e(o')$, we have $(\forall j \in \mathcal{M}_i(e')) e^\#(o) \not= \{ k^j \}$;
\item for any $o_1, o_2 \in \boxesatzero{R}$, for any $e_1 \in e(o_1)$, for any $e_2 \in e(o_2)$, we have $\mathcal{M}_i(e_1) \cap \mathcal{M}_i(e_2) \not= \emptyset \Rightarrow (o_1 = o_2 \textit{ and } e_1 = e_2)$;
\item if $\textit{depth}(R) \leq i+1$, then $\mathcal{M}_{i+1}(e) = \emptyset$;
\item $m_{i, 0}(e) = \Card{\boxesatzerogeq{R}{i}}$;
\item $\mathcal{M}_{i+1}(e) = \bigcup_{o \in \boxesatzerogeq{R}{i+1}} \{ j \in \Nat ; e^\#(o) = \{ k^j \} \} \cup \bigcup_{o \in \boxesatzerogeq{R}{i+2}} \bigcup_{e' \in e(o)} \mathcal{M}_{i+1}(e')$;
\item and $\mathcal{M}_{i+1}(e) \subseteq \mathcal{M}_{i}(e)$.
\end{itemize}
\end{lem}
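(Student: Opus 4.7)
My plan is to lift the argument of Lemma~\ref{lem: M_1} one level at a time by induction on $i \geq 0$, where the base case $i = 0$ is already handled: items 1 and 2 at level $0$ are exactly Fact~\ref{fact: M_0 pairwise disjoint}, while items 3--6 at level $0$ are the content of Lemma~\ref{lem: M_1}. So I go directly to the inductive step, assuming all six items at level $i$ and deriving them at level $i+1$.

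Inside the inductive step I first dispatch items 1 and 2 at level $i+1$ with no further work on the depth. By item 6 of the outer hypothesis, applied to every sub-pseudo-experiment $e' \in e(o')$, one has $\mathcal{M}_{i+1}(e') \subseteq \mathcal{M}_i(e')$. Items 1 and 2 at level $i+1$ are then immediate consequences of items 1 and 2 at level $i$ by restriction, because if $\mathcal{M}_{i+1}(e_1) \cap \mathcal{M}_{i+1}(e_2) \neq \emptyset$ then also $\mathcal{M}_i(e_1) \cap \mathcal{M}_i(e_2) \neq \emptyset$, and similarly for the clash with $e^\#(o)$.

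Next I establish items 3, 4, 5, 6 at level $i+1$ by a nested induction on $\depthof{R}$, replaying the computation in the proof of Lemma~\ref{lem: M_1} verbatim with ``$\mathcal{M}_0$'' replaced by ``$\mathcal{M}_{i+1}$'' throughout. Starting from item 5 of the outer hypothesis
\[
\mathcal{M}_{i+1}(e) = \bigcup_{o \in \boxesatzerogeq{R}{i+1}} \{ j : e^\#(o) = \{k^j\}\} \;\cup\; \bigcup_{o \in \boxesatzerogeq{R}{i+2}} \bigcup_{e' \in e(o)} \mathcal{M}_{i+1}(e'),
\]
the disjointness of the outer union follows from item 1 at level $i+1$ (already proven), the disjointness of the inner union from item 2 at level $i+1$, and the disjointness of the first union from the $k$-injectivity clause $e^\#(o_1) \cap e^\#(o_2) \neq \emptyset \Rightarrow o_1 = o_2$ applied to the singletons $e^\#(o) = \{\Card{e(o)}\}$ for $o \in \boxesatzero{R}$. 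Taking cardinalities and unfolding $\Card{\mathcal{M}_{i+1}(e')}$ via its base-$k$ expansion, then substituting $m_{i+1, 0}(e') = \Card{\boxesatzerogeq{B_R(o)}{i+1}}$ from the depth-induction hypothesis (item 4 at level $i+1$ for a strictly smaller depth), yields
\[
\Card{\mathcal{M}_{i+1}(e)} = \Card{\boxesatzerogeq{R}{i+1}} + k \cdot N
\]
for some integer $N \geq 0$, where the ``$k \cdot N$'' term collects both the factors $\Card{e(o)}$ (which are powers $k^j$ with $j > 0$ by $k$-injectivity) and the higher digits coming from the inner induction hypothesis. Because $\Card{\boxesatzerogeq{R}{i+1}} \leq \Card{\boxes{R}} < k$, this \emph{is} the units digit of the base-$k$ expansion, giving item 4 at level $i+1$ directly. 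Item 3 at level $i+1$ reads off by noting that if $\depthof{R} \leq i+2$ then $\boxesatzerogeq{R}{i+2} = \emptyset$, so $N = 0$ and every digit of index $> 0$ vanishes. Item 5 at level $i+1$ is obtained by identifying the set of nonzero higher digits with the displayed right-hand side (using items 1 and 2 at level $i+1$ again to guarantee disjointness of the two sources of contributions). Item 6 at level $i+1$ then follows from item 5 together with ${e'}^\#(o'') \subseteq e^\#(o':o'')$ and a last appeal to item 6 at level $i+1$ from the depth-induction hypothesis applied to $e'$.

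The main obstacle is purely organisational: inside each inductive step one must establish items 1 and 2 at level $i+1$ \emph{before} doing any cardinality computation, since disjointness of the union is precisely what turns set-cardinality into a tidy base-$k$ arithmetic; once that order is respected, no new idea beyond the computation of Lemma~\ref{lem: M_1} is required.
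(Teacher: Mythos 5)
Your proposal is correct and follows essentially the same route as the paper: an outer induction on $i$ with base case given by Fact~\ref{fact: M_0 pairwise disjoint} and Lemma~\ref{lem: M_1}, items 1--2 at level $i+1$ obtained by restriction from level $i$ via the inclusion $\mathcal{M}_{i+1}(e') \subseteq \mathcal{M}_i(e')$, and items 3--6 by a nested induction on $\depthof{R}$ that replays the base-$k$ cardinality computation of Lemma~\ref{lem: M_1}. The only (immaterial) differences are organisational: the paper disposes of item 3 before entering the nested depth induction, and your passing reference to ${e'}^\#(o'') \subseteq e^\#(o'{:}o'')$ for item 6 is unnecessary there, the relevant ingredient being just $\mathcal{M}_{i+2}(e') \subseteq \mathcal{M}_{i+1}(e')$ from the depth-induction hypothesis, which you also invoke.
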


\begin{proof}
By induction on $i$. 

If $i = 0$, then the items hold by Fact~\ref{fact: M_0 pairwise disjoint} and Lemma~\ref{lem: M_1}. Now, we assume that they hold for some $i \in \Nat$.

Since $\mathcal{M}_{i+1}(e) \subseteq \mathcal{M}_i(e)$, we have 
\begin{enumerate}
\item for any $o, o' \in \boxesatzero{R}$, for any $e' \in e(o')$, we have $(\forall j \in \mathcal{M}_{i+1}(e')) e^\#(o) \not= \{ k^j \}$;
\item and, for any $o_1, o_2 \in \boxesatzero{R}$, for any $e_1 \in e(o_1)$, for any $e_2 \in e(o_2)$, we have $\mathcal{M}_{i+1}(e_1) \cap \mathcal{M}_{i+1}(e_2) \not= \emptyset \Rightarrow (o_1 = o_2 \textit{ and } e_1 = e_2)$.
\end{enumerate}
If $\textit{depth}(R) \leq i+2$, then, since, $\mathcal{M}_{i+1}(e) = \bigcup_{o \in \boxesatzerogeq{R}{i+1}} \{ j \in \Nat ; e^\#(o) = \{ k^j \} \} \cup \bigcup_{o \in \boxesatzerogeq{R}{i+2}} \bigcup_{e' \in e(o)} \mathcal{M}_{i+1}(e')$, 
we obtain $\Card{\mathcal{M}_{i+1}(e)} = \Card{\bigcup_{o \in \boxesatzerogeq{R}{i+1}} \{ j \in \Nat ; e^\#(o) = \{ k^j \} \}} < k$, hence $\mathcal{M}_{i+2}(e) = \emptyset$.

Now, we prove, by by induction on $\textit{depth}(R)$, that:
\begin{itemize}
\item $m_{i+1, 0}(e) = \Card{\boxesatzerogeq{R}{i+1}}$
\item $\mathcal{M}_{i+2}(e) = \bigcup_{o \in \boxesatzerogeq{R}{i+2}} \{ j \in \Nat ; e^\#(o) = \{ k^j \} \} \cup \bigcup_{o \in \boxesatzerogeq{R}{i+3}} \bigcup_{e' \in e(o)} \mathcal{M}_{i+2}(e')$
\item and $\mathcal{M}_{i+2}(e) \subseteq \mathcal{M}_{i+1}(e)$.
\end{itemize}
We have
\begin{eqnarray*}
& & \Card{\mathcal{M}_{i+1}(e)} \\
& = & \Card{\boxesatzerogeq{R}{i+1}} + \sum_{o \in \boxesatzerogeq{R}{i+2}} \sum_{e' \in e(o)} \Card{\mathcal{M}_{i+1}(e')} \\
& & \textrm{(by 1. and 2.)}\\
& = & \Card{\boxesatzerogeq{B_R(o)}{i+1}} + \sum_{o \in \boxesatzerogeq{R}{i+2}} \sum_{e' \in e(o)} (m_{i+1, 0}(e') + \sum_{j \in \mathcal{M}_{i+2}(e')} m_{i+1, j}(e') \cdot k^j)\\
& = & \Card{\boxesatzerogeq{R}{i+1}} + \sum_{o \in \boxesatzerogeq{R}{i+2}} \sum_{e' \in e(o)} (\Card{\boxesatzerogeq{B_R(o)}{i+1}} + \sum_{j \in \mathcal{M}_{i+2}(e')} m_{i+1, j}(e') \cdot k^j)\\
& & \textrm{(by the induction hypothesis)} \allowdisplaybreaks \\
& = & \Card{\boxesatzerogeq{R}{i+1}} + \sum_{o \in \boxesatzerogeq{R}{i+2}} (\Card{\boxesatzerogeq{B_R(o)}{i+1}} \cdot \Card{e(o)} + \sum_{e' \in e(o)} \sum_{j \in \mathcal{M}_{i+2}(e')} m_{i+1, j}(e') \cdot k^j) \allowdisplaybreaks \\
& = & \Card{\boxesatzerogeq{R}{i+1}} + \sum_{o \in \boxesatzerogeq{R}{i+2}} \Card{\boxesatzerogeq{B_R(o)}{i+1}} \cdot \Card{e(o)} \\
& & + \sum_{o \in \boxesatzerogeq{R}{i+3}} \sum_{e' \in e(o)} \sum_{j \in \mathcal{M}_{i+2}(e')} m_{i+1, j}(e') \cdot k^j \\
& & \textit{(since, if $\textit{depth}(B_R(o)) = i+2$ and $e' \in e(o)$, then $\mathcal{M}_{i+2}(e') = \emptyset$).}
\end{eqnarray*}
By the induction hypothesis, for any $o \in \boxesatzerogeq{R}{i+3}$, for any $e' \in e(o)$, we have $\mathcal{M}_{i+2}(e') \subseteq \mathcal{M}_{i+1}(e')$, hence, by 1. and 2., 
\begin{itemize}
\item for any $o, o' \in \boxesatzero{R}$, for any $e' \in e(o')$, we have $(\forall j \in \mathcal{M}_{i+2}(e')) e^\#(o) \not= \{ k^j \}$;
\item and, for any $o_1, o_2 \in \boxesatzero{R}$, for any $e_1 \in e(o_1)$, for any $e_2 \in e(o_2)$, we have $\mathcal{M}_{i+2}(e_1) \cap \mathcal{M}_{i+2}(e_2) \not= \emptyset \Rightarrow (o_1 = o_2 \textit{ and } e_1 = e_2)$.
\end{itemize}
We obtain 
\begin{itemize}
\item $m_{i+1, 0}(e) = \Card{\boxesatzerogeq{R}{i+1}}$
\item and $\mathcal{M}_{i+2}(e) = \bigcup_{o \in \boxesatzerogeq{R}{i+2}} \{ j \in \Nat ; e^\#(o) = \{ k^j \} \} \cup \bigcup_{o \in \boxesatzerogeq{R}{i+3}} \bigcup_{e' \in e(o)} \mathcal{M}_{i+2}(e')$.
\end{itemize}
Since, by the induction hypothesis, for any $o \in \boxesatzerogeq{R}{i+3}$, for any $e' \in e(o)$, we have $\mathcal{M}_{i+2}(e') \subseteq \mathcal{M}_{i+1}(e')$, we obtain $\mathcal{M}_{i+2}(e) \subseteq \mathcal{M}_{i+1}(e)$.
\end{proof}

\textbf{Proof of Lemma~\ref{lem: M_i}}

\begin{proof}
By induction on $\textit{depth}(R)$. We have
\begin{eqnarray*}
\mathcal{M}_i(e) & = & \bigcup_{o \in \boxesatzerogeq{R}{i}} \{ j \in \Nat ; e^\#(o) = \{ k ^j \} \} \cup \bigcup_{o \in \boxesatzerogeq{R}{i+1}} \bigcup_{e' \in e(o)} \mathcal{M}_i(e')\\
& & \textit{(by Lemma~\ref{lemma: M_j(e)})} \allowdisplaybreaks\\
& = & \bigcup_{o \in \boxesatzerogeq{R}{i}} \{ j \in \Nat ; e^\#(o) = \{ k^j \} \} \cup \bigcup_{o \in \boxesatzerogeq{R}{i+1}} \bigcup_{e' \in e(o)} \bigcup_{o' \in \boxesgeq{B_R(o)}{i}} \{ j \in \Nat ; k^j \in {e'}^\#(o') \}\\
& & \textit{(by the induction hypothesis)} \allowdisplaybreaks\\
& = & \bigcup_{o \in \boxesatzerogeq{R}{i}} \{ j \in \Nat ; e^\#(o) = \{ k^j \} \} \cup \bigcup_{o \in \boxesatzerogeq{R}{i+1}} \bigcup_{o' \in \boxesgeq{B_R(o)}{i}} \{ j \in \Nat ; k^j \in {e}^\#(o:o') \}\\
& = & \bigcup_{o \in \boxesgeq{R}{i}} \{ j \in \Nat ; k^j \in e^\#(o) \} .
\end{eqnarray*}
\end{proof}

\section{Proof of Proposition~\ref{prop: critical ports below new boxes}}

We prove Proposition~\ref{prop: critical ports below new boxes} through Lemma~\ref{lem: arity{taylor{e}{i}}(c) for c at depth 0}.

\begin{definition}
Let $R$ be a $\circ$-PS. 
For any $q \in \portsatzero{R}$, for any $i \in \Nat$, we set 
$\arity{R, i}(q) = \arity{\groundof{R}}(q) + \sum_{\substack{o \in \boxesatzerosmaller{R}{i}\\ q \in \conclusionscirc{B_R(o)}}} \arity{B_R(o)}(q)$\\
$+ \sum_{o \in \boxesatzerosmaller{R}{i}} \Card{\left\lbrace p \in \conclusionsnotcirc{B_R(o)} ; b_{R}(o)(p) = q  \right\rbrace}$. 

For any $p \in \exponentialports{\groundof{R}}$, for any $i \in \Nat$, we define, by induction on $\depthof{R}$, a subset $\mathcal{B}_R^{\geq i}(p)$ of $\boxesgeq{R}{i}$: we set 
\begin{eqnarray*}
\mathcal{B}_R^{\geq i}(p) & = & \{ o_1 \in \boxesatzerogeq{R}{i} ; p \in \im{\restriction{b_R(o_1)}{\conclusionsnotcirc{B_R(o_1)}}}  \\
& & \textit{ or } (p \in \conclusionscirc{B_R(o_1)} \textit{ and } a_{B_R(o_1), i}(p) > 0) \}\\ 
& & \cup \bigcup_{\substack{o_1 \in \boxesatzerogeq{R}{i+1}\\ p \in \conclusionscirc{B_R(o_1)}}} \{ o_1 : o; o \in \mathcal{B}_{B_R(o_1)}^{\geq i}(p) \}
\end{eqnarray*}
\end{definition}

\begin{rem}
If $o_1 \in \exactboxesatzero{R}{i}$,  then $\im{b_R(o_1)} = \{ p \in \exponentialports{\groundof{R}} ; $ $o_1 \in \mathcal{B}_R^{\geq i}(p) \}$.
\end{rem}

\begin{lem}\label{lem: arity{taylor{e}{i}}(c) for c at depth 0}
Let $R$ be a $\circ$-PS. Let $k > \cosize{R}$. Let $e$ be a $k$-injective pseudo-experiment of $R$. Let $q \in \portsatzero{R}$. Let $i \in \Nat$. Then, for any $j > 0$, we have $q \in \criticalports{\taylor{e}{i}}{k}{j}$ if, and only if, there exists $o \in \mathcal{B}_R^{\geq i}(q)$ such that $k^j \in e^\#(o)$. Moreover we have $\arity{\taylor{e}{i}}(q) \textit{ mod } k = \arity{R, i}(q)$.
\end{lem}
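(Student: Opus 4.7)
The plan is to prove both assertions simultaneously by induction on $\depthof{R}$, by first obtaining an explicit decomposition of $\arity{\taylor{e}{i}}(q)$.

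\textbf{Arity decomposition.} First I would establish, by induction on $\depthof{R}$, the formula
\[
\arity{\taylor{e}{i}}(q) \;=\; \arity{R,i}(q) \;+\; \sum_{o_1 \in \boxesatzerogeq{R}{i}} c_{o_1}(q)\cdot\Card{e(o_1)} \;+\; \sum_{\substack{o_1 \in \boxesatzerogeq{R}{i+1}\\ q \in \conclusionscirc{B_R(o_1)}}}\ \sum_{e_1 \in e(o_1)} \arity{\taylor{e_1}{i}}(q),
\]
where $c_{o_1}(q) = \Card{\{p \in \conclusionsnotcirc{B_R(o_1)} : b_R(o_1)(p)=q\}}$ if $q \notin \conclusionscirc{B_R(o_1)}$ and $c_{o_1}(q) = \Card{\{p \in \conclusionsnotcirc{B_R(o_1)} : b_R(o_1)(p)=q\}} + \arity{B_R(o_1),i}(q)$ if $q \in \conclusionscirc{B_R(o_1)}$. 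This is a direct unrolling of the definition of $\arity{}$ applied to the inductive construction of $\taylor{e}{i}$ in Definition~\ref{defin: Taylor}: boxes of depth $<i$ are left intact (contributing to $\arity{R,i}(q)$), while boxes of depth $\geq i$ are replaced by $\Card{e(o_1)}$ glued copies of their interior; non-$\circ$ conclusions targeting $q$ yield $\Card{e(o_1)}$ wires each, and $\circ$-conclusions targeting $q$ yield an inner copy to which the induction hypothesis applies.

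\textbf{The congruence mod $k$.} Since $e$ is $k$-injective, $\Card{e(o_1)} \in e^\#(o_1) \subseteq \{k^j : j>0\}$ for every $o_1 \in \boxesatzerogeq{R}{i}$. Therefore both multiplicative summands above are divisible by $k$, and by induction hypothesis $\arity{\taylor{e_1}{i}}(q) \equiv \arity{B_R(o_1),i}(q)\pmod{k}$. Unrolling the definition of $\arity{R,i}$ we see that the ``$\circ$-conclusion'' contributions appearing recursively are precisely absorbed in $\arity{R,i}(q)$, and by $k > \cosize{R}$ we have $\arity{R,i}(q) < k$; hence $\arity{\taylor{e}{i}}(q) \bmod k = \arity{R,i}(q)$.

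\textbf{Characterisation of critical ports.} For $j>0$, write out the decomposition in base $k$. Unrolling the recursion and using Lemma~\ref{lem: M_i} together with the third clause of Definition~\ref{defin: k-injective} (pairwise disjointness of the sets $e^\#(o)$ as $o$ ranges over $\boxes{R}$), each contribution $k^{j'}$ that appears comes from a uniquely determined box $o$; more precisely, $k^j$ contributes to $\arity{\taylor{e}{i}}(q)$ exactly once for each $o \in \mathcal{B}_R^{\geq i}(q)$ with $k^j \in e^\#(o)$. Since $k > \cosize{R}$, the total coefficient of each $k^j$ in the expansion stays below $k$, so no carries can occur and the base-$k$ digit $m_{k,j}(\taylor{e}{i})(q)$ equals the number of such boxes. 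Hence $q \in \criticalports{\taylor{e}{i}}{k}{j}$ iff at least one such $o$ exists.

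\textbf{Main obstacle.} The delicate point is the recursive case in the arity formula: when $q \in \conclusionscirc{B_R(o_1)}$ for some $o_1 \in \boxesatzerogeq{R}{i+1}$, the port $q$ lives simultaneously at depth $0$ of $R$ and of $B_R(o_1)$, and its ``copies'' in $\taylor{e}{i}$ are identified via $b_R(o_1)$. One has to verify that the recursive definition of $\mathcal{B}_R^{\geq i}(q)$ matches exactly the boxes whose contributions are collected by iterating the formula, and that the ``no carries'' argument above survives the fact that the same $q$ is counted through several nested boxes. This is where the bound $k > \cosize{R}$ together with the full strength of $k$-injectivity (including its recursive clause on $\boxes{R}$) is crucial.
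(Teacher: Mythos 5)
Your overall strategy coincides with the paper's: an exact decomposition of $\arity{\taylor{e}{i}}(q)$ (the paper's Fact~\ref{fact: arity}), the congruence $\arity{\taylor{e}{i}}(q) \bmod k = \arity{R,i}(q)$ by induction on $\depthof{R}$ using $\arity{R,i}(q) \le \arity{R}(q) \le \cosize{R} < k$, and then a carry-free base-$k$ digit analysis matching nonzero digits to boxes of $\mathcal{B}_R^{\geq i}(q)$ via $k$-injectivity. However, your decomposition is not correct as written: for $o_1 \in \boxesatzerogeq{R}{i+1}$ with $q \in \conclusionscirc{B_R(o_1)}$ you count the quantity $\arity{B_R(o_1),i}(q) \cdot \Card{e(o_1)}$ twice, once explicitly inside $c_{o_1}(q)\cdot\Card{e(o_1)}$ and once again inside $\sum_{e_1 \in e(o_1)} \arity{\taylor{e_1}{i}}(q)$, since by your own induction hypothesis each summand $\arity{\taylor{e_1}{i}}(q)$ already has residue exactly $\arity{B_R(o_1),i}(q)$ modulo $k$. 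The correct identity keeps only $\Card{\{p \in \conclusionsnotcirc{B_R(o_1)} : b_R(o_1)(p)=q\}}\cdot\Card{e(o_1)}$ in the multiplicative part and lets the recursive sum range over all of $\boxesatzerogeq{R}{i}$; the boxes of depth exactly $i$, for which $\taylor{e_1}{i} = B_R(o_1)$ and $\arity{B_R(o_1),i}(q) = \arity{B_R(o_1)}(q)$, are the only ones your extra term in $c_{o_1}$ correctly accounts for.

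This slip is harmless for the congruence (the spurious term is divisible by $k$), but it undermines the digit analysis as you run it: with your formula the digit $j_1$ determined by $k^{j_1} = \Card{e(o_1)}$ receives up to $\Card{\{\dots\}} + 2\,\arity{B_R(o_1),i}(q)$, which can reach $2\,\cosize{R} \ge k$ under the sole hypothesis $k > \cosize{R}$, so carries can occur and the identification of nonzero digits with boxes is no longer justified. With the correct formula each digit $j>0$ receives the contribution of at most one box (by the third clause of Definition~\ref{defin: k-injective} at most one $o$ satisfies $k^j \in e^\#(o)$), and that contribution is bounded by $\arity{R}(q) \le \cosize{R} < k$, which is exactly the no-carry argument you need. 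A secondary point: you appeal to Lemma~\ref{lem: M_i}, whose hypothesis $k > \Card{\boxes{R}}$ is not among the assumptions of the present lemma; only $k > \cosize{R}$ is available, and the clauses of $k$-injectivity suffice on their own.
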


\subsection{Proof of Lemma~\ref{lem: arity{taylor{e}{i}}(c) for c at depth 0}}

\begin{fact}\label{fact: arity}
Let $R$ be a $\circ$-PS. Let $e$ be a pseudo-experiment of $R$. Let $p \in \portsatzero{R}$. Let $i \in \Nat$. 
If $p \notin \conclusionscirc{R}$, then
\begin{eqnarray*}
\arity{\taylor{e}{i}}(p) & = & \arity{R, i}(p) \\
& & + \sum_{o_1 \in \boxesatzerogeq{R}{i}} \Card{\{ q \in \conclusionsnotcirc{B_R(o_1)} ; b_R(o_1)(q) = p \}\}} \cdot \Card{e(o_1)}
\end{eqnarray*}
If $p \in \conclusionscirc{R}$, then
\begin{eqnarray*}
\arity{\taylor{e}{i}}(p) & = & \arity{R, i}(p) \\
& & + \sum_{o_1 \in \boxesatzerogeq{R}{i}} \Card{\{ q \in \conclusionsnotcirc{B_R(o_1)} ; b_R(o_1)(q) = p \}\}} \cdot \Card{e(o_1)}\\
& & + \sum_{o_1 \in \boxesatzerogeq{R}{i}} \sum_{e_1 \in e(o_1)} \arity{\taylor{e_1}{i}}(p)
\end{eqnarray*}
\end{fact}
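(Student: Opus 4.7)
The plan is to prove this by a direct calculation that unfolds the definition of $\arity{\taylor{e}{i}}(p)$ according to Definition~\ref{defin: Taylor} and the recursive definition of $\arity{}$ for a differential $\circ$-PS, and then regroup the resulting terms into contributions from ``kept'' and ``expanded'' boxes of $R$. No induction on $\depthof{R}$ is needed for the statement itself, since the third summand on the right-hand side already refers to $\arity{\taylor{e_1}{i}}(p)$ and does not have to be unfolded further. First I would split $\arity{\taylor{e}{i}}(p)$ into its three defining pieces: the raw count $\arity{\groundof{\taylor{e}{i}}}(p)$ of wires of $\taylor{e}{i}$ at depth $0$ targeting $p$; the non-$\circ$-conclusion box contribution $\sum_{o \in \boxesatzero{\taylor{e}{i}}} \Card{\{q \in \conclusionsnotcirc{B_{\taylor{e}{i}}(o)};\, b_{\taylor{e}{i}}(o)(q) = p\}}$; and, when $p$ lies in the $\circ$-conclusion image of some $o \in \boxesatzero{\taylor{e}{i}}$, the recursive term $\arity{B_{\taylor{e}{i}}(o)}(p_{\taylor{e}{i}, o})$. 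Using the explicit descriptions of $\mathcal{W}_{e,i}$, $\mathcal{B}_{e,i}$, $t_{e,i}$ and $b_{e,i}$, each piece further decomposes according to whether a given wire or box of $\taylor{e}{i}$ comes from a kept box in $\boxesatzerosmaller{R}{i}$ or from an expansion $(o_1, e_1):\cdots$ with $o_1 \in \boxesatzerogeq{R}{i}$.

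Next I would check that the ``kept'' contributions recombine into $\arity{R, i}(p)$: the wires of $R$ targeting $p$ give $\arity{\groundof{R}}(p)$; the non-$\circ$-conclusion contributions of boxes in $\boxesatzerosmaller{R}{i}$ give $\sum_{o \in \boxesatzerosmaller{R}{i}} \Card{\{q \in \conclusionsnotcirc{B_R(o)};\, b_R(o)(q) = p\}}$; and, using that for a $\circ$-PS $\restriction{b_R(o)}{\conclusionscirc{B_R(o)}}$ is the identity and hence $p_{R, o} = p$, the $\circ$-conclusion contributions of kept boxes give $\sum_{o \in \boxesatzerosmaller{R}{i},\, p \in \conclusionscirc{B_R(o)}} \arity{B_R(o)}(p)$. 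This sum is exactly $\arity{R, i}(p)$ by its definition. Turning to the ``expansion'' contributions, for a fixed $o_1 \in \boxesatzerogeq{R}{i}$ and each copy $e_1 \in e(o_1)$, the wires of $\taylor{e}{i}$ of the form $(o_1, e_1) : w'$ with $w' \in \conclusionsnotcirc{B_R(o_1)}$ and $b_R(o_1)(w') = p$ contribute (via the third case of $t_{e, i}$) exactly $\Card{\{q \in \conclusionsnotcirc{B_R(o_1)};\, b_R(o_1)(q) = p\}}$ wires. Summing over $e_1 \in e(o_1)$ gives the factor $\Card{e(o_1)}$, producing the middle summand of the claimed formula in both cases.

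To obtain the extra summand $\sum_{o_1 \in \boxesatzerogeq{R}{i}} \sum_{e_1 \in e(o_1)} \arity{\taylor{e_1}{i}}(p)$ when $p \in \conclusionscirc{R}$, I would bundle the remaining ``expansion'' pieces from (i), (ii) and (iii). Since $\restriction{b_R(o_1)}{\conclusionscirc{B_R(o_1)}}$ is the identity, the only way an expansion piece inside $B_R(o_1)$ can target $p \in \portsatzero{R}$ through a $\circ$-conclusion of $B_R(o_1)$ is to have $p \in \conclusionscirc{B_R(o_1)}$, which forces $p \in \conclusionscirc{R}$; conversely, if $p \notin \conclusionscirc{R}$, no such term appears and one obtains the first displayed formula of the statement. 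Under $p \in \conclusionscirc{R}$, and for each $e_1 \in e(o_1)$: the wires of $\taylor{e_1}{i}$ whose $t_{e_1, i}$-target is $p$ contribute $\arity{\groundof{\taylor{e_1}{i}}}(p)$ wires of $\taylor{e}{i}$ targeting $p$ (second case of $t_{e, i}$); the non-$\circ$-conclusions of boxes $o' \in \mathcal{B}_{e_1, i}$ with $b_{e_1, i}(o')(q) = p$ reproduce the middle summand of $\arity{\taylor{e_1}{i}}(p)$ (second case of $b_{e, i}$); and the $\circ$-conclusion contributions of those boxes reproduce the last summand $\arity{B_{\taylor{e_1}{i}}(o')}(p_{\taylor{e_1}{i}, o'})$. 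By the very definition of $\arity{\taylor{e_1}{i}}(p)$, these three pieces assemble into $\arity{\taylor{e_1}{i}}(p)$, and summing over $e_1 \in e(o_1)$ and $o_1 \in \boxesatzerogeq{R}{i}$ yields the last summand of the second displayed formula. The main obstacle is purely notational: one must carefully track the identification in case 2 of the definition of $b_{e, i}$, under which an output value lying in $\conclusionscirc{B_R(o_1)}$ is identified with the corresponding $\circ$-port of the enclosing structure; once this identification is written out, the three-way decomposition above collapses into the claimed equalities.
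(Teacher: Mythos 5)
Your proposal is correct and follows essentially the same route as the paper: a direct unfolding of the definitions of $\arity{\taylor{e}{i}}$, $t_{e,i}$ and $b_{e,i}$, regrouping the terms into the kept boxes (yielding $\arity{R,i}(p)$), the non-$\circ$-conclusions of expanded boxes (yielding the $\Card{e(o_1)}$ summand), and, when $p \in \conclusionscirc{R}$, the pieces reassembling into $\arity{\taylor{e_1}{i}}(p)$. The paper's own proof merely records the key combinatorial identity relating $\conclusionsnotcirc{B_{\taylor{e}{i}}((o_1,e_1){:}o')}$ to the corresponding sets in $\taylor{e_1}{i}$ via the $\circ$-conclusions of $B_R(o_1)$ and leaves the rest implicit, so your write-up is simply a more explicit version of the same argument.
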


\begin{proof}
For any $o_1 \in \boxesatzerogeq{R}{i}$, for any $e_1 \in e(o_1)$, we have
\begin{align*}
& \{ p \in \conclusionsnotcirc{B_{\taylor{e}{i}}((o_1, e_1):o')} ; b_{\taylor{e}{i}}((o_1, e_1):o')(p) = c \} \\
= & \bigcup_{\begin{array}{c} q \in \conclusionscirc{B_R(o_1)}\\ b_R(o_1)(p) = c \end{array}} \{ p \in \conclusionsnotcirc{B_{\taylor{e_1}{i}}(o')} ; b_{\taylor{e_1}{i}}(o')(p) = q \}
\end{align*}
hence
\begin{align*}
& \Card{\{ p \in \conclusionsnotcirc{B_{\taylor{e}{i}}((o_1, e_1):o')} ; b_{\taylor{e}{i}}((o_1, e_1):o')(p) = c \}} \\
= & \sum_{\begin{array}{c} q \in \conclusionscirc{B_R(o_1)}\\ b_R(o_1)(p) = c \end{array}} \Card{\{ p \in \conclusionsnotcirc{B_{\taylor{e_1}{i}}(o')} ; b_{\taylor{e_1}{i}}(o')(p) = q \}}
\end{align*}
\end{proof}

\begin{lem}\label{lemma: m_{k, 0}(c)}
Let $R$ be a $\circ$-PS. Let $k > \cosize{R}$. Let $e$ be a $k$-injective pseudo-experiment of $R$. Let $p \in \portsatzero{R}$. Let $i \in \Nat$. Then $\arity{\taylor{e}{i}}(p) \textit{ mod }k = \arity{R, i}(p)$.
\end{lem}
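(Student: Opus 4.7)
The plan is induction on $\depthof{R}$, invoking Fact~\ref{fact: arity} at the inductive step. The arithmetic core of the argument is the observation that, by Definition~\ref{defin: k-injective}, for every $o_1 \in \boxesatzero{R}$ one has $\Card{e(o_1)} \in e^{\#}(o_1)$, hence $\Card{e(o_1)} = k^{j}$ for some $j > 0$; in particular, $\Card{e(o_1)}$ is divisible by $k$. The cosize bound $\arity{R, i}(p) \leq \arity{R}(p) \leq \cosize{R} < k$ is what lets one read a mod-$k$ congruence as an equality of remainders.

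For the base case $\depthof{R} = 0$ there is nothing to do: $\boxesatzero{R} = \emptyset$, so Fact~\ref{fact: arity} degenerates to $\arity{\taylor{e}{i}}(p) = \arity{\groundof{R}}(p) = \arity{R, i}(p)$, and this integer is already smaller than $k$.

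For the inductive step, I split on whether $p$ is a $\circ$-conclusion. If $p \notin \conclusionscirc{R}$, Fact~\ref{fact: arity} writes $\arity{\taylor{e}{i}}(p)$ as $\arity{R, i}(p)$ plus a sum indexed by $o_1 \in \boxesatzerogeq{R}{i}$, each summand carrying $\Card{e(o_1)}$ as a factor; so the whole sum is $\equiv 0 \pmod k$, and the cosize bound finishes. If $p \in \conclusionscirc{R}$, the same formula adds an extra block $\sum_{o_1 \in \boxesatzerogeq{R}{i}} \sum_{e_1 \in e(o_1)} \arity{\taylor{e_1}{i}}(p)$ (with the understanding that the inner term vanishes unless $p \in \conclusionscirc{B_R(o_1)}$, in which case $p$ really is a port of $\taylor{e_1}{i}$). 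Each such $e_1$ inherits $k$-injectivity as a pseudo-experiment of $B_R(o_1)$, and $\cosize{B_R(o_1)} \leq \cosize{R} < k$, so the induction hypothesis yields $\arity{\taylor{e_1}{i}}(p) \equiv \arity{B_R(o_1), i}(p) \pmod k$. Summing over $e_1 \in e(o_1)$ reintroduces a factor of $\Card{e(o_1)}$, so this extra block is again $\equiv 0 \pmod k$, and the argument closes as in the first case.

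The main obstacle I expect is purely bookkeeping. One must verify that all three clauses of Definition~\ref{defin: k-injective} restrict to each $e_1 \in e(o_1)$ viewed as a pseudo-experiment of $B_R(o_1)$ — this works because $e_1^{\#}(o') \subseteq e^{\#}(o_1{:}o')$, so the global injectivity conditions specialise to local ones. One also has to check that the definition of $\arity{R, i}$ really collects exactly the shallow (depth $< i$) contributions that survive in Fact~\ref{fact: arity} outside the $\Card{e(o_1)}$-factored terms; once that alignment is confirmed, no residual term escapes the mod-$k$ collapse and the induction reduces to a one-line divisibility observation.
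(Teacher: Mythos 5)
Your proof is correct and follows essentially the same route as the paper's: induction on $\depthof{R}$ using Fact~\ref{fact: arity}, the observation that each $\Card{e(o_1)}$ is a positive power of $k$ (so every boxed contribution collapses mod $k$), and the bound $\arity{R, i}(p) \leq \arity{R}(p) \leq \cosize{R} < k$ to turn the congruence into an equality of remainders. One small caveat on your bookkeeping remark: clause 2 of Definition~\ref{defin: k-injective} does not literally specialise to $e_1 \in e(o_1)$ via $e_1^{\#}(o') \subseteq e^{\#}(o_1{:}o')$ (it constrains distinct top-level copies, not sub-copies within a single $e_1$), but this is harmless here because the induction only ever uses clause 1, which is hereditary.
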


\begin{proof}
First, notice that 
$$\arity{R, i}(p) \leq \arity{R}(p) \leq \cosize{R} < k \: \: \: (\ast)$$
Now, we prove the statement by induction on $\depthof{R}$. If $\depthof{R} = 0$, then we just apply $(\ast)$. If $\depthof{R} > 0$, then, by induction hypothesis, we have
\begin{eqnarray*}
& & (\sum_{o_1 \in \boxesatzerogeq{R}{i}} \sum_{e_1 \in e(o_1)} \sum_{\begin{array}{c} q \in \conclusionscirc{B_R(o_1)}\\ b_R(o_1)(q) = p \end{array}} \arity{\taylor{e_1}{i}}(q)) \textit{ mod } k\\
& = & (\sum_{o_1 \in \boxesatzerogeq{R}{i}} \sum_{e_1 \in e(o_1)} \sum_{\begin{array}{c} q \in \conclusionscirc{B_R(o_1)}\\ b_R(o_1)(q) = p \end{array}} \arity{B_R(o_1)}(q)) \textit{ mod } k\\
& = & (\sum_{o_1 \in \boxesatzerogeq{R}{i}} (\sum_{\begin{array}{c} q \in \conclusionscirc{B_R(o_1)}\\ b_R(o_1)(q) = p \end{array}} \arity{B_R(o_1)}(q)) \cdot \Card{e(o_1)}) \textit{ mod } k\\
& = & 0
\end{eqnarray*}
hence
\begin{eqnarray*}
\arity{\taylor{e}{i}}(p) \textit{ mod } k & = & (\arity{R, i}(p)\\
& & + \sum_{o_1 \in \boxesatzerogeq{R}{i}} \Card{\{ q \in \conclusionsnotcirc{B_R(o_1)} ; b_R(o_1)(q) = p \}\}} \cdot \Card{e(o_1)}\\
& & + \sum_{o_1 \in \boxesatzerogeq{R}{i}} \sum_{e_1 \in e(o_1)} \sum_{\begin{array}{c} q \in \conclusionscirc{B_R(o_1)}\\ b_R(o_1)(q) = p \end{array}} \arity{\taylor{e_1}{i}}(q)) \textit{ mod } k\\
& & \textit{(by Fact~\ref{fact: arity})}\\
& = & \arity{R, i}(p)
\end{eqnarray*}
\end{proof}

\textbf{Proof of Lemma~\ref{lem: arity{taylor{e}{i}}(c) for c at depth 0}:}

\begin{proof}
By Lemma~\ref{lemma: m_{k, 0}(c)}, we already know that $\arity{\taylor{e}{i}}(q) \textit{ mod } k = \arity{R, i}(q)$. We prove, by induction on $\depthof{R}$, that, for any $j > 0$, we have $m_{k, j}(\taylor{e}{i})(q) > 0$ if, and only if, there exists $o \in \mathcal{B}_R^{\geq i}(q)$ such that $k^j \in e^\#(o)$. If $\depthof{R} > 0$, then we have
\begin{eqnarray*}
& & \arity{\taylor{e}{i}}(q)\\
& = & \arity{\groundof{R}}(q)\\
& & + \sum_{o_1 \in \boxesatzerosmaller{R}{i}} \Card{ \{ p \in \conclusionsnotcirc{B_R(o_1)} ; b_R(o_1)(p) = q \}}\\
& & + \sum_{\begin{array}{c} o_1 \in \boxesatzerosmaller{R}{i}\\ q \in \conclusionscirc{B_R(o_1)}  \end{array}} \arity{B_R(o_1)}(q)\\
& & + \sum_{o_1 \in \boxesatzerogeq{R}{i}} (\Card{e(o_1)} \cdot \Card{\{ p \in \conclusionsnotcirc{B_R(o_1)} ; b_R(o_1)(p) = q \}})\\
& & + \sum_{\begin{array}{c} o_1 \in \exactboxesatzero{R}{i}\\ q \in \conclusionscirc{B_R(o_1)} \end{array}} \sum_{e_1 \in e(o_1)} \arity{\taylor{e_1}{i}}(q) + \sum_{\begin{array}{c} o_1 \in \boxesatzerogeq{R}{i+1}\\ q \in \conclusionscirc{B_R(o_1)} \end{array}
} \sum_{e_1 \in e(o_1)}  \arity{\taylor{e_1}{i}}(q)\allowdisplaybreaks\\
& = & \arity{\groundof{R}}(q)\\
& & + \sum_{o_1 \in \boxesatzerosmaller{R}{i}} (\Card{ \{ p \in \conclusionsnotcirc{B_R(o_1)} ; b_R(o_1)(p) = q \}} \\
& & + \sum_{\begin{array}{c} o_1 \in \boxesatzerosmaller{R}{i}\\ q \in \conclusionscirc{B_R(o_1)}  \end{array}} \arity{B_R(o_1)}(q)\\
& & + \sum_{o_1 \in \boxesatzerogeq{R}{i}} (\Card{e(o_1)} \cdot \Card{\{ p \in \conclusionsnotcirc{B_R(o_1)} ; b_R(o_1)(p) = q \}})\\
& & + \sum_{\begin{array}{c} o_1 \in \exactboxesatzero{R}{i}\\ q \in \conclusionscirc{B_R(o_1)} \end{array}} (\Card{e(o_1)} \cdot  \arity{B_R(o_1)}(q))\\
& &  + \sum_{\begin{array}{c} o_1 \in \boxesatzerogeq{R}{i+1}\\ q \in \conclusionscirc{B_R(o_1)} \end{array}
}  (a_{B_R(o_1), i}(q)  \cdot \Card{e(o_1)} + \sum_{e_1 \in e(o_1)} \sum_{o \in \mathcal{B}_{B_R(o_1)}^{\geq i}(q)} \sum_{\begin{array}{c} j \in \Nat\\ k^j \in e_1^\#(o) \end{array}} m_{k, j}(\taylor{e_1}{i})(q) \cdot k^j)    \\
& & \textit{(by Lemma~\ref{lemma: m_{k, 0}(c)} and by induction hypothesis)} \allowdisplaybreaks\\
\end{eqnarray*}
\end{proof}

\subsection{Proof of Proposition~\ref{prop: critical ports below new boxes}}

\begin{proof}
(Sketch) The proof is in four steps, using Lemmas~\ref{lem: M_i} and \ref{lem: arity{taylor{e}{i}}(c) for c at depth 0}. 
\begin{enumerate}
\item\label{item: step1} We first prove that, for any $j_0 \in \Nat \setminus \{ 0 \}$, for any $o_1 \in \boxesatzerogeq{R}{i}$, for any $e_1 \in e(o_1)$, we have $\criticalports{\taylor{e_1}{i}}{k}{j_0} \cap \conclusionscirc{B_R(o_1)} \subseteq$ $\criticalports{\taylor{e}{i}}{k}{j_0}$.
\item\label{item: step2} Second we prove, by induction on $\depthof{R}$, that
\begin{itemize}
\item for any $j \in \mathcal{N}_i(e)$, there exists $o \in \exactboxes{R}{i}$ such that $k^j \in e^\#(o)$ and
\begin{itemize}
\item either $o \in \exactboxesatzero{R}{i}$; moreover, in this case, $e^\#(o) = \{ k^j \}$ and $\criticalports{\taylor{e}{i}}{k}{j} = \im{b_R(o)}$;
\item or there exist $o_1 \in \boxesatzerogeq{R}{i+1}$ and $o' \in \exactboxes{B_R(o_1)}{i}$ such that $o = $ \mbox{$o_1:o'$;}  moreover, in this case, there exists $e_1 \in e(o_1)$ such that $j \in \mathcal{N}_i(e_1)$ and $\criticalports{\taylor{e}{i}}{k}{j} = $ 
$\{ (o_1, e_1) : p ;$ $p \in \criticalports{\taylor{e_1}{i}}{k}{j} \setminus \conclusionscirc{B_R(o_1)} \}$ $\cup$ $(\criticalports{\taylor{e_1}{i}}{k}{j} \cap \conclusionscirc{B_R(o_1)})$ $\cup$ $\{ b_R(o_1)(q) ;$ $q \in \mathcal{P} \cap \conclusions{B_R(o_1)} \textit{ and } \arity{\taylor{e_1}{i}}(q) < k \}$;
\end{itemize}
\item and, for any $j > 0$ such that $\criticalports{\taylor{e}{i}}{k}{j} \not= \emptyset$, we have $j \in \mathcal{M}_i(e)$.
\end{itemize}
\item\label{item: step3} We prove, by induction on $\depthof{R}$, that, for any $o \in \exactboxesatzero{\taylor{e}{i+1}}{i}$, there exists $j \in \mathcal{N}_{i}(e)$ such that $\im{b_{\taylor{e}{i+1}}(o)} = \criticalports{\taylor{e}{i}}{k}{j}$. 
\item\label{item: step4} Finally we prove, by induction on $\depthof{R}$, that, for any $j \in \mathcal{N}_i(e)$, we have $!_{e, i}(j) \in \exactboxesatzero{\taylor{e}{i+1}}{i}$. 
\end{enumerate}
\end{proof}

\section{Rebuilding the boxes: Proposition~\ref{prop: rebuilding boxes}}\label{appendix: rebuilding boxes}

\begin{prop}\label{prop: rebuilding boxes}
Let $R$ be a $\circ$-PS. Let $k > \Card{\boxes{R}{}}, \cosize{R}$. 
Let $e$ be a $k$-injective pseudo-experiment of $R$. Let $i \in \Nat$. Let $j_0 \in \mathcal{N}_i(e)$. 
We set $\mathcal{T} = \nontrivialconnected{\taylor{e}{i}}{\criticalports{\taylor{e}{i}}{k}{j_0}}{k}$. 
For any $T \in \mathcal{T}$, let $(m_j^T)_{j \in \Nat} \in \{ 0, \ldots, k-1 \}^\Nat$ such that $\Card{\{ T' \in \mathcal{T} ; T' \equiv T \}}$ $=$ $\sum_{j \in \Nat} m_j^T \cdot k^j$. 
Let $\mathcal{U} \subseteq \mathcal{T}$ such that, 
for any $T \in \mathcal{U}$,  $\Card{\{ T' \in \mathcal{U} ; T \equiv T' \}}$ $=$ $m_{j_0}^T$. 
Then there exists $\rho: $ $B_{\taylor{e}{i+1}}(\cod_{e, i}(j_0))$ $\simeq$ $\sum \overline{\mathcal{U}}$ such that 
$b_{\taylor{e}{i+1}}(\cod_{e, i}(j_0))(q) =$\\ $\left\lbrace \begin{array}{ll} 
\groundof{\rho}(q) & \textit{if $q \in \conclusionscirc{B_{\taylor{e}{i+1}}(\cod_{e, i}(j_0))}$;}\\
\target{\groundof{\taylor{e}{i}}}(\groundof{\rho}(q)) & \textit{if $q \in \conclusionsnotcirc{B_{\taylor{e}{i+1}}(\cod_{e, i}(j_0))}$.}
\end{array} \right.$
\end{prop}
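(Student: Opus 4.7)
The plan is to proceed by induction on $\depthof{R}$. By Lemma~\ref{lem: M_i}, since $j_0 \in \mathcal{N}_i(e)$ there is a unique $o \in \exactboxes{R}{i}$ with $k^{j_0} \in e^\#(o)$, and Proposition~\ref{prop: critical ports below new boxes} gives two cases. In the base case $\cod_{e,i}(j_0) = o \in \exactboxesatzero{R}{i}$, the content of the new box in $\taylor{e}{i+1}$ is simply $B_R(o)$, and what must be shown is that the connected components of $B_R(o)$, after passage through the $\overline{\cdot}$ operator, match the selection $\mathcal{U}$ up to $\equiv$ and multiplicities. In the inductive case $o = o_1 : o'$ with $o_1 \in \boxesatzerogeq{R}{i+1}$ and $e_1 \in e(o_1)$ satisfying $j_0 \in \mathcal{N}_i(e_1)$, I would apply the induction hypothesis to $B_R(o_1)$ under $e_1$ and transport the resulting isomorphism across the $(o_1, e_1) : {-}$ tagging used in Definition~\ref{defin: Taylor}.

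The core technical step is a base-$k$ counting identity. For each equivalence class $[T]$ with representatives in $\mathcal{T}$, I would establish a decomposition
\[
\Card{\{T' \in \mathcal{T} \,:\, T' \equiv T\}} \;=\; \sum_{o'' \in \boxesgeq{R}{i}} c(o'') \cdot N([T], o''),
\]
where $c(o'')$ is the number of copies of the content of $o''$ appearing in $\taylor{e}{i}$ (a power of $k$ drawn from $e^\#(o'')$) and $N([T], o'')$ counts the connected components of $B_R(o'')$ that, after applying the $\overline{\cdot}$ operator of Definition~\ref{defin: overline}, land in the equivalence class $[T]$ and reach $\criticalports{\taylor{e}{i}}{k}{j_0}$. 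Because $e$ is $k$-injective, the sets $e^\#(o'')$ are pairwise disjoint collections of distinct powers of $k$, so distinct summands live in pairwise disjoint base-$k$ digits; because $k > \cosize{R}$, every $N([T], o'')$ is strictly smaller than $k$, ruling out carries in the sum. Reading off the coefficient of $k^{j_0}$ in the base-$k$ expansion of the total count then gives exactly $N([T], \cod_{e,i}(j_0)) = m_{j_0}^T$.

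Given this identity, $\rho$ is built componentwise. By Fact~\ref{fact: connected components}, $B_{\taylor{e}{i+1}}(\cod_{e,i}(j_0)) = \sum \connectedcomponents{B_{\taylor{e}{i+1}}(\cod_{e,i}(j_0))}{k}$, and the identity $N([T], \cod_{e,i}(j_0)) = m_{j_0}^T$ says that this sum and $\sum \overline{\mathcal{U}}$ have the same multiset of connected components up to $\equiv$. Picking any equivalence-preserving bijection between the two sides defines $\rho$ componentwise. The two compatibility clauses with $b_{\taylor{e}{i+1}}(\cod_{e,i}(j_0))$ (identity on $\circ$-conclusions, composition with $\target{\groundof{\taylor{e}{i}}}$ on the other conclusions) unwind directly from Definition~\ref{defin: overline}, which sends each wire of the underlying substructure whose target lies outside it to a $\circ$-conclusion pointing to that target.

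The main obstacle is the first step of the counting identity: showing that each $T \in \mathcal{T}$ is localized inside the expansion of a single copy of one \emph{minimal} box of content-depth $\geq i$. Connectedness via non-critical ports in $\taylor{e}{i}$ has to be shown to preclude ``escaping'' through a critical port that belongs to another copy, which requires an induction on the nesting of boxes in Definition~\ref{defin: Taylor}, together with the observation that the $\overline{\cdot}$ operator transports each critical port belonging to the box's boundary into a $\circ$-conclusion. Once this localization is in place, the rest of the counting is a routine product of multiplicities.
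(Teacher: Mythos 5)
Your proposal follows essentially the same route as the paper: the carry-free base-$k$ count of $\equiv$-classes of components (exploiting $k$-injectivity and $k>\cosize{R}$) is exactly the content of the paper's Proposition~\ref{prop: crucial}, the localization of components inside a single box copy is obtained there from the arity bound $\arity{\taylor{e_1}{i}}(q)\geq k>\cosize{B_R(o_1)}$, and the assembly of $\rho$ from Fact~\ref{fact: connected components} by matching components classwise is the paper's construction. The one point to tighten is that your counting identity cannot be a flat sum with a copy-independent $N([T],o'')$ for boxes of depth $>i$ (distinct copies of such a box expand differently in $\taylor{e}{i}$, so their component counts differ); the paper instead states the identity recursively, separating the components having a non-$\circ$-conclusion --- whose per-copy count is copy-independent precisely by the localization argument --- from the purely $\circ$-concluded ones, which are handled by the induction hypothesis and feed the deeper base-$k$ digits.
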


The proof of Proposition~\ref{prop: rebuilding boxes} uses Proposition~\ref{prop: crucial}, which justifies that in the algorithm leading from $\taylor{e}{i}$ to $\taylor{e}{i+1}$, for every $j_0 \in \mathcal{N}_i(e)$, for every equivalence class $\mathfrak{T} \in \nontrivialconnected{\taylor{e}{i}}{\criticalports{\taylor{e}{i}}{k}{j_0}}{k}_{/_\equiv}$, if $\Card{\mathfrak{T}} = \sum_{j \in \Nat} m_j \cdot k^j$ (with $0 \leq m_j <k$), then we remove $m_{j_0} \cdot k^{j_0}$ elements of $\mathfrak{T}$ from $\groundof{\taylor{e}{i}}$. Its proof is quite technical, we just state the right induction hypothesis and give some remarks. 

\begin{definition}
Let $R$ be a $\circ$-PS. Let $o_1 \in \boxesatzerogeq{R}{i}$. Let $k > 1$. Let $e$ be a $k$-injective pseudo-experiment of $R$ and let $e_1 \in e(o_1)$. Let $i \in \Nat$. Let $\mathcal{P} \subseteq \portsatzero{\taylor{e_1}{i}}$. We denote by $\mathcal{P}[R, o_1, e_1, i]_k$ the following subset of $\portsatzero{\taylor{e}{i}}$: 
\begin{eqnarray*}
& & \mathcal{P}[R, o_1, e_1, i]_k \\
& = & \{ (o_1, e_1) : p ; p \in \mathcal{P} \setminus \conclusionscirc{B_R(o_1)} \} \cup (\mathcal{P} \cap \conclusionscirc{B_R(o_1)})\\
& & \cup \left\{ b_R(o_1)(q) ; q \in \mathcal{P} \cap \conclusions{B_R(o_1)} \textit{ and } \arity{\taylor{e_1}{i}}(q) < k \right\}
\end{eqnarray*}
\end{definition}


\begin{fact}\label{fact: adequate_extended}
Let $S$ be a differential $\circ$-PS. Let $\mathcal{P} \subseteq \portsatzero{S}$ and let $\mathcal{Q} \subseteq \exponentialportsatzero{S}$ such that
\begin{enumerate} 
\item $(\forall w \in \wiresatzero{S} \setminus \mathcal{P}) (\target{\groundof{S}}(w) \in \mathcal{P} \Rightarrow \target{\groundof{S}}(w) \in \exponentialportsatzero{S})$
\item $(\forall w \in \wiresatzero{S} \cap \mathcal{Q}) (\target{\groundof{S}}(w) \in \mathcal{P} \Rightarrow \target{\groundof{S}}(w) \in \mathcal{Q})$
\item and $(\forall o_1 \in \boxesatzero{S} \cap \mathcal{P}) \im{b_S(o_1)} \subseteq \mathcal{P}$.
\end{enumerate}
Then there exists a substructure $R$ of $S$ such that $\portsatzero{R} = \mathcal{P}$ and $R \sqsubseteq_\mathcal{Q} S$.
\end{fact}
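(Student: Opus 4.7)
The plan is straightforward: take $R$ to be the tuple prescribed verbatim by the clauses of Definition~\ref{defin: substructure}, namely $\ports{\groundof{R}} := \mathcal{P}$, $\wires{\groundof{R}} := \{w \in \wiresatzero{S} \cap (\mathcal{P}\setminus\mathcal{Q}) ; \target{\groundof{S}}(w) \in \mathcal{P}\}$, with $\labelofcell{\groundof{R}}$, $\target{\groundof{R}}$, $\leftwires{\groundof{R}}$, and $\textsf{T}_{\groundof{R}}$ obtained by restricting the corresponding data of $S$, and $\boxesatzero{R} := \boxesatzero{S} \cap \mathcal{P}$ with $B_R := \restriction{B_S}{\boxesatzero{R}}$ and $b_R := \restriction{b_S}{\boxesatzero{R}}$. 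With this definition we automatically obtain $\portsatzero{R} = \mathcal{P}$ and $R \sqsubseteq_\mathcal{Q} S$, so the only remaining task is to check that $R$ so defined is actually a differential $\circ$-PS. Note also that no recursion on depth is required: the box contents $B_S(o)$ are themselves unmodified $\circ$-PS's, so we only have to verify the axioms of Definitions~\ref{defin: diff ground-structure} and~\ref{defin: differential PS} at depth $0$.

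The main obstacle — and the very reason for hypotheses~1 and~2 — is the arity constraint for multiplicative ports, together with the corresponding cardinality condition on $\leftwires{\groundof{R}}$. Let $p \in \multiplicativeportsatzero{R}$ and let $w \in \wiresatzero{S}$ with $\target{\groundof{S}}(w) = p$. Since $p \in \mathcal{P}$ and $p$ is multiplicative (in particular not exponential), the contrapositive of hypothesis~1 gives $w \in \mathcal{P}$; and since $\mathcal{Q} \subseteq \exponentialportsatzero{S}$ forces $p \notin \mathcal{Q}$, the contrapositive of hypothesis~2 gives $w \notin \mathcal{Q}$. Thus $w \in \mathcal{P} \setminus \mathcal{Q}$ with $\target{\groundof{S}}(w) \in \mathcal{P}$, so $w \in \wiresatzero{R}$. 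Both premises of $p$ in $S$ therefore survive in $R$, and since the unique left premise of $p$ in $S$ automatically lies in $\leftwires{\groundof{R}}$ by the same argument, the cardinality condition on $\leftwires{\groundof{R}}$ is satisfied.

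The conditions specific to box data are handled by hypothesis~3. First, $\boxesatzero{R} \subseteq \{p \in \portsoftype{\cod}{\groundof{R}} ; \arity{\groundof{R}}(p) = 0\}$ holds because any $o \in \boxesatzero{S}$ already satisfies $\arity{\groundof{S}}(o) = 0$, and restricting the wire set can only decrease the arity. Second, for each $o \in \boxesatzero{R}$, the function $b_R(o) = b_S(o)$ must land in $\{o\} \cup \portsoftype{\contr}{\groundof{R}} \cup \portsoftype{\circ}{\groundof{R}}$: by hypothesis~3 we have $\im{b_S(o)} \subseteq \mathcal{P} = \portsatzero{R}$, and since $\labelofcell{\groundof{R}}$ is just the restriction of $\labelofcell{\groundof{S}}$, the labels $\contr$, $\circ$, or the main door $o$ are preserved. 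Finally, the typing clauses of Definition~\ref{defin: differential PS} on $b_R(o)$ are inherited unchanged from $S$ because $\textsf{T}_{\groundof{R}}$ and $\textsf{T}_{\groundof{B_R(o)}} = \textsf{T}_{\groundof{B_S(o)}}$ are themselves restrictions of the corresponding functions of $S$. Hence $R$ is a valid differential $\circ$-PS, and the fact is proved.
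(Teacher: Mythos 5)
Your proposal is correct and follows essentially the same route as the paper: the paper's proof simply writes down the very same tuple $R$ (ports $\mathcal{P}$, wires $\{ w \in \wiresatzero{S} \cap (\mathcal{P} \setminus \mathcal{Q}) ; \target{\groundof{S}}(w) \in \mathcal{P} \}$, restricted labels, targets, left premises, types and box data) and stops there, leaving the well-formedness check implicit. Your additional verification that hypotheses 1 and 2 guarantee both premises of a multiplicative port survive, and that hypothesis 3 keeps $\im{b_S(o)}$ inside $\portsatzero{R}$, is exactly the content the paper omits, and it is carried out correctly.
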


\begin{proof}
We set $\mathcal{G} = (\mathcal{W}, \mathcal{P}, l, t, \mathcal{L}, \textsf{T})$ and $R = (\mathcal{G}, \mathcal{B}_0, B, b)$ with:
\begin{itemize}
\item $\mathcal{W} = \{ w \in \wiresatzero{S} \cap (\mathcal{P} \setminus \mathcal{Q}) ; \target{\groundof{S}}(w) \in \mathcal{P} \}$
\item $l = \restriction{\labelofcell{\groundof{S}}}{\mathcal{P}}$
\item $t = \restriction{\target{\groundof{S}}}{\mathcal{W}}$
\item $\mathcal{L} = \leftwires{\groundof{S}} \cap \{ w \in \wiresatzero{S}; \target{\wiresatzero{S}}(w) \in \mathcal{P} \}$
\item $\textsf{T} = \restriction{\textsf{T}_{\groundof{S}}}{\mathcal{P}}$
\item $\mathcal{B}_0 = \boxesatzero{S} \cap \mathcal{P}$
\item $B = \restriction{B_S}{\mathcal{B}_0}$
\item and $b = \restriction{b_S}{\mathcal{B}}$.
\end{itemize}
\end{proof}

If the pair $(\mathcal{P}, \mathcal{Q})$ satisfies the conditions of the previous fact, then we say that the pair $(\mathcal{P}, \mathcal{Q})$ is \emph{adequate} with respect to $S$ and we denote by $\restriction{S}{(\mathcal{P}, \mathcal{Q})}$ the unique substructure $R$ of $S$ such that $\portsatzero{R} = \mathcal{P}$ and $R \sqsubseteq_\mathcal{Q} S$. 
We have $\restriction{S}{\mathcal{P}} = \restriction{S}{(\mathcal{P}, \emptyset)}$.

\begin{fact}\label{fact: from taylor{e1}{i} to taylor{e}{i}}
Let $R$ be a $\circ$-PS. Let $k > 1$. Let $e$ be a $k$-injective pseudo-experiment of $R$. Let $i \in \Nat$. 
Let $o_1 \in \boxesatzerogeq{R}{i}$. Let $e_1 \in e(o_1)$. 
Let $\mathcal{P} \subseteq \portsatzero{\taylor{e_1}{i}}$ and $\mathcal{Q} \subseteq \exponentialportsatzero{\taylor{e_1}{i}}$ such that the pair $(\mathcal{P}, \mathcal{Q})$ is adequate with respect to $\taylor{e_1}{i}$. Then the pair $(\mathcal{P}[R, o_1, e_1, i]_k, \mathcal{Q}[R, o_1, e_1, i]_k)$ is adequate with respect to $\taylor{e}{i}$. Moreover we have 
\begin{itemize}
\item $\conclusions{\restriction{\taylor{e}{i}}{\mathcal{P}[R, o_1, e_1, i]_k}} \subseteq \im{b_R(o_1)} \cup \{ (o_1, e_1) : p ; p \in \conclusions{\restriction{\taylor{e_1}{i}}{\mathcal{P}}} \setminus \conclusions{B_R(o_1)} \} \cup  \{ (o_1, e_1) : p ; p \in \conclusions{\restriction{\taylor{e_1}{i}}{\mathcal{P}}} \cap \conclusionsnotcirc{B_R(o_1)} \textit{ and } \arity{\taylor{e_1}{i}}(p) \geq k \}$;
\item $\boxesatzero{\restriction{\taylor{e}{i}}{\mathcal{P}[R, o_1, e_1, i]_k}} = \{ (o_1, e_1) : o ; o \in \boxesatzero{\restriction{\taylor{e_1}{i}}{\mathcal{P}}} \}$;
\item and, for any $p \in \mathcal{P} \setminus \conclusionscirc{B_R(o_1)}$, $\arity{T}((o_1, e_1):p) = \arity{\restriction{\taylor{e_1}{i}}{\mathcal{P}}}(p)$.
\end{itemize}
\end{fact}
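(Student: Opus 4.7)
The plan is to unpack Definition~\ref{defin: Taylor} together with the definition of $\mathcal{P}[R, o_1, e_1, i]_k$, then verify each of the four claims by a case analysis on the three pieces forming the union defining $\mathcal{P}[R, o_1, e_1, i]_k$: (A) $\{(o_1, e_1):p \mid p \in \mathcal{P} \setminus \conclusionscirc{B_R(o_1)}\}$, (B) $\mathcal{P} \cap \conclusionscirc{B_R(o_1)}$ (implicitly identified via $b_R(o_1)$ with a subset of $\portsatzerooftype{\circ}{R}$), and (C) $\{b_R(o_1)(q) \mid q \in \mathcal{P} \cap \conclusions{B_R(o_1)},\ \arity{\taylor{e_1}{i}}(q) < k\}$. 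Two observations organize the entire proof: ports of $R$ lying in $\mathcal{P}[R, o_1, e_1, i]_k$ all sit in $\im{b_R(o_1)} \subseteq \{o_1\} \cup \portsatzerooftype{\contr}{R} \cup \portsatzerooftype{\circ}{R}$ (so they are exponential), and a lifted port $(o'_1, e'_1):p$ belongs to $\mathcal{P}[R, o_1, e_1, i]_k$ only when $o'_1 = o_1$, $e'_1 = e_1$ and $p \in \mathcal{P} \setminus \conclusionscirc{B_R(o_1)}$.

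For adequacy, I verify the three conditions of Fact~\ref{fact: adequate_extended} for $(\mathcal{P}[R, o_1, e_1, i]_k, \mathcal{Q}[R, o_1, e_1, i]_k)$ inside $\taylor{e}{i}$. A wire of $\taylor{e}{i}$ is either an original wire $w \in \wiresatzero{R}$, a lifted wire $(o'_1, e'_1):w'$ with $o'_1 \ne o_1$, or a lifted wire $(o_1, e_1):w'$. In the first two families, any target landing in $\mathcal{P}[R, o_1, e_1, i]_k$ must lie in $\im{b_R(o_1)}$ and is therefore exponential, immediately giving condition (1). In the last family, the target is $(o_1, e_1):t_{e_1,i}(w')$, $b_R(o_1)(t_{e_1,i}(w'))$, or $b_R(o_1)(w')$, so conditions (1) and (2) reduce either to the exponentiality of $\im{b_R(o_1)}$ or to the corresponding conditions for $(\mathcal{P}, \mathcal{Q})$ inside $\taylor{e_1}{i}$, which hold by hypothesis. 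For condition (3), boxes at depth $0$ of $\taylor{e}{i}$ lying in $\mathcal{P}[R, o_1, e_1, i]_k$ must have the form $(o_1, e_1):o$ with $o \in \boxesatzero{\taylor{e_1}{i}} \cap \mathcal{P}$: indeed $o_1$ itself is not a depth-$0$ box of $\taylor{e}{i}$ (as $o_1 \in \boxesatzerogeq{R}{i}$), and pieces (B), (C) otherwise contain only $\circ$- and $\contr$-ports. The $b_{\taylor{e}{i}}((o_1, e_1):o)$-image then lands either in (A) (lifted ports) or in (B) ($\circ$-conclusion images) by adequacy of $(\mathcal{P}, \mathcal{Q})$ inside $\taylor{e_1}{i}$. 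This same case analysis directly yields the box description $\boxesatzero{\restriction{\taylor{e}{i}}{\mathcal{P}[R, o_1, e_1, i]_k}} = \{(o_1, e_1):o \mid o \in \boxesatzero{\restriction{\taylor{e_1}{i}}{\mathcal{P}}}\}$.

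For the conclusion description, elements of pieces (B) and (C) are either non-wires of $\taylor{e}{i}$ or have their outgoing target outside $\mathcal{P}[R, o_1, e_1, i]_k$; they are therefore conclusions of the substructure and fit inside the $\im{b_R(o_1)}$ component of the RHS. For a lifted port $(o_1, e_1):p$ from piece (A), which is always a wire of $\taylor{e}{i}$ with target $(o_1, e_1):t_{e_1,i}(p)$, $b_R(o_1)(t_{e_1,i}(p))$, or $b_R(o_1)(p)$ according to the three subcases of Definition~\ref{defin: Taylor}, tracing when that target falls inside $\mathcal{P}[R, o_1, e_1, i]_k$ shows that $(o_1, e_1):p$ is a conclusion of the substructure precisely when $p$ is a conclusion of $\restriction{\taylor{e_1}{i}}{\mathcal{P}}$ and either $p \notin \conclusions{B_R(o_1)}$ or $p \in \conclusionsnotcirc{B_R(o_1)}$ with $\arity{\taylor{e_1}{i}}(p) \geq k$; the latter is exactly the contrapositive of the threshold $< k$ used in piece (C). Finally, for the arity equation at a lifted port $(o_1, e_1):p$ with $p \in \mathcal{P}\setminus \conclusionscirc{B_R(o_1)}$, the restriction of the lifting to $\portsatzero{\taylor{e_1}{i}} \setminus \conclusionscirc{B_R(o_1)}$ puts incoming wires and box-door contributions at $p$ in bijection with those at $(o_1, e_1):p$, preserving arity. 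The main subtlety running through all four claims is the implicit identification between $\conclusionscirc{B_R(o_1)}$ and $\im{\restriction{b_R(o_1)}{\conclusionscirc{B_R(o_1)}}}$, which both makes pieces (B) and (C) interact coherently and precisely governs whether a conclusion of $\restriction{\taylor{e_1}{i}}{\mathcal{P}}$ is absorbed into the substructure (via piece (C) when the arity is $< k$) or remains a conclusion of the substructure after lifting.
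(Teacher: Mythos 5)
The paper states this Fact without proof (it is immediately followed by ``This fact allows the following definition''), so there is nothing to compare against except the routine verification the author evidently had in mind; your proposal is exactly that verification and it is correct. Your two organizing observations --- that the ports of $R$ landing in $\mathcal{P}[R,o_1,e_1,i]_k$ all lie in $\im{b_R(o_1)}$ and are therefore exponential, and that a lifted port belongs to the set only with tag $(o_1,e_1)$ and a component in $\mathcal{P}\setminus\conclusionscirc{B_R(o_1)}$ --- together with the three-way case split on $t_{e,i}$ from Definition~\ref{defin: Taylor} are precisely what is needed to check the conditions of Fact~\ref{fact: adequate_extended} and the three itemized claims, including the $<k$ versus $\geq k$ dichotomy governing which lifted conclusions survive as conclusions of the substructure.
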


This fact allows the following definition:

\begin{definition}\label{defin: T[R, e, i]}
Let $R$ be a $\circ$-PS. Let $e$ be a pseudo-experiment of $R$. Let $i, k \in \Nat$. 
Let $o_1 \in \boxesatzerogeq{R}{i}$. Let $e_1 \in e(o_1)$. 
Let $T \sqsubseteq \taylor{e_1}{i}$. We set $T[R, e, i, o_1, e_1]_k = \restriction{\taylor{e}{i}}{(\portsatzero{T}[R, o_1, e_1, i]_k, \conclusions{T}[R, o_1, e_1, i]_k)}$.
\end{definition}

Before of the proof of Proposition~\ref{prop: crucial}, notice that the proof of Proposition~\ref{prop: critical ports below new boxes} actually proves something more than its statement: it proves also that, for any $j \in \mathcal{N}_i(e)$ such that $!_{e, i}(j) \notin \exactboxesatzero{R}{i}$, there exist $o_1 \in \boxesatzerogeq{R}{i+1}$ and $e_1 \in e(o_1)$ such that $j \in \mathcal{N}_i(e_1)$ and $\criticalports{\taylor{e}{i}}{k}{j} = \criticalports{\taylor{e_1}{i}}{k}{j}[R, o_1, e_1, i]_k$. From now on, whenever we refer to Proposition~\ref{prop: critical ports below new boxes}, we refer to the statement thus completed.

\begin{prop}\label{prop: crucial}
Let $R$ be a $\circ$-PS. Let $k > \Card{\boxes{R}{}}, \cosize{R}$. 
Let $e$ be a $k$-injective pseudo-experiment of $R$. Let $i \in \Nat$. 
Let $T \in \nontrivialconnected{\taylor{e}{i}}{(\criticalports{\taylor{e}{i}}{k}{\mathcal{N}_i(e)}, \conclusions{R})}{k}$. We set 
$$\mathcal{T} = \{ T' \in \nontrivialconnected{\taylor{e}{i}}{(\criticalports{\taylor{e}{i}}{k}{\mathcal{N}_i(e)},  \conclusions{R })}{k} ; T \equiv T' \}$$
and
$$\mathcal{T'} = \{ T' \in \nontrivialconnected{\taylor{e}{i+1}}{(\criticalports{\taylor{e}{i}}{k}{\mathcal{N}_i(e)},  \conclusions{R })}{k} ; T \equiv T' \}$$
Let $(m_j)_{j \in \Nat}, (m'_j)_{j \in \Nat} \in \{ 0, \ldots, k-1 \}^\Nat$ such that $\Card{\mathcal{T}}  = \sum_{j \in \Nat} m_j \cdot k^j$ 
and $ \Card{\mathcal{T'}} = \sum_{j \in \Nat} m'_j \cdot k^j$. Then 
\begin{itemize}
\item $\{ j \in \Nat \setminus \{ 0 \} ; m_j \not= 0 \} \subseteq \mathcal{M}_i(e)$
\item $\{ j \in \Nat \setminus \{ 0 \} ; m'_j \not= 0 \} \subseteq \mathcal{M}_{i+1}(e)$
\item and $(\forall j \in \mathcal{M}_{i+1}(e)) m'_j = m_j$.
\end{itemize}
Moreover, if $\{ T' \in \mathcal{T} ; T' \sqsubseteq R \} \not= \emptyset$, then $m'_0 = \Card{ \{ T' \in \mathcal{T} ; T' \sqsubseteq R \} } = m_0$.
\end{prop}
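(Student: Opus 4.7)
I will prove this by induction on $\depthof{R}$. In the base case where $\depthof{R} \leq i$, Lemma~\ref{lem: M_i} forces $\mathcal{N}_i(e) = \emptyset$; the critical-port set is then empty and $\taylor{e}{i} = \taylor{e}{i+1} = R$. The sets $\mathcal{T}$ and $\mathcal{T}'$ coincide and, because $\equiv$ restricts to the identity on conclusions while distinct connected substructures of $R$ have disjoint sets of conclusions, every $\equiv$-class reduces to a singleton; thus $m_0 = m'_0 = 1$ and $m_j = m'_j = 0$ for $j > 0$, which establishes both inclusions, the equality on $\mathcal{M}_{i+1}(e)$, and the ``moreover'' clause.

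For the inductive step the central observation is that, by Proposition~\ref{prop: critical ports below new boxes}, $\mathcal{K} := \criticalports{\taylor{e}{i}}{k}{\mathcal{N}_i(e)}$ coincides with $\bigcup_{o \in \exactboxesatzero{\taylor{e}{i+1}}{i}} \im{b_{\taylor{e}{i+1}}(o)}$, so cutting at $\mathcal{K}$ separates the contents of the ``new'' depth-$i$ boxes from everything else. I would therefore split $\mathcal{T} = \mathcal{T}^{\mathrm{ext}} \sqcup \mathcal{T}^{\mathrm{int}}$ according to whether the component lies outside the new boxes (hence sits unchanged at depth $0$ of $\taylor{e}{i+1}$) or inside one specific copy of a new-box content. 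Since $\taylor{e}{i}$ and $\taylor{e}{i+1}$ agree on all depth-$0$ material outside the depth-$i$ boxes, the identity on ports induces a $\equiv$-preserving bijection $\mathcal{T}^{\mathrm{ext}} \simeq \mathcal{T}'$, giving $\Card{\mathcal{T}^{\mathrm{ext}}} = \Card{\mathcal{T}'} = \sum_j m'_j k^j$.

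For the interior contribution I would apply Proposition~\ref{prop: critical ports below new boxes} once more: the map $!_{e,i}$ is a bijection $\mathcal{N}_i(e) \simeq \exactboxesatzero{\taylor{e}{i+1}}{i}$ with $e^\#(!_{e,i}(j)) = \{k^j\}$, so each interior $\equiv$-class of $B_{\taylor{e}{i+1}}(!_{e,i}(j))$ equivalent to $T$ contributes exactly $k^j$ literal copies to $\mathcal{T}^{\mathrm{int}}$. Writing $n_j$ for the number of such interior $\equiv$-classes attached to the unique new box indexed by $j$, we obtain $\Card{\mathcal{T}^{\mathrm{int}}} = \sum_{j \in \mathcal{N}_i(e)} n_j \, k^j$. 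To promote this into a genuine base-$k$ expansion the crucial point is to bound each $n_j$ by $k$: for an original new box $o \in \exactboxesatzero{R}{i}$ this follows from $\cosize{R} < k$ directly, while for a nested new box $o = (o_1, e_1)\!:\!o'$ with $o_1 \in \boxesatzerogeq{R}{i+1}$ (the only other possibility by Proposition~\ref{prop: critical ports below new boxes}) one invokes the induction hypothesis on $B_R(o_1)$ via the sub-experiment $e_1$, transporting the conclusion through Fact~\ref{fact: from taylor{e1}{i} to taylor{e}{i}}.

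Since $\mathcal{N}_i(e) = \mathcal{M}_i(e) \setminus \mathcal{M}_{i+1}(e)$ makes $\mathcal{N}_i(e)$ and $\mathcal{M}_{i+1}(e)$ disjoint, adding the base-$k$ expansions of $\Card{\mathcal{T}^{\mathrm{ext}}}$ and $\Card{\mathcal{T}^{\mathrm{int}}}$ produces $\Card{\mathcal{T}}$ with no carry, so uniqueness of the base-$k$ representation yields $m_j = m'_j$ for $j \in \mathcal{M}_{i+1}(e)$, $m_j = n_j$ for $j \in \mathcal{N}_i(e)$, and $m_j = 0$ for $j \notin \mathcal{M}_i(e) \cup \{0\}$. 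For the ``moreover'' clause, any $T' \in \mathcal{T}$ with $T' \sqsubseteq R$ is automatically exterior and sits entirely within the original $R$-material at depth $0$ (it touches neither any depth-$i$ interior nor any expansion from a depth-$\geq i+1$ box), so such $T'$'s are precisely those contributing at position $0$ of the base-$k$ expansion, and $\Card{\{T' \in \mathcal{T} : T' \sqsubseteq R\}} = m_0 = m'_0$. The main obstacle will be the uniform bound $n_j < k$ across nested expansions, which compels a careful deployment of the $k$-injectivity of each sub-experiment $e_1 \in e(o_1)$ together with the transport properties encapsulated by Fact~\ref{fact: from taylor{e1}{i} to taylor{e}{i}}.
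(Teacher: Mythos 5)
Your outline captures the right intuition (separate the material that gets re-boxed, which contributes at the digits $\mathcal{N}_i(e)$, from the rest, and conclude by uniqueness of base-$k$ expansions), but two steps at its core do not go through as stated.

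First, the claim that the identity on ports induces an $\equiv$-preserving bijection $\mathcal{T}^{\mathrm{ext}} \simeq \mathcal{T}'$ is false in general. A port set of $\taylor{e}{i}$ that survives into $\taylor{e}{i+1}$ does not carry the same differential $\circ$-PS there: the ports $\cod_{e,i}(j)$ are bare $\cod$-ports of $\taylor{e}{i}$ but become boxes of $\taylor{e}{i+1}$ equipped with their rebuilt contents, and they are critical ports, hence perfectly admissible conclusions of an ``exterior'' component (e.g.\ when $\cod_{e,i}(j)$ is itself a premise of some port $q$ outside the new box); for such a $T$ the substructure of $\taylor{e}{i+1}$ on the same ports is not $\equiv T$, so $\Card{\mathcal{T}^{\mathrm{ext}}}$ and $\Card{\mathcal{T}'}$ can differ. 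Likewise, components sitting inside a copy $(o_1,e_1)$ of a box $o_1 \in \boxesatzerogeq{R}{i+1}$ are modified by the nested re-boxing, so they are not ``unchanged at depth $0$''. This is precisely why the paper never asserts such a bijection: it computes $\Card{\mathcal{T}}$ and $\Card{\mathcal{T}'}$ by two \emph{separate} recursive decompositions over the boxes $o_1$ at depth $0$ (over $\boxesatzerogeq{R}{i}$ for $\mathcal{T}$, over $\boxesatzerogeq{R}{i+1}$ for $\mathcal{T}'$) and their copies $e_1 \in e(o_1)$, transporting components via $U \mapsto U[R,e,i,o_1,e_1]_k$, and only then compares the two base-$k$ expansions digit by digit at the positions $\mathcal{M}_{i+1}(e) \cup \{0\}$. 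It also splits on whether $\conclusions{T} \setminus \portsatzero{R} = \emptyset$, which when nonempty pins $T$ inside a single copy and lets the whole statement be delegated to the induction hypothesis.

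Second, your no-carry argument presupposes that the digits of $\Card{\mathcal{T}'}$ are supported on $\mathcal{M}_{i+1}(e) \cup \{0\}$ — but that is exactly the second bullet of the statement, and your outline never proves it (the induction hypothesis is invoked only to bound the $n_j$ for nested new boxes). Establishing the digit supports of \emph{both} $\Card{\mathcal{T}}$ and $\Card{\mathcal{T}'}$ is where the real work lies: each copy $e_1 \in e(o_1)$ contributes a per-copy constant (the components $\sqsubseteq B_R(o_1)$, identical across copies, hence multiplied by $\Card{e(o_1)} = k^{j_1}$) plus a sum supported on $\mathcal{M}_i(e_1)$ (resp. $\mathcal{M}_{i+1}(e_1)$), and one must use Lemma~\ref{lemma: M_j(e)} / Fact~\ref{fact: M_0 pairwise disjoint} to see that these supports are pairwise disjoint across distinct $(o_1,e_1)$ and disjoint from the $j_1$'s, so that the grand total is carry-free. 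Without that, ``uniqueness of the base-$k$ representation'' yields nothing. So the proposal is a plausible plan whose two load-bearing assertions are, respectively, false as stated and circular; repairing them forces you back to the paper's recursive double count.
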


\begin{proof}
By induction on $\depthof{R}$. 
We distinguish between two cases:
\begin{enumerate}
\item Case $\conclusions{T} \setminus \portsatzero{R} \not= \emptyset$: by Proposition~\ref{prop: critical ports below new boxes}, there exist $o_1 \in \boxesatzerogeq{R}{i+1}$, $e_1 \in e(o_1)$ and $p' \in \criticalports{\taylor{e_1}{i}}{k}{\mathcal{N}_i(e_1)} \setminus \conclusionscirc{B_R(o_1)}$ such that $(o_1, e_1):p' \in \conclusions{T}$. 
For any $T' \in \mathcal{T}$, we set $\mathcal{P}_{T'} = \{q \in \portsatzero{\taylor{e_1}{i}}; $ \mbox{$(o_1, e_1):q$} $\in \portsatzero{T'} \}$ and $$U_{T'} = \restriction{\taylor{e_1}{i}}{(\mathcal{P}_{T'} \cup \bigcup_{o \in \mathcal{P}_{T'} \cap \boxesatzero{B_R(o_1)}} \im{b_{B_R(o_1)}(o)}, \criticalports{\taylor{e_1}{i}}{k}{\mathcal{N}_i(e_1)} \cup \conclusions{B_R(o_1)})}$$ we have $T' = U_{T'}[R, e, i, o_1, e_1]_k$, hence $\cosize{U_{T'}} \leq \cosize{T} < k$; therefore $U_{T'} \in \nontrivialconnected{\taylor{e_1}{i}}{(\criticalports{\taylor{e_1}{i}}{k}{\mathcal{N}_i(e_1)}, \conclusions{B_R(o_1)})}{k}$. 
\begin{itemize}
\item Now, notice that if there exists $q \in \conclusions{U_T} \cap \conclusionsnotcirc{B_R(o_1)}$ such that $q \notin \criticalports{\taylor{e_1}{i}}{k}{\mathcal{N}_i(e_1)}$, then: 
$$(\forall T' \in \mathcal{T}) U_{T'} \in \nontrivialconnected{B_R(o_1)}{(\criticalports{\taylor{e_1}{i}}{k}{\mathcal{N}_i(e_1)}, \conclusions{B_R(o_1)})}{k}$$ we thus have $(\forall T' \in \mathcal{T}) T' \sqsubseteq R$ and $\Card{\mathcal{T}} \leq \arity{R}{(b_R(o_1)(q))} < k$; we obtain $m_0 = \Card{\{ T' \sqsubseteq R ; T' \equiv T \}}$ and $\{ j \in \Nat \setminus \{ 0 \} ; m_j \not= 0 \} = \emptyset$. In the same way, we have
$$(\forall T' \in \mathcal{T'}) U_{T'} \in \nontrivialconnected{B_R(o_1)}{(\criticalports{\taylor{e_1}{i}}{k}{\mathcal{N}_i(e_1)}, \conclusions{B_R(o_1)})}{k}$$ we thus have $(\forall T' \in \mathcal{T'}) T' \sqsubseteq R$ and $\Card{\mathcal{T'}} \leq \arity{R}{(b_R(o_1)(q))} < k$; we obtain $m'_0 = \Card{\{ T' \sqsubseteq R ; T' \equiv T \}}$ and $\{ j \in \Nat \setminus \{ 0 \} ; m'_j \not= 0 \} = \emptyset$.
\item Otherwise, for any $T' \in \mathcal{T}$, we have $\conclusions{U_{T'}} = \conclusions{U_T}$, hence $U_{T'} \equiv U_T$. So, 
$$\Card{\mathcal{T}} = \Card{\{ U \in \nontrivialconnected{\taylor{e_1}{i}}{(\criticalports{\taylor{e_1}{i}}{k}{\mathcal{N}_i(e_1)},  \conclusions{B_R(o_1)})}{k} ; U \equiv U_T \}}$$
and
$$\Card{\mathcal{T'}} = \Card{\{ U \in \nontrivialconnected{\taylor{e_1}{i+1}}{(\criticalports{\taylor{e_1}{i}}{k}{\mathcal{N}_i(e_1)},  \conclusions{B_R(o_1) })}{k} ; U \equiv U_T \}}$$
We apply the induction hypothesis and we obtain:
\begin{itemize}
\item $\{ j \in \Nat \setminus \{ 0 \} ; m_j \not= 0 \} \subseteq \mathcal{M}_i(e_1) \subseteq \mathcal{M}_i(e)$
\item $\{ j \in \Nat \setminus \{ 0 \} ; m'_j \not= 0 \} \subseteq \mathcal{M}_{i+1}(e_1) \subseteq \mathcal{M}_{i+1}(e)$
\item and $(\forall j \in \mathcal{M}_{i+1}(e_1)) m'_j = m_j$.
\end{itemize}
\end{itemize}
\item Case $\conclusions{T} \setminus \portsatzero{R} = \emptyset$: Let $o_1 \in \boxesatzerogeq{R}{i}$. We set $\mathcal{T}_{o_1} = \{ U \in \connectedcomponents{B_R(o_1)}{k} ; U[R, e, i, o_1, e_1]_k \equiv T \textit{ and } \conclusionsnotcirc{U} \not= \emptyset \}$. We set $m^{o_1} = \Card{\mathcal{T}_{o_1}}$. We have $m^{o_1} < k$. Let $e_1 \in e(o_1)$. 

Notice that, for any $U \in \nontrivialconnected{\taylor{e_1}{i}}{(\criticalports{\taylor{e_1}{i}}{k}{\mathcal{N}_i(e_1)},  \conclusions{B_R(o_1) })}{k}$ such that $U[R, e, i, o_1, e_1]_k \equiv T$, we have $U \in \connectedcomponents{\taylor{e_1}{i}}{k}$.

Now, for any  $U \in \connectedcomponents{\taylor{e_1}{i}}{k}$ such that $\conclusionsnotcirc{U} \not= \emptyset$, we have $U \sqsubseteq B_R(o_1)$. Indeed assume that $U \in \connectedcomponents{\taylor{e_1}{i}}{k}$ such that $\conclusionsnotcirc{U} \not= \emptyset$ and $\portsatzero{U} \setminus \portsatzero{B_R(o_1)} \not= \emptyset$; let $q \in \portsatzero{U} \setminus \portsatzero{B_R(o_1)}$ and let $q' \in \conclusionsnotcirc{U}$; there exist $p_0, \ldots p_{n+1} \in \portsatzero{U}$ such that $p_0 = q'$, $p_{n+1} = q$, $(\forall j \in \{ 0, \ldots, n \}) p_j \coh_{\taylor{e_1}{i}} p_{j+1}$ and $(\forall j \in \{ 0, \ldots, n \}) (p_j \in \conclusionscirc{U} \Rightarrow j \in \{ 0, n+1 \})$. Let $j_0 = \max \{ j \in \{ 0, \ldots, n \} ; p_j \in \portsatzero{B_R(o_1)} \}$. We have $\arity{B_R(o_1)}(p_{j_0}) = \arity{\taylor{e_1}{i}}(p_{j_0}) \geq k > \cosize{B_R(o_1)}$, which is contradictory.


We thus obtain $\Card{ \left\lbrace U \in \connectedcomponents{\taylor{e_1}{i}}{k} ; \conclusionsnotcirc{U} \not= \emptyset \textit{ and } U[R, e, i, o_1, e_1]_k \equiv T \right\rbrace } = m^{o_1}$. 
We distinguish between two cases:
\begin{itemize}
\item $\left\lbrace U \in \connectedcomponents{\taylor{e_1}{i}}{k} ; \conclusionsnotcirc{U} = \emptyset \textit{ and } U[R, e, i, o_1, e_1]_k \equiv T \right\rbrace = \emptyset$: for any $j \in \mathcal{M}_i(e_1)$, we set $m_j^{(o_1, e_1)} = 0$.
\item $\left\lbrace U \in \connectedcomponents{\taylor{e_1}{i}}{k} ; \conclusionsnotcirc{U} = \emptyset \textit{ and } U[R, e, i, o_1, e_1]_k \equiv T \right\rbrace \not= \emptyset$: let $U \in \connectedcomponents{\taylor{e_1}{i}}{k}$ such that $\conclusionsnotcirc{U} = \emptyset$ and $U[R, e, i, o_1, e_1]_k \equiv T$; let $(m_j^{(o_1, e_1)})_{j \in \mathbb{N}}, ({m'_j}^{(o_1, e_1)})_{j \in \mathbb{N}} \in \{ 0, \ldots, k-1 \}^{\mathbb{N}}$ such that
$$\Card{\{ U' \in \nontrivialconnected{\taylor{e_1}{i}}{(\criticalports{\taylor{e_1}{i}}{k}{\mathcal{N}_i(e_1)},  \conclusions{B_R(o_1) })}{k} ; U' \equiv U \}} = \sum_{j \in \mathbb{N}} {m}_j^{(o_1, e_1)} \cdot k^j$$
and
$$\Card{\{ U' \in \nontrivialconnected{\taylor{e_1}{i+1}}{(\criticalports{\taylor{e_1}{i}}{k}{\mathcal{N}_i(e_1)},  \conclusions{B_R(o_1) })}{k} ; U' \equiv U \}} = \sum_{j \in \mathbb{N}} {m'}_j^{(o_1, e_1)} \cdot k^j$$
By induction hypothesis, we have
\begin{itemize}
\item $\{ j \in \Nat \setminus \{ 0 \} ; m_j^{(o_1, e_1)} \not= 0 \} \subseteq \mathcal{M}_i(e_1)$
\item $\{ j \in \Nat \setminus \{ 0 \} ; {m'}_j^{(o_1, e_1)} \not= 0 \} \subseteq \mathcal{M}_{i+1}(e_1)$
\item and $(\forall j \in \mathcal{M}_{i+1}(e_1)) {m'}_j^{(o_1, e_1)} = m_j^{(o_1, e_1)}$.
\end{itemize}
For any  $U \in \connectedcomponents{\taylor{e_1}{i}}{k}$ such that $\conclusionsnotcirc{U} = \emptyset$, we have $\neg U \sqsubseteq B_R(o_1)$, hence ${m'}_0^{(o_1, e_1)} = 0 = {m}_0^{(o_1, e_1)}$ and $m^{o_1} = \Card{\{ T' \in \mathcal{T} ; T' \sqsubseteq R \}}$. 
\end{itemize}
Finally:
\begin{eqnarray*}
& & \Card{\{ T' \in \mathcal{T} ; \neg T' \sqsubseteq R \}} \\
& = & \Card{\bigcup_{o_1 \in \boxesatzerogeq{R}{i}} \bigcup_{e_1 \in e(o_1)} \left\lbrace U' \in \nontrivialconnected{\taylor{e_1}{i}}{(\criticalports{\taylor{e_1}{i}}{k}{\mathcal{N}_i(e_1)}, \conclusions{B_R(o_1)})}{k} ;  U'[R, e, i, o_1, e_1]_k \in \mathcal{T} \right\rbrace}\\
& = & \Card{\bigcup_{o_1 \in \boxesatzerogeq{R}{i}} \bigcup_{e_1 \in e(o_1)} \left\lbrace U' \in 
\connectedcomponents{\taylor{e_1}{i}}{k} ;  U'[R, e, i, o_1, e_1]_k \in \mathcal{T} \right\rbrace}\\
& = & \sum_{o_1 \in \boxesatzerogeq{R}{i}} \sum_{e_1 \in e(o_1)} \Card{\left\lbrace U' \in 
\connectedcomponents{\taylor{e_1}{i}}{k} ;  U'[R, e, i, o_1, e_1]_k \in \mathcal{T} \right\rbrace}\\
& = & \sum_{o_1 \in \boxesatzerogeq{R}{i}} \sum_{e_1 \in e(o_1)} \left( m^{o_1} + \Card{\left\lbrace U' \in 
\connectedcomponents{\taylor{e_1}{i}}{k} ;  U'[R, e, i, o_1, e_1]_k \in \mathcal{T} \textit{ and } \conclusionsnotcirc{U'} = \emptyset \right\rbrace} \right)\\
& = & \sum_{o_1 \in \boxesatzerogeq{R}{i}} \sum_{e_1 \in e(o_1)} (m^{o_1} + \sum_{j \in \mathcal{M}_i(e_1)} m_j^{(o_1, e_1)} \cdot k^j) \\
& = & \sum_{o_1 \in \boxesatzerogeq{R}{i}} (m^{o_1} \cdot \Card{e(o_1)} + \sum_{e_1 \in e(o_1)} \sum_{j \in \mathcal{M}_i(e_1)} m_j^{(o_1, e_1)} \cdot k^j) 
\end{eqnarray*}
and
\begin{eqnarray*}
& & \Card{\{ T' \in \mathcal{T'} ; \neg T' \sqsubseteq R \}}\\
& = & \Card{\bigcup_{o_1 \in \boxesatzerogeq{R}{i+1}} \bigcup_{e_1 \in e(o_1)} \left\lbrace U' \in \nontrivialconnected{\taylor{e_1}{i+1}}{(\criticalports{\taylor{e_1}{i}}{k}{\mathcal{N}_i(e_1)}, \conclusions{B_R(o_1)})}{k} ;  U'[R, e, i, o_1, e_1]_k \in \mathcal{T'} \right\rbrace}\\
& = & \Card{\bigcup_{o_1 \in \boxesatzerogeq{R}{i+1}} \bigcup_{e_1 \in e(o_1)} \left\lbrace U' \in 
\connectedcomponents{\taylor{e_1}{i+1}}{k} ;  U'[R, e, i, o_1, e_1]_k \in \mathcal{T'} \right\rbrace}\\
& = & \sum_{o_1 \in \boxesatzerogeq{R}{i+1}} \sum_{e_1 \in e(o_1)} \Card{\left\lbrace U' \in 
\connectedcomponents{\taylor{e_1}{i+1}}{k} ;  U'[R, e, i, o_1, e_1]_k \in \mathcal{T'} \right\rbrace}\\
& = & \sum_{o_1 \in \boxesatzerogeq{R}{i+1}} \sum_{e_1 \in e(o_1)} \left( m^{o_1} + \Card{\left\lbrace U' \in 
\connectedcomponents{\taylor{e_1}{i+1}}{k} ;  U'[R, e, i, o_1, e_1]_k \in \mathcal{T'} \textit{ and } \conclusionsnotcirc{U'} = \emptyset \right\rbrace} \right)\\
& = & \sum_{o_1 \in \boxesatzerogeq{R}{i+1}} \sum_{e_1 \in e(o_1)} (m^{o_1} + \sum_{j \in \mathcal{M}_{i+1}(e_1)} m_j^{(o_1, e_1)} \cdot k^j) \\
& = & \sum_{o_1 \in \boxesatzerogeq{R}{i+1}} (m^{o_1} \cdot \Card{e(o_1)} + \sum_{e_1 \in e(o_1)} \sum_{j \in \mathcal{M}_{i+1}(e_1)} m_j^{(o_1, e_1)} \cdot k^j) 
\end{eqnarray*}
Therefore $\{ j \in \Nat \setminus \{ 0 \} ; {m}_j \not= 0 \} \subseteq \bigcup_{o_1 \in \boxesatzerogeq{R}{i}} (\{ \log_k(\Card{e(o_1)} \} \cup \bigcup_{e_1 \in e(o_1)}  \mathcal{M}_i(e_1)) \subseteq \mathcal{M}_i(e)$ and $\{ j \in \Nat \setminus \{ 0 \} ; {m'}_j \not= 0 \} \subseteq \bigcup_{o_1 \in \boxesatzerogeq{R}{i+1}} (\{ \log_k(\Card{e(o_1)} \} \cup \bigcup_{e_1 \in e(o_1)}  \mathcal{M}_{i+1}(e_1)) \subseteq \mathcal{M}_{i+1}(e)$. Moreover $m_0 = m'_0 = \Card{\{ T' \in \mathcal{T} ; T' \sqsubseteq R \}}$. Lastly, for any $j \in \mathcal{M}_{i+1}(e)$, we have $m'_j = \sum_{e_1 \in e(o_1)} \sum_{j \in \mathcal{M}_i(e_1)} m_j^{(o_1, e_1)}$.
\end{enumerate}
\end{proof}

We need a variant of the notion of equivalence denoted by $\equiv$:

\begin{definition}
Let $R$ be a $\circ$-PS. Let $k > 1$. For any $k$-injective pseudo-experiment $e$ of $R$, for any $i \in \Nat$, for any $o \in \exactboxesatzero{\taylor{e}{i+1}}{i}$, for any $T' \in \connectedcomponents{B_{\taylor{e}{i+1}}(o)}{k}$, for any $\circ$-PS $T''$ such that $\conclusionsnotcirc{T''} \subseteq \wiresatzero{\taylor{e}{i}}$, we write $\varphi :T' \equiv_{e, i, o} T''$ if $\varphi: T' \simeq T''$ such that 
$(\forall q \in \conclusionscirc{T'}) \groundof{\varphi}(q) = b_{\taylor{e}{i+1}}(o)(q) \textit{ and }(\forall q \in \conclusionsnotcirc{T'}) \target{\groundof{\taylor{e}{i}}}(\groundof{\varphi}(q)) =  b_{\taylor{e}{i+1}}(o)(q)$ and $T' \equiv_{e, i, o} T''$ if there exists $\varphi$ such that $\varphi :T' \equiv_{e, i, o} T''$.
\end{definition}

\textbf{Sketch of the proof of Proposition~\ref{prop: rebuilding boxes}}

\begin{proof}(Sketch) 
We set $\mathcal{W} = \connectedcomponents{B_{\taylor{e}{i+1}}(\cod_{e, i}(j_0))}{k}$. 
Let $\tau : \mathcal{W}_{/\equiv} \to \mathcal{W}$ such that, for any $\mathcal{V} \in \mathcal{W}_{/\equiv}$, we have $\tau(\mathcal{V}) \in \mathcal{V}$. 
For any $\mathcal{V} \in \mathcal{W}_{/\equiv}$, there exist $\sigma(\mathcal{V}) \in \mathcal{T}$ and $\varphi_\mathcal{V} : \tau(\mathcal{V}) \simeq \overline{\sigma(\mathcal{V})}$ such that, for any $q \in \conclusions{B_{\taylor{e}{i+1}}(\cod_{e, i}(j_0))} \cap \portsatzero{\tau(\mathcal{V})}$, we have $b_{\taylor{e}{i+1}}(\cod_{e, i}(j_0))(q) =$  $\left\lbrace \begin{array}{ll} 
\groundof{\varphi_\mathcal{V}}(q) & \textit{if $q \in \conclusionscirc{B_{\taylor{e}{i+1}}(\cod_{e, i}(j_0))}$;}\\
\target{\groundof{\taylor{e}{i}}}(\groundof{\varphi_\mathcal{V}}(q)) & \textit{if $q \in \conclusionsnotcirc{B_{\taylor{e}{i+1}}(\cod_{e, i}(j_0))}$.} \end{array} \right.$ We check the existence of such $\sigma(\mathcal{V})$ and $\varphi(\mathcal{V})$ by induction on $\depthof{R}$:
\begin{itemize}
\item In the case $!_{e, i}(j_0) \in \exactboxesatzero{R}{i}$, for any $e_1 \in e(!_{e, i}(j_0))$, we have $\tau(\mathcal{V}) \in \connectedcomponents{\taylor{e_1}{i}}{k}$, hence one can set $\sigma(\mathcal{V}) = \tau(\mathcal{V})[R, e, i, !_{e, i}(j_0), e_1]_k$ for some $e_1 \in e(!_{e, i}(j_0))$.
\item In the case $!_{e, i}(j_0) = (o_1, e_1):!_{e_1, i}(j_0)$ for some $o_1 \in \boxesatzerogeq{R}{i+1}$ and some $e_1 \in e(o_1)$, by induction hypothesis, there exists $\sigma'(\mathcal{V}) \in \nontrivialconnected{\taylor{e_1}{i}}{\criticalports{\taylor{e_1}{i}}{k}{j_0}}{k}$ and $\varphi'_\mathcal{V} : \tau(\mathcal{V}) \simeq \overline{\sigma'(\mathcal{V})}$ such that, for any $q \in \conclusions{B_{\taylor{e}{i+1}}(\cod_{e, i}(j_0))} \cap \portsatzero{\tau(\mathcal{V})}$, we have $b_{\taylor{e_1}{i+1}}(\cod_{e_1, i}(j_0))(q) =$  $\left\lbrace \begin{array}{ll} 
\groundof{\varphi'_\mathcal{V}}(q) & \textit{if $q \in \conclusionscirc{B_{\taylor{e}{i+1}}(\cod_{e, i}(j_0))}$;}\\
\target{\groundof{\taylor{e_1}{i}}}(\groundof{\varphi'_\mathcal{V}}(q)) & \textit{if $q \in \conclusionsnotcirc{B_{\taylor{e}{i+1}}(\cod_{e, i}(j_0))}$.} \end{array} \right.$ We set $\sigma(\mathcal{V}) = \sigma'(\mathcal{V})[R, e, i, !_{e, i}(j_0), e_1]_k$. Moreover we set 
$(\varphi_\mathcal{V})_{\mathcal{G}} : \begin{array}{rcl} 
\portsatzero{\tau(\mathcal{V})} & \to & \portsatzero{\overline{\sigma(\mathcal{V})}} \\ 
p & \mapsto & \left\lbrace \begin{array}{ll} \groundof{\varphi'_{\mathcal{V}}}(p) & \textit{if $\groundof{\varphi'_{\mathcal{V}}}(p) \in \conclusionscirc{B_R(o_1)}$}\\ (o_1, e_1):\groundof{\varphi'_{\mathcal{V}}}(p) & \textit{otherwise;} \end{array} \right. \end{array}$ and $\varphi(\mathcal{V}) = ((\varphi_{\mathcal{V}})_{\mathcal{G}}, (\varphi'(\mathcal{V})(o))_{o \in \boxesatzero{\tau(\mathcal{V})}})$.
\end{itemize}

We set $\mathcal{U}_{\mathcal{V}} = \{ T' \in \mathcal{U} ; T' \equiv \sigma(\mathcal{V}) \}$.

Let us show that there exists a bijection $\epsilon_{\mathcal{V}} : \mathcal{V} \to \mathcal{U}_{\mathcal{V}}$: 
there exist a partition $\mathfrak{U}$ of $\{ U' \in \mathcal{T} \setminus \nontrivialconnected{\taylor{e}{i+1}}{\criticalports{\taylor{e}{i}}{k}{\mathcal{N}_i(e)}}{k} ; U' \equiv \sigma(\mathcal{V}) \}$ and a bijection $\delta: \{ \cod_{e, i}(j) ; j \in \mathcal{N}_i(e) \textit{ and } n_j \not= 0 \} \to \mathfrak{U}$ such that, for any $j \in \mathcal{N}_i(e)$ such that $n_j \not= 0$, we have $\Card{\delta(\cod_{e, i}(j))} = n_j \cdot k^j$, where $n_j$ is the integer $\Card{\{ T' \in \mathcal{W} ; T' \equiv_{e, i, o} \overline{\sigma(\mathcal{V})} \}}$. We have $\Card{\bigcup \mathfrak{U}} = \sum_{j \in \mathcal{N}_i(e)} n_j \cdot k^j$. By Proposition~\ref{prop: crucial}, we have $\Card{\bigcup \mathfrak{U}}= \sum_{j \in \mathcal{N}_i(e)} m_j^{\sigma(\mathcal{V})} \cdot k^j$. We thus have $(\forall j \in \mathcal{N}_i(e)) n_j = m_j^{\sigma(\mathcal{V})}$.

For any $V \in \mathcal{V}$, there exists $\rho_{V} : V \simeq \overline{\epsilon_\mathcal{V}(V)}$ such that, for any $q \in \conclusions{B_{\taylor{e}{i+1}}(\cod_{e, i}(j_0))} \cap \portsatzero{V}$, $b_{\taylor{e}{i+1}}(\cod_{e, i}(j_0))(q) = $ $\left\lbrace \begin{array}{ll} 
\groundof{\rho_{V}}(q) & \textit{if $q \in \conclusionscirc{B_{\taylor{e}{i+1}}(\cod_{e, i}(j_0))}$;}\\
\target{\groundof{\taylor{e}{i}}}(\groundof{\rho_{V}}(q)) & \textit{if $q \in \conclusionsnotcirc{B_{\taylor{e}{i+1}}(\cod_{e, i}(j_0))}$.} \end{array} \right.$
We have: for any $\mathcal{V} \in \mathcal{W}_{/ \equiv}$, there exists $\rho_{\mathcal{V}} : \sum_{V \in \mathcal{V}} V \simeq \sum_{V \in \mathcal{V}} \overline{\epsilon_\mathcal{V}(V)}$ such that, for any $V_1, V_2 \in \mathcal{V}$, $\restriction{\groundof{\rho_{V_1}}}{\conclusionscirc{V_1} \cap \conclusionscirc{V_2}} = \restriction{\groundof{\rho_{\mathcal{V}}}}{\conclusionscirc{V_1} \cap \conclusionscirc{V_2}} 
= \restriction{\groundof{\rho_{V_2}}}{\conclusionscirc{V_1} \cap \conclusionscirc{V_2}}$; 
the sets $\{ \sum_{V \in \mathcal{V}} V ; \mathcal{V} \in \mathcal{W}_{/ \equiv} \}$ and $\{ \sum_{V \in \mathcal{V}} \overline{\epsilon_{\mathcal{V}}(V)} ; \mathcal{V} \in \mathcal{W}_{/ \equiv} \}$ are gluable; 
there exists $\rho : \sum_{\mathcal{V} \in \mathcal{W}_{/ \equiv}} \sum_{V \in \mathcal{V}} V  \simeq \sum_{\mathcal{V} \in \mathcal{W}_{/ \equiv}} \sum_{V \in \mathcal{V}} \overline{\epsilon_\mathcal{V}(V)}$ such that, 
for any $q \in \conclusions{B_{\taylor{e}{i+1}}(\cod_{e, i}(j_0))}$, we have $b_{\taylor{e}{i+1}}(\cod_{e, i}(j_0))(q) = $ $\left\lbrace \begin{array}{ll} 
\groundof{\rho}(q) & \textit{if $q \in \conclusionscirc{B_{\taylor{e}{i+1}}(\cod_{e, i}(j_0))}$;}\\
\target{\groundof{\taylor{e}{i}}}(\groundof{\rho}(q)) & \textit{if $q \in \conclusionsnotcirc{B_{\taylor{e}{i+1}}(\cod_{e, i}(j_0))}$.} \end{array} \right.$
But we have 
\begin{eqnarray*}
\sum_{\mathcal{V} \in \mathcal{W}_{/ \equiv}} \sum_{V \in \mathcal{V}} V & = & \sum \mathcal{W} \\
& = & B_{\taylor{e}{i+1}}(\cod_{e, i}(j_0)) \textit{(by Fact~\ref{fact: connected components})}
\end{eqnarray*}
and $\sum_{\mathcal{V} \in \mathcal{W}_{/ \equiv}} \sum_{V \in \mathcal{V}} \overline{\epsilon_\mathcal{V}(V)} = \sum_{\mathcal{V} \in \mathcal{W}_{/ \equiv}} \sum \overline{\mathcal{U}_{\mathcal{V}}} = \sum \overline{\mathcal{U}}$.
\end{proof}

\section{Proposition~\ref{prop: from i to i+1}}

In the sketch of the proof of Proposition~\ref{prop: from i to i+1}, we gave the complete formalized algorithm leading from $\taylor{e}{i}$ to $\taylor{e}{i+1}$ (up to the names of the ports). We only gave some arguments in favour of its correctness. In Section~\ref{appendix: rebuilding boxes}, we already stated Proposition~\ref{prop: crucial}. This proposition is actually used in the proof of Proposition~\ref{prop: rebuilding boxes} and again in Proposition~\ref{prop: from i to i+1}. Subsections~\ref{subsection: deterministic}, ~\ref{subsection: cond_sup_1} and~\ref{subsection: cond_sup_2} give some more arguments in favour of the correctness of the algorithm. 

\subsection{The algorithm is deterministic}\label{subsection: deterministic}

We should show that, if $S''$ is some differential PS that enjoys the conditions given on $S'$ in the proof of Proposition~\ref{prop: from i to i+1}, then $S'' \equiv S'$. The proof is quite easy but tedious.

\subsection{$\groundof{\restriction{\taylor{e}{i}}}{\mathcal{P}} \sqsubseteq \groundof{\taylor{e}{i+1}}$}\label{subsection: cond_sup_1}

As said in its informal description before the statement of the proposition, in order to obtain the differential PS $\taylor{e}{i+1}$, we consider the set $\bigcup_{j \in \mathcal{N}_i(e)} \nontrivialconnected{\taylor{e}{i}}{\criticalports{\taylor{e}{i}}{k}{j}}{k}$ and we remove some of these elements and put some other of these elements inside new boxes. This means in particular that we keep all the ports that are not caught by some element of $\bigcup_{j \in \mathcal{N}_i(e)} \nontrivialconnected{\taylor{e}{i}}{\criticalports{\taylor{e}{i}}{k}{j}}{k}$, what is formally stated by:
$$\groundof{\restriction{\taylor{e}{i}}}{\mathcal{P}} \sqsubseteq \groundof{\taylor{e}{i+1}}$$
with 
$$\mathcal{P} =  \{ p \in \portsatzero{\taylor{e}{i+1}} ; p \notin \bigcup_{j \in \mathcal{N}_i(e)} \bigcup_{\begin{array}{c} T \in \nontrivialconnected{\taylor{e}{i}}{\criticalports{\taylor{e}{i}}{k}{j}}{k} \end{array}} \portsatzero{T} \}$$

It is essentially what Lemma~\ref{lem: P_k subseteq ports of taylor} says. 

\begin{lem}\label{lem: P_k subseteq ports of taylor}
Let $R$ be a $\circ$-PS. Let $k > \Card{\boxes{R}{}}, \cosize{R}$. 
Let $e$ be a $k$-injective pseudo-experiment of $R$. Let $i \in \Nat$. 
For any $p \in \portsatzero{\taylor{e}{i}} \setminus \portsatzero{\taylor{e}{i+1}}$, there exist $j \in \mathcal{N}_{i}(e)$ and $T \in \nontrivialconnected{\taylor{e}{i}}{\criticalports{\taylor{e}{i}}{k}{j}}{k}$ such that, for any $q \in \conclusionscirc{T}$, we have $\arity{T}(q) \leq \arity{R}(q)$.
\end{lem}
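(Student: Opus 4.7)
The plan is to trace where each port $p \in \portsatzero{\taylor{e}{i}} \setminus \portsatzero{\taylor{e}{i+1}}$ ``goes'' in the passage from depth $i$ to depth $i+1$. Intuitively, such a $p$ must get absorbed into one of the new boxes at depth $0$ of depth exactly $i$ that appear in $\taylor{e}{i+1}$. First I would invoke Fact~\ref{fact : boxes of taylor{e}{i+1} of depth < i} together with Proposition~\ref{prop: critical ports below new boxes} to identify these new boxes as precisely the image $\im{\restriction{!_{e, i}}{\mathcal{N}_i(e)}}$: for each $j \in \mathcal{N}_i(e)$, the content $B_{\taylor{e}{i+1}}(!_{e, i}(j))$ is a $\circ$-PS whose set of $\circ$-conclusions is $\criticalports{\taylor{e}{i}}{k}{j}$. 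Since $\taylor{e}{i}$ and $\taylor{e}{i+1}$ differ only by wrapping up these new boxes (the remainder is preserved by Fact~\ref{fact : boxes of taylor{e}{i+1} of depth < i}), the port $p$ must occur at depth $0$ inside $B_{\taylor{e}{i+1}}(!_{e, i}(j))$ for some $j \in \mathcal{N}_i(e)$.

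Next I would apply Fact~\ref{fact: connected components} to the $\circ$-PS $B_{\taylor{e}{i+1}}(!_{e, i}(j))$, which decomposes as the sum of its connected components in $\connectedcomponents{B_{\taylor{e}{i+1}}(!_{e, i}(j))}{k}$; so $p$ belongs to some component $T_0$. To go back up to $\taylor{e}{i}$, I would read the construction of $\taylor{e}{i+1}$ from $\taylor{e}{i}$ given (up to the $\overline{\cdot}$ operation) in Proposition~\ref{prop: rebuilding boxes} and Proposition~\ref{prop: from i to i+1}: each $T_0$ comes from a connected substructure $T \in \nontrivialconnected{\taylor{e}{i}}{\criticalports{\taylor{e}{i}}{k}{j}}{k}$ containing $p$, obtained by ``unboxing'' along the isomorphism $\rho^{j} : B_{\taylor{e}{i+1}}(\cod_{e, i}(j)) \simeq \sum \overline{\mathcal{U}_j}$. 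In particular $\cosize{T} < k$ and $\conclusions{T} \subseteq \criticalports{\taylor{e}{i}}{k}{j}$, witnessing $T \in \nontrivialconnected{\taylor{e}{i}}{\criticalports{\taylor{e}{i}}{k}{j}}{k}$.

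For the arity bound, fix $q \in \conclusionscirc{T}$. Every $\circ$-port of $\taylor{e}{i}$ is a $\circ$-conclusion of some box of $R$ (since expansion strips the $\circ$-conclusions when descending), so $q$ is a $\circ$-conclusion of some $o^\star \in \boxes{R}$. The contributions to $\arity{T}(q)$ come from (i) wires/premises in $T$ and (ii) doors of boxes at depth $0$ of $T$ pointing to $q$; each is mirrored by an analogous contribution in $R$ at the same conclusion $q$ of $o^\star$, because $T$ is built from pieces that already occur once in $R$ and the expansion only duplicates, never creates. So $\arity{T}(q) \leq \arity{R}(q)$ follows by induction on $\depthof{R}$, using Fact~\ref{fact: arity} to compare the summands.

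The main obstacle is the bookkeeping in the last step: keeping track of the natural correspondence between a connected component of $B_{\taylor{e}{i+1}}(!_{e, i}(j))$ and its ``origin'' inside $R$ along the Taylor expansion, so as to match each premise counted by $\arity{T}(q)$ with a distinct premise counted by $\arity{R}(q)$. This is essentially an inversion of the algorithm of Proposition~\ref{prop: from i to i+1}, and is routine but notationally heavy; once formalized, the cosize bound $\cosize{T} < k$ combined with $k > \cosize{R}$ prevents any ``doubling'' that could violate the inequality.
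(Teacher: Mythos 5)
Your strategy runs in the wrong direction and leaves a genuine gap at its central step. A port $p \in \portsatzero{\taylor{e}{i}} \setminus \portsatzero{\taylor{e}{i+1}}$ is by construction of the form $(o_1, e_1):p'$ with $o_1 \in \boxesatzerogeq{R}{i}$, and such a name does \emph{not} occur anywhere in $\taylor{e}{i+1}$ --- in particular it does not literally occur inside the new box $!_{e, i}(j)$, whose content carries the original (undecorated) port names coming from $B_R(\cdot)$. So the claim that ``$p$ must occur at depth $0$ inside $B_{\taylor{e}{i+1}}(!_{e, i}(j))$'' is false as stated, and the device you then invoke to repair it, Proposition~\ref{prop: rebuilding boxes}, only provides an isomorphism $\rho$ onto $\sum \overline{\mathcal{U}}$ where $\mathcal{U}$ is an \emph{arbitrary} choice of representatives inside $\equiv$-classes of components: nothing forces the $T$ you extract to satisfy $p \in \portsatzero{T}$, which is what the lemma is for (its role in Subsection~\ref{subsection: cond_sup_1} is precisely to show that every vanishing port is caught by some component, so a $T$ unrelated to $p$ is useless). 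There is also a circularity worry: this lemma is one of the arguments supporting the correctness of the algorithm of Proposition~\ref{prop: from i to i+1}, so that proposition cannot be appealed to here; Proposition~\ref{prop: rebuilding boxes} is logically independent and could be used, but, as said, it does not give the port-level containment you need.

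The paper's proof avoids $\taylor{e}{i+1}$ altogether and is a direct induction on $\depthof{R}$. Writing $p = (o_1, e_1):p'$, either $o_1 \in \exactboxesatzero{R}{i}$, in which case one exhibits directly a component of $\nontrivialconnected{\taylor{e}{i}}{\im{b_R(o_1)}}{k}$ containing $p$ and identifies $\im{b_R(o_1)}$ with $\criticalports{\taylor{e}{i}}{k}{j}$ via Proposition~\ref{prop: critical ports below new boxes}; or $o_1 \in \boxesatzerogeq{R}{i+1}$, in which case the induction hypothesis applied to $B_R(o_1)$ and $e_1$ yields a $T'$ containing $p'$, which is transported to $T = T'[R, e, i, o_1, e_1]_k$ by Fact~\ref{fact: from taylor{e1}{i} to taylor{e}{i}}. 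The arity bound $\arity{T}(q) \leq \arity{R}(q)$ for $q \in \conclusionscirc{T}$ is then a concrete computation rather than ``expansion only duplicates'': the essential point is that $T$ contains no box of $\boxesatzerosmaller{R}{i}$ and draws all of its premises at $q$ from the single copy $(o_1, e_1)$, so the box contributions sum to $\arity{T'}(q) \leq \arity{B_R(o_1)}(q)$ by the induction hypothesis, the wire contributions being bounded by those present in $R$. Your closing remark about $\cosize{T} < k$ and $k > \cosize{R}$ gestures at this single-copy confinement but does not establish it; to keep your top-down viewpoint you would first have to prove, independently, that the vanishing ports are exactly the non-conclusion ports of the removed components --- which is essentially the statement being proved.
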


\begin{proof}
(Sketch) 
By induction on $\depthof{R}$:

If $\depthof{R} = 0$, then $\portsatzero{\taylor{e}{i}} \setminus \portsatzero{\taylor{e}{i+1}} = \emptyset$.

If $\depthof{R} > 0$, then we distinguish between two cases:
\begin{itemize}
\item In the case there exist $o_1 \in \exactboxesatzero{R}{i}$, $e_1 \in e(o_1)$ and $p' \in \portsatzero{\taylor{e_1}{i}} \setminus \conclusionscirc{B_R(o_1)}$ such that $p = (o_1, e_1):p'$, 
there exists $T \in \nontrivialconnected{\taylor{e}{i}}{\im{b_R(o_1)}}{k}$ such that $p \in \portsatzero{T}$ and, for any $q \in \conclusionscirc{T}$, we have $\arity{T}(q) \leq \arity{R}(q)$. 
Now, by Proposition~\ref{prop: critical ports below new boxes}, we have $\im{b_R(o_1)} = \criticalports{\taylor{e}{i}}{k}{j}$ with $e^\#(o_1) = \{ k^j \}$.
\item In the case there exist $o_1 \in \boxesatzerogeq{R}{i+1}$, $e_1 \in e(o_1)$ and $p' \in \portsatzero{\taylor{e_1}{i}} \setminus \portsatzero{\taylor{e_1}{i+1}}$ such that $p = (o_1, e_1):p'$, by induction hypothesis, there exists $j \in \mathcal{N}_i(e_1)$ and $T' \in \nontrivialconnected{\taylor{e_1}{i}}{\criticalports{\taylor{e_1}{i}}{k}{j}}{k}$ such that $p' \in \portsatzero{T'} \setminus \conclusions{T'}$ and, for any $q \in \conclusionscirc{T'}$, we have $\arity{T'}(q) \leq \arity{B_R(o_1)}(q)$. 

By Fact~\ref{fact: from taylor{e1}{i} to taylor{e}{i}}, we can set $T = T'[R, e, i, o_1, e_1]_k$. By 
Proposition~\ref{prop: critical ports below new boxes}, we have $T \trianglelefteq_{\criticalports{\taylor{e}{i}}{k}{j}} \taylor{e}{i}$.

Let $q \in \portsatzero{T} \setminus \{ (o_1, e_1) : q' ; q' \in \portsatzero{T'} \setminus \conclusionscirc{B_R(o_1)} \}$. Notice that $q \in \portsatzero{R}$. We have 
\begin{eqnarray*}
\arity{\groundof{T}}(q) & \leq & \Card{\{ w \in \wiresatzero{R} ; \target{\groundof{R}}(w) = q \} } \\
& & + \Card{\{ p \in \conclusionsnotcirc{B_R(o_1)} ; b_R(o_1)(p) = q \}}
\end{eqnarray*}
Futhermore, notice that
 we have $\boxesatzerosmaller{R}{i} \cap \boxesatzero{T} = \emptyset$; so we have:
\begin{itemize}
\item if $q \notin \conclusionscirc{B_R(o_1)}$, then 
\begin{eqnarray*}
& & \sum_{o \in \boxesatzero{T}} \Card{\{ p \in \conclusionsnotcirc{B_T(o)} ; b_T(o)(p) = q \}} \\
& & + \sum_{\begin{array}{c} o \in \boxesatzero{T}\\ q \in \im{\restriction{b_T(o)}{\conclusionscirc{B_T(o)}}}\end{array}} \arity{B_T(o)}(q_{T, o})\allowdisplaybreaks\\
& = & 0
\end{eqnarray*}
\item if $q \in \conclusionscirc{B_R(o_1)}$, then
\begin{eqnarray*}
& & \sum_{o \in \boxesatzero{T}} \Card{\{ p \in \conclusionsnotcirc{B_T(o)} ; b_T(o)(p) = q \}} \\
& & + \sum_{\begin{array}{c} o \in \boxesatzero{T}\\ q \in \im{\restriction{b_T(o)}{\conclusionscirc{B_T(o)}}}\end{array}} \arity{B_T(o)}(q_{T, o})\allowdisplaybreaks\\
& = & 
\sum_{o' \in \boxesatzero{T'}} \Card{\{ p \in \conclusionsnotcirc{B_{T'}(o')} ; b_{T'}(o')(p) = q \}}\\
& & + \sum_{\begin{array}{c} o' \in \boxesatzero{T'}\\
q \in \im{\restriction{b_{T'}(o')}{\conclusionscirc{B_{T'}(o')}}} \end{array}} \arity{B_{T'}(o')}(q_{T', o'})\allowdisplaybreaks\\
& \leq & 
\arity{T'}(q) \allowdisplaybreaks\\
& \leq & 
\arity{B_R(o_1)}(q)
\end{eqnarray*}
\end{itemize}
hence, in the two cases, $\arity{T}(q) \leq \arity{R}(q)$.

Lastly, let $q \in \conclusions{T}$. By Fact~\ref{fact: from taylor{e1}{i} to taylor{e}{i}}, we can distinguish between the two following cases:
\begin{itemize}
\item there exists $q' \in \conclusions{T'} \setminus \conclusionscirc{B_R(o_1)}$ such that $q = (o_1, e_1) : q'$: by Proposition~\ref{prop: critical ports below new boxes}, we have $q \in \criticalports{\taylor{e}{i}}{k}{j}$;
\item $q \in \im{b_R(o_1)}$: since, for any $q' \in \criticalports{\taylor{e_1}{i}}{k}{j}$, we have $\arity{\taylor{e_1}{i}}(q') \geq k$, we obtain $q \in \conclusionscirc{T'}$, hence $q \in \criticalports{\taylor{e_1}{i}}{k}{j}$; by Proposition~\ref{prop: critical ports below new boxes}, we obtain $q \in \criticalports{\taylor{e}{i}}{k}{j}$.
\end{itemize}
\end{itemize}
\end{proof}

\subsection{$\bigcup_{j \in \mathcal{N}_i(e)} \nontrivialconnected{\taylor{e}{i+1}}{\criticalports{\taylor{e}{i}}{k}{j}}{k} \subseteq \bigcup_{j \in \mathcal{N}_i(e)} \nontrivialconnected{\taylor{e}{i}}{\criticalports{\taylor{e}{i}}{k}{j}}{k}$}\label{subsection: cond_sup_2}

We said also that we do not add new connected components (the new boxes are never caught by the elements of the $\nontrivialconnected{\taylor{e}{i+1}}{\criticalports{\taylor{e}{i}}{k}{\mathcal{N}_i(e)}}{k}$). It is the content of Lemma~\ref{lem: second condition}.

\begin{fact}\label{fact: coh[i]}
Let $R$ be a $\circ$-PS. Let $k > \Card{\boxes{R}{}}, \cosize{R}$. 
Let $e$ be a $k$-injective pseudo-experiment of $R$. Let $i \in \Nat$. 
Let $T \sqsubseteq \taylor{e}{i}, \taylor{e}{i+1}$. 
Let $p, p' \in \portsatzero{T}$ such that $p \coh_{\taylor{e}{i}} p'$.
 Then we have $p \coh_{\taylor{e}{i+1}} p'$.
\end{fact}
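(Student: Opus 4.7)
(Plan.) The plan is to prove this by induction on $\depthof{R}$, after first observing that the substructure $T$ enters the hypothesis only through the containment $p, p' \in \portsatzero{T} \subseteq \portsatzero{\taylor{e}{i}} \cap \portsatzero{\taylor{e}{i+1}}$; the argument in fact yields the stronger assertion that $\coh_{\taylor{e}{i}}$ restricted to $(\portsatzero{\taylor{e}{i+1}})^2$ is contained in $\coh_{\taylor{e}{i+1}}$. The base case $\depthof{R} = 0$ is immediate because $\taylor{e}{i} = R = \taylor{e}{i+1}$, and neither the bound on $k$ nor the $k$-injectivity of $e$ plays any role.

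For the inductive step, I will do a case analysis on the witness of $p \coh_{\taylor{e}{i}} p'$. For the two wire clauses, I split according to whether the wire belongs to $\wiresatzero{R}$ or is of the form $(o_1, e_1):w'$ with $o_1 \in \boxesatzerogeq{R}{i}$. In the first case, since $t_{e, i+1}$ also extends $\target{\groundof{R}}$, the same wire and target persist in $\taylor{e}{i+1}$. In the second, the requirement $p \in \portsatzero{\taylor{e}{i+1}}$ forces $o_1 \in \boxesatzerogeq{R}{i+1}$ and $w' \in \mathcal{P}_{e_1, i+1}$, so the problem reduces to a coherence claim about ports of $\taylor{e_1}{i}$ and $\taylor{e_1}{i+1}$, which is handled by the induction hypothesis applied to $B_R(o_1)$ with the experiment $e_1$ and then lifted back via the defining clauses of $t_{e, i+1}$.

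For the box clause, if the witness box lies in $\boxesatzerosmaller{R}{i}$, Fact~\ref{fact : boxes of taylor{e}{i+1} of depth < i} provides the very same witness in $\taylor{e}{i+1}$. Otherwise the witness has the form $(o_1, e_1):o'$ with $o_1 \in \boxesatzerogeq{R}{i}$, and I split on $\depthof{B_R(o_1)}$. When $\depthof{B_R(o_1)} \geq i+1$, I apply the induction hypothesis to $B_R(o_1)$ after extracting appropriate preimages $q, q' \in \im{b_{\taylor{e_1}{i}}(o')}$ of $p, p'$ and then lifting the resulting coherence in $\taylor{e_1}{i+1}$ back to $\taylor{e}{i+1}$.

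The main obstacle will be the remaining subcase, $o_1 \in \exactboxesatzero{R}{i}$: there the box witness $(o_1, e_1):o'$ genuinely disappears in $\taylor{e}{i+1}$, since $o_1$ is no longer expanded. I expect the resolution to exploit the fact that $R$ is a $\circ$-PS, so $\restriction{b_R(o_1)}{\conclusionscirc{B_R(o_1)}}$ is the identity: inspecting the two-clause definition of $b_{e, i}((o_1, e_1):o')$ together with $p \in \portsatzero{\taylor{e}{i+1}}$ will rule out the clause whose image lies in $\mathcal{P}_{e, i}^\bullet$ (because $o_1 \notin \boxesatzerogeq{R}{i+1}$ places such an image outside $\mathcal{P}_{e, i+1}^\bullet$), forcing $p, p' \in \conclusionscirc{B_R(o_1)}$; the identity property then puts both in $\im{b_R(o_1)} = \im{b_{\taylor{e}{i+1}}(o_1)}$, so $o_1$ itself serves as a new box witness in $\taylor{e}{i+1}$. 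What remains is systematic bookkeeping with the three-clause definition of $t_{e, i}$.
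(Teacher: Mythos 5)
Your argument is correct and rests on the same basic idea as the paper's proof: a case analysis on the witness of $p \coh_{\taylor{e}{i}} p'$, with wire targets persisting because $t_{e,i+1}$ and $t_{e,i}$ both extend $\target{\groundof{R}}$, and with Fact~\ref{fact : boxes of taylor{e}{i+1} of depth < i} handling box witnesses that survive. The paper presents this as a flat case split (no induction) and disposes of the box case by a one-line appeal to Fact~\ref{fact : boxes of taylor{e}{i+1} of depth < i}; but that fact only concerns boxes in $\boxesatzero{\taylor{e}{i}} \cap \portsatzerooftype{\cod}{\taylor{e}{i+1}}$, so it silently covers only witness boxes that remain ports of $\taylor{e}{i+1}$. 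Your inductive organization forces you to confront the remaining subcase explicitly --- a witness $(o_1,e_1):o'$ with $o_1 \in \exactboxesatzero{R}{i}$, which is a box of $\taylor{e}{i}$ at depth $0$ but disappears from $\taylor{e}{i+1}$ --- and your resolution is the right one: since prefixed ports over such an $o_1$ do not survive into $\portsatzero{\taylor{e}{i+1}}$ and ports of $R$ are not sequences, the surviving $p, p'$ must lie in $\conclusionscirc{B_R(o_1)}$, hence by the identity clause for $\circ$-conclusions in the definition of $\circ$-PS they lie in $\im{b_R(o_1)} = \im{b_{\taylor{e}{i+1}}(o_1)}$, so the newly formed box $o_1$ itself serves as the witness. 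The induction on $\depthof{R}$ costs you bookkeeping with the nested clauses of $t_{e,i}$ and $b_{e,i}$, but it buys an explicit justification of the two points the paper leaves implicit (unchanged targets of surviving wires, replacement of vanishing box witnesses); the details you defer are indeed routine.
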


\begin{proof}
We distinguish between four cases:
\begin{itemize}
\item $p \in \wiresatzero{\taylor{e}{i}}$ and $p' = \target{\groundof{\taylor{e}{i}}}(p)$: we have $p' = \target{\groundof{\taylor{e}{i+1}}}(p)$;
\item $p' \in \wiresatzero{\taylor{e}{i}}$ and $p = \target{\groundof{\taylor{e}{i}}}(p')$: we have $p = \target{\groundof{\taylor{e}{i+1}}}(p')$;
\item $\{ p, p' \} \in \axioms{\groundof{\taylor{e}{i}}}$: we have $\{ p, p' \} \in \axioms{\groundof{\taylor{e}{i+1}}}$;
\item there exists $o \in \boxesatzerosmaller{\taylor{e}{i}}{i}$ such that $p, p' \in \im{b_{\taylor{e}{i}}(o)}$: by Fact~\ref{fact : boxes of taylor{e}{i+1} of depth < i}, we obtain $p \coh_{\taylor{e}{i+1}} p'$.
\end{itemize}
\end{proof}

\begin{fact}\label{fact: coh[i+1]}
Let $R$ be a $\circ$-PS. Let $k > \Card{\boxes{R}{}}, \cosize{R}$. 
Let $e$ be a $k$-injective pseudo-experiment of $R$. Let $i \in \Nat$. 
Let $T \sqsubseteq \taylor{e}{i}, \taylor{e}{i+1}$. 
Let $p, p' \in \portsatzero{T}$ such that $p \coh_{\taylor{e}{i+1}} p'$.
 Then we have $p \coh_{\taylor{e}{i}} p'$ or $p, p' \in \criticalports{\taylor{e}{i}}{k}{\mathcal{N}_i(e)}$.
\end{fact}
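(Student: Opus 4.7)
The plan is a case analysis on which of the three clauses in the definition of $\coh_{\taylor{e}{i+1}}$ witnesses $p \coh_{\taylor{e}{i+1}} p'$, in the spirit of the proof of Fact~\ref{fact: coh[i]}. The box clause is treated in two subcases according to the depth of the witnessing box. If $\{p, p'\} \subseteq \im{b_{\taylor{e}{i+1}}(o)}$ with $o \in \boxesatzerosmaller{\taylor{e}{i+1}}{i}$, then Fact~\ref{fact : boxes of taylor{e}{i+1} of depth < i} gives $o \in \boxesatzero{\taylor{e}{i}}$ and $b_{\taylor{e}{i+1}}(o) = b_{\taylor{e}{i}}(o)$, so the same $o$ witnesses $p \coh_{\taylor{e}{i}} p'$. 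If instead $o \in \exactboxesatzero{\taylor{e}{i+1}}{i}$, then Proposition~\ref{prop: critical ports below new boxes} supplies $j \in \mathcal{N}_i(e)$ with $o = \cod_{e, i}(j)$ and $\im{b_{\taylor{e}{i+1}}(o)} = \criticalports{\taylor{e}{i}}{k}{j}$, placing both $p$ and $p'$ in $\criticalports{\taylor{e}{i}}{k}{\mathcal{N}_i(e)}$.

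For the two symmetric wire clauses I would first establish, as an auxiliary side lemma proved by induction on $\depthof{R}$ by unrolling Definition~\ref{defin: Taylor}, the inclusion $\wiresatzero{\taylor{e}{i+1}} \subseteq \wiresatzero{\taylor{e}{i}}$ together with the equality $t_{e, i+1}(w) = t_{e, i}(w)$ on that common domain. The inductive step treats each branch of the formula for $t_{e, i}$. An original wire $w \in \wiresatzero{R}$ has target $\target{\groundof{R}}(w)$ at every level. For a wire $(o_1, e_1):w'$ with $o_1 \in \boxesatzerogeq{R}{i+1}$ and $w' \in \mathcal{W}_{e_1, i+1}$, one applies the inner induction hypothesis at $e_1 \in e(o_1)$, observing that $\boxesatzerogeq{R}{i+1} \subseteq \boxesatzerogeq{R}{i}$ and that $\mathcal{P}_{e_1, i+1} \subseteq \mathcal{P}_{e_1, i}$, to get $t_{e_1, i+1}(w') = t_{e_1, i}(w')$; since the branch test $t_{e_1, \ast}(w') \in \conclusionscirc{B_R(o_1)}$ depends only on $R$, the same branch is taken at both levels. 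Finally a wire $(o_1, e_1):q$ with $q \in \conclusionsnotcirc{B_R(o_1)}$ is sent to $b_R(o_1)(q)$ uniformly in $i$.

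Given the side lemma, the wire clauses follow immediately: from $p \in \wiresatzero{\taylor{e}{i+1}}$ with $p' = t_{e, i+1}(p)$ one infers $p \in \wiresatzero{\taylor{e}{i}}$ with $p' = t_{e, i}(p)$, which is precisely the first clause of $\coh_{\taylor{e}{i}}$. The delicate point, though not a deep one, is the bookkeeping in the induction for the side lemma: one must check that the choice of branch in $t_{e, i}$ does not drift with $i$, which I expect to be where most of the careful case-tracking lives, although no numerical argument of the kind used in Lemma~\ref{lem: M_i} or Proposition~\ref{prop: crucial} enters.
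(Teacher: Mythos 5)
Your proposal is correct and follows essentially the same route as the paper: the same three-way case analysis on the witness for $p \coh_{\taylor{e}{i+1}} p'$, with the box of depth $<i$ handled by Fact~\ref{fact : boxes of taylor{e}{i+1} of depth < i} and the box of depth $i$ by Proposition~\ref{prop: critical ports below new boxes}. The only difference is that the paper simply asserts the wire cases transfer from level $i+1$ to level $i$, whereas you make explicit the underlying side lemma ($\wiresatzero{\taylor{e}{i+1}} \subseteq \wiresatzero{\taylor{e}{i}}$ with coinciding targets, by induction on the depth through Definition~\ref{defin: Taylor}) — a harmless and correct elaboration of the same argument.
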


\begin{proof}
We distinguish between three cases:
\begin{itemize}
\item ($p \in \wiresatzero{\taylor{e}{i+1}}$ and $p' = \target{\groundof{\taylor{e}{i+1}}}(p)$) or ($p' \in \wiresatzero{\taylor{e}{i+1}}$ and $p = \target{\groundof{\taylor{e}{i+1}}}(p')$) or $\{ p, p' \} \in \axioms{\groundof{\taylor{e}{i+1}}}$: we have $p \coh_{\taylor{e}{i}} p'$;
\item there exists $o \in \boxesatzerosmaller{\taylor{e}{i+1}}{i}$ such that $p, p' \in \im{b_{\taylor{e}{i+1}}(o)}$: by  Fact~\ref{fact : boxes of taylor{e}{i+1} of depth < i}, we have $p \coh_{\taylor{e}{i}} p'$;
\item there exists $o \in \exactboxesatzero{\taylor{e}{i+1}}{i}$ such that $p, p' \in \im{b_{\taylor{e}{i+1}}(o)}$: by  Proposition~\ref{prop: critical ports below new boxes}, we obtain $p, p' \in \criticalports{\taylor{e}{i}}{k}{\mathcal{N}_i(e)}$.
\end{itemize}
\end{proof}

\begin{fact}\label{fact: critical ports for M => crtical ports for j}
Let $R$ be a $\circ$-PS. Let $k > \Card{\boxes{R}{}}, \cosize{R}$. 
Let $e$ be a $k$-injective pseudo-experiment of $R$. Let $i \in \Nat$. Let $j \in \mathcal{N}_i(e)$. Let $T \in \nontrivialconnected{\taylor{e}{i+1}}{\criticalports{\taylor{e}{i}}{k}{j}}{k}$ such that $\depthof{T} \leq i$. Then we have $\criticalports{\taylor{e}{i}}{k}{\mathcal{N}_i(e)} \cap \portsatzero{T} \subseteq \criticalports{\taylor{e}{i}}{k}{j}$.
\end{fact}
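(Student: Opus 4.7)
My proof plan is a short argument by contradiction. Fix $p \in \criticalports{\taylor{e}{i}}{k}{\mathcal{N}_i(e)} \cap \portsatzero{T}$; it satisfies $p \in \criticalports{\taylor{e}{i}}{k}{j'}$ for some $j' \in \mathcal{N}_i(e)$. Assuming $p \notin \criticalports{\taylor{e}{i}}{k}{j}$, I will produce a box whose content has depth $i$ at depth $0$ of $T$, contradicting the hypothesis $\depthof{T} \leq i$.

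First, I apply Proposition~\ref{prop: critical ports below new boxes} to $j'$: it produces the new box $o^* := !_{e, i}(j') \in \exactboxesatzero{\taylor{e}{i+1}}{i}$, whose content $B_{\taylor{e}{i+1}}(o^*)$ has depth exactly $i$ and which satisfies $\im{b_{\taylor{e}{i+1}}(o^*)} = \criticalports{\taylor{e}{i}}{k}{j'}$. In particular $p \in \im{b_{\taylor{e}{i+1}}(o^*)}$; and by Definition~\ref{defin: differential PS} the main door $o^*$ itself belongs to $\im{b_{\taylor{e}{i+1}}(o^*)}$, so $p$ and $o^*$ are $\coh_{\taylor{e}{i+1}}$-related (or coincide).

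Second, I invoke the defining closure property of $\nontrivialconnected{\taylor{e}{i+1}}{\criticalports{\taylor{e}{i}}{k}{j}}{k}$: for any $p' \in \portsatzero{T}$ and any $q \in \portsatzero{\taylor{e}{i+1}}$ with $p' \coh_{\taylor{e}{i+1}} q$ and $q \notin \portsatzero{T}$, we have $p' \in \criticalports{\taylor{e}{i}}{k}{j}$. By contrapositive, since $p \in \portsatzero{T} \setminus \criticalports{\taylor{e}{i}}{k}{j}$, every $\coh_{\taylor{e}{i+1}}$-neighbor of $p$ lies in $\portsatzero{T}$; applied to $q = o^*$ (or trivially, in the degenerate case $p = o^*$), this forces $o^* \in \portsatzero{T}$.

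Third and last, Definition~\ref{defin: substructure} gives $\boxesatzero{T} = \boxesatzero{\taylor{e}{i+1}} \cap \portsatzero{T}$ and $B_T(o^*) = B_{\taylor{e}{i+1}}(o^*)$, so that $T$ carries, at depth $0$, a box whose content has depth $i$; hence $\depthof{T} \geq i+1$, the desired contradiction. I expect no real obstacle: all of the structural work has been absorbed by Proposition~\ref{prop: critical ports below new boxes}, and the remaining bookkeeping is a direct reading of the definitions of $\sqsubseteq_{\mathcal{Q}}$ and $\nontrivialconnected{\cdot}{\cdot}{\cdot}$.
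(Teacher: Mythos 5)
Your proof is correct and is essentially the paper's own argument: both obtain from Proposition~\ref{prop: critical ports below new boxes} a box $o \in \exactboxesatzero{\taylor{e}{i+1}}{i}$ with $p \in \im{b_{\taylor{e}{i+1}}(o)}$, use $p \coh_{\taylor{e}{i+1}} o$ together with the closure condition defining $\nontrivialconnected{\taylor{e}{i+1}}{\criticalports{\taylor{e}{i}}{k}{j}}{k}$ to force $o \in \portsatzero{T}$ unless $p \in \criticalports{\taylor{e}{i}}{k}{j}$, and then rule out $o \in \portsatzero{T}$ via $\depthof{T} \leq i$. Your write-up merely spells out a few details (the choice of $j'$, the degenerate case $p = o$, the substructure bookkeeping) that the paper leaves implicit.
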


\begin{proof}
Let $p \in \criticalports{\taylor{e}{i}}{k}{\mathcal{N}_i(e)} \cap \portsatzero{T}$. By Proposition~\ref{prop: critical ports below new boxes}, there exists $o \in \exactboxesatzero{\taylor{e}{i+1}}{i}$ such that $p \in \im{b_{\taylor{e}{i+1}}(o)}$. We have $p \coh_{\taylor{e}{i+1}} o$, hence $p \notin \criticalports{\taylor{e}{i}}{k}{j} \Rightarrow o \in \portsatzero{T}$. But, since $\depthof{T} \leq i$, we have $o \notin \portsatzero{T}$.
\end{proof}

\begin{lem}\label{lem: second condition}
Let $R$ be a $\circ$-PS. Let $k > \Card{\boxes{R}{}}, \cosize{R}$. 
Let $e$ be a $k$-injective pseudo-experiment of $R$. Let $i \in \Nat$. 
For any $j \in  \mathcal{N}_i(e)$, we have $ \nontrivialconnected{\taylor{e}{i+1}}{\criticalports{\taylor{e}{i}}{k}{j}}{k} \subseteq \nontrivialconnected{\taylor{e}{i}}{\criticalports{\taylor{e}{i}}{k}{j}}{k}$.
\end{lem}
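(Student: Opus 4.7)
The plan is to verify, for a given $T \in \nontrivialconnected{\taylor{e}{i+1}}{\mathcal{K}}{k}$ with $\mathcal{K} := \criticalports{\taylor{e}{i}}{k}{j}$, each of the defining conditions of $\nontrivialconnected{\taylor{e}{i}}{\mathcal{K}}{k}$. The intrinsic conditions $\cosize{T} < k$, $\conclusions{T} \subseteq \mathcal{K}$, and $\portsatzero{T} \setminus \mathcal{K} \neq \emptyset$ depend only on $T$ itself and so carry over. The substantive work is to show that $T \sqsubseteq_{\mathcal{K}} \taylor{e}{i}$ (the structural compatibility), $T \trianglelefteq_{\mathcal{K}} \taylor{e}{i}$ (connectedness), and the maximality property: whenever $p \in \portsatzero{T}$, $q \in \portsatzero{\taylor{e}{i}}$ satisfy $p \coh_{\taylor{e}{i}} q$ and $q \notin \portsatzero{T}$, then $p \in \mathcal{K}$.

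The key obstacle is structural compatibility. Since $\taylor{e}{i+1}$ unfolds strictly fewer boxes than $\taylor{e}{i}$ (only those of depth $\geq i+1$ rather than $\geq i$), we have $\portsatzero{\taylor{e}{i+1}} \subseteq \portsatzero{\taylor{e}{i}}$ and the ground structures agree on common ports. What can fail is the box structure: a ``new'' box of $\taylor{e}{i+1}$, i.e.\ some $o_0 = \cod_{e,i}(j') \in \exactboxesatzero{\taylor{e}{i+1}}{i}$, is merely a $\cod$-port in $\taylor{e}{i}$ and not a box. Hence the plan is to show that no such $o_0$ belongs to $\boxesatzero{T}$, which then gives $\depthof{T} \leq i$ and makes the box data of $T$ coincide under both $\sqsubseteq$ readings, thanks to Fact~\ref{fact : boxes of taylor{e}{i+1} of depth < i}.

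To rule out new boxes, I would argue by contradiction: suppose $o_0 = \cod_{e,i}(j') \in \boxesatzero{T}$. By Proposition~\ref{prop: critical ports below new boxes}, $\im{b_{\taylor{e}{i+1}}(o_0)} = \criticalports{\taylor{e}{i}}{k}{j'}$. If $j' \neq j$ then $o_0 \notin \mathcal{K}$, so the maximality condition of $\nontrivialconnected{\taylor{e}{i+1}}{\mathcal{K}}{k}$ forces every $\coh_{\taylor{e}{i+1}}$-neighbor of $o_0$ into $\portsatzero{T}$; in particular the entire set $\criticalports{\taylor{e}{i}}{k}{j'}$ lies in $T$, and examining the arities in $T$ of the ports in this set (using that wires erased from $T$ are only those starting in $\mathcal{K} = \criticalports{\taylor{e}{i}}{k}{j}$, which cannot account for the $k^{j'}$ incoming premises) would force $\cosize{T} \geq k$. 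If $j' = j$, then $o_0 \in \mathcal{K}$ forces $o_0 \in \conclusions{T}$, but $o_0 \in \boxesatzero{T}$ also forces $\cosize{T} \geq \cosize{B_{\taylor{e}{i+1}}(o_0)}$, whose value can again be pushed past $k$ by counting the contributions of the critical conclusions of $B_{\taylor{e}{i+1}}(o_0)$.

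Once $\depthof{T} \leq i$ is established, $T$ lives wholly at a level common to both $\taylor{e}{i}$ and $\taylor{e}{i+1}$, so $T \sqsubseteq_{\mathcal{K}} \taylor{e}{i}$ is immediate. For the connectedness $T \trianglelefteq_{\mathcal{K}} \taylor{e}{i}$, I would take a witness path via $\coh_{\taylor{e}{i+1}}$ and convert it step-by-step using Fact~\ref{fact: coh[i+1]}: each step is either already a $\coh_{\taylor{e}{i}}$-step, or both endpoints lie in $\criticalports{\taylor{e}{i}}{k}{\mathcal{N}_i(e)}$, in which case Fact~\ref{fact: critical ports for M => crtical ports for j} (which applies because $\depthof{T} \leq i$) places them in $\mathcal{K}$, so the path terminates legitimately. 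Conversely, for the maximality condition over $\taylor{e}{i}$: given $p \coh_{\taylor{e}{i}} q$ with $p \in \portsatzero{T}$ and $q \notin \portsatzero{T}$, Fact~\ref{fact: coh[i]} yields $p \coh_{\taylor{e}{i+1}} q$, and the maximality condition of $\nontrivialconnected{\taylor{e}{i+1}}{\mathcal{K}}{k}$ gives $p \in \mathcal{K}$, concluding the proof.
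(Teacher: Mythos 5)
Your overall skeleton coincides with the paper's: everything reduces to showing that $T$ contains no box of $\exactboxesatzero{\taylor{e}{i+1}}{i}$, so that $\depthof{T} \leq i$, $T \sqsubseteq \taylor{e}{i}$ via Fact~\ref{fact : boxes of taylor{e}{i+1} of depth < i}, and then connectedness and maximality transfer exactly as you say, via Facts~\ref{fact: coh[i]}, \ref{fact: coh[i+1]} and~\ref{fact: critical ports for M => crtical ports for j}. The genuine gap is in the pivotal exclusion step: your cosize/arity counting fails in both of your cases.

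In the case $j' \neq j$ you claim that the doors $\criticalports{\taylor{e}{i}}{k}{j'}$, once forced into $\portsatzero{T}$, still carry ``the $k^{j'}$ incoming premises'' and hence push $\cosize{T}$ past $k$. But $T$ is a substructure of $\taylor{e}{i+1}$, not of $\taylor{e}{i}$: the whole point of passing from $\taylor{e}{i}$ to $\taylor{e}{i+1}$ is that the $k^{j'}$ copies responsible for the large arity of those doors are folded back into the \emph{single} box $o_0$, so in $\taylor{e}{i+1}$ a door $p$ receives from $o_0$ only $\Card{\left\lbrace q \in \conclusionsnotcirc{B_{\taylor{e}{i+1}}(o_0)} ; b_{\taylor{e}{i+1}}(o_0)(q) = p \right\rbrace}$ premises (plus the arities of $\circ$-conclusions), a quantity bounded by $\cosize{R} < k$; the arity $\geq k^{j'}$ lives in $\taylor{e}{i}$, not in $T$. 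In the case $j' = j$, by Definition~\ref{defin: Taylor} the content $B_{\taylor{e}{i+1}}(o_0)$ of any box of $\taylor{e}{i+1}$ at depth $0$ is the content of some (possibly nested) box of $R$, so $\cosize{B_{\taylor{e}{i+1}}(o_0)} \leq \cosize{R} < k$ and this can never be ``pushed past $k$''. The paper's argument for this step is of a different, connectivity-based nature: by Proposition~\ref{prop: critical ports below new boxes} the doors $\im{b_{\taylor{e}{i+1}}(o_0)}$ of a new box are exactly a set of critical ports, hence conclusions of $T$ lying in the set through which $\trianglelefteq$-paths may not pass; connectedness of $T$ then confines $\portsatzero{T}$ to $o_0$ and its doors, i.e.\ to $\criticalports{\taylor{e}{i}}{k}{j}$, contradicting the requirement $\portsatzero{T} \setminus \criticalports{\taylor{e}{i}}{k}{j} \neq \emptyset$. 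You would need to replace your counting argument by something of this kind.
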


\begin{proof}
(Sketch) 
Let $j \in \mathcal{N}_i(e)$ and $T \in \nontrivialconnected{\taylor{e}{i+1}}{\criticalports{\taylor{e}{i}}{k}{j}}{k}$. 
By assumption we have $\portsatzero{T} \setminus \criticalports{\taylor{e}{i}}{k}{j} \not= \emptyset$.

First, notice that $\depthof{T} \leq i$. Indeed, assume that $\depthof{T} = i+1$. Let $o \in \exactboxesatzero{T}{i} = \exactboxesatzero{\taylor{e}{i+1}}{i} \cap \portsatzero{T}$. For any $p \in \portsatzero{T}$ such that $o \coh_T p$, we have $p \in \conclusions{T}$. Indeed, consider the case $o \in \exactboxesatzero{R}{i}$: we have $b_{\taylor{e}{i+1}}(o) = b_R(o)$; by Proposition~\ref{prop: critical ports below new boxes}, we have $\{ q \in \portsatzero{\taylor{e}{i}} ; q \in \im{b_R(o)} \} = \im{b_R(o)} = \criticalports{\taylor{e}{i}}{k}{j}$. This shows that $\portsatzero{T} = \conclusions{T}$, which contradicts $T \in \nontrivialconnected{\taylor{e}{i+1}}{\criticalports{\taylor{e}{i}}{k}{j}}{k}$. 

Now, using Fact~\ref{fact : boxes of taylor{e}{i+1} of depth < i}, we could easily show that $T \sqsubseteq \taylor{e}{i}$. 
So, by Fact~\ref{fact: coh[i+1]} and Fact~\ref{fact: critical ports for M => crtical ports for j}, we have $T \trianglelefteq_{\criticalports{\taylor{e}{i}}{k}{j}} \taylor{e}{i}$. 
Finally, by Fact~\ref{fact: coh[i]}, we obtain $T \in \nontrivialconnected{\taylor{e}{i}}{\criticalports{\taylor{e}{i}}{k}{j}}{k}$.
\end{proof}

\section{With axioms (Remark~\ref{remark: with axioms})}

With axioms, we need to slightly modify Definition~\ref{defin: experiment induces pseudo-experiment}, since different experiments can induce the same pseudo-experiment:

\begin{definition}\label{defin: experiment induces pseudo-experiment - with axioms}
Given an experiment $e$ of some differential $\circ$-PS $R$, we define, by induction on $\textit{depth}(R)$, a pseudo-experiment $\overline{e}$ of $R$ as follows: $\overline{e}(\emptysequence) = (R, 1)$ and 
$$\overline{e}(o) = \left\lbrace \overline{f}[\emptysequence \mapsto (B_R(o), i)] ; f \in \textit{Supp}(\boxes{e}(o)) \textit{ and } 1 \leq i \leq \sum_{\begin{array}{c} g \in \textit{Supp}(\boxes{e}(o))\\ \overline{g} = \overline{f} \end{array}} \mathcal{B}(e)(o)(g) \right\rbrace$$ for any $o \in \boxesatzero{R}$
\end{definition}

Notice that, if there is no axiom, Definitions~\ref{defin: experiment induces pseudo-experiment} and~\ref{defin: experiment induces pseudo-experiment - with axioms} induce the same pseudo-experiment $\overline{e}$ for an experiment $e$.

\section{Untyped framework (Remark~\ref{remark: untyped framework})}

Since there is no type, we define (differential) ground-structures \emph{via} the auxiliary definition of \emph{(differential) pre-ground-structures}:

\begin{definition}\label{defin: diff pre-ground-structure}
A \emph{differential pre-ground-structure} is a $6$-tuple $\mathcal{G} = (\mathcal{W}, \mathcal{P}, l, t, \mathcal{L}, \mathcal{A})$, where
\begin{itemize}
\item $\mathcal{P}$ is a finite set; the elements of $\ports{\mathcal{G}}$ are the \emph{ports of $\mathcal{G}$};
\item $\mathcal{W}$ is a subset of $\mathcal{P}$; the elements of $\wires{\mathcal{G}}$ are the \emph{wires of $\mathcal{G}$};
\item $l$ is a function $\mathcal{P} \to \typesoflinks$ such that $(\forall w \in \mathcal{W}) l(w) \not= \circ$; the element $l(p)$ of $\typesoflinks$ is the \emph{label of $p$ in $\mathcal{G}$};
\item $t$ is a function $\mathcal{W} \to \mathcal{P}$ such that, for any $p \in \mathcal{P}$, we have
\begin{itemize}
\item $l(p) \in \{ \tens, \parr \} \Rightarrow \Card{\{ w \in \mathcal{W} ; t(w) = p \}} = 2$;
\item $l(p) \in \{ \one, \bottom, \textit{ax} \} \Rightarrow \Card{\{ w \in \mathcal{W} ; t(w) = p \}} = 0$;
\end{itemize}
if $t(w) = p$, then $w$ is a \emph{premise of $p$};
\item $\mathcal{L}$ is a subset of $\{ w \in \mathcal{W}; l(t(w)) \in \{ \tens, \parr \} \}$ such that, for any $p \in \mathcal{P}$ such that $l(p) \in \{ \tens, \parr \}$, we have $\Card{\{ w \in \mathcal{L} ; t(w) = p \}} = 1$; if $w \in \mathcal{L}$ such that $t(w) = p$, we say that $w$ is a \emph{left premise of $p$;}
\item and $\mathcal{A}$ is a partition of $\{ p \in \mathcal{P} ; l(p) = \textit{ax} \}$ such that, for any $a \in \mathcal{A}$, $\Card{a} = 2$; the elements of $\mathcal{A}$ are the \emph{axioms of $\mathcal{G}$}.
\end{itemize}
We set 
$\wires{\mathcal{G}} = \mathcal{W}$, $\ports{\mathcal{G}} = \mathcal{P}$, $\labelofcell{\mathcal{G}} = l$, 
$\target{\mathcal{G}} = t$, $\leftwires{\mathcal{G}} = \mathcal{L}$, $\axioms{\mathcal{G}} = \mathcal{A}$ and $\conclusions{\mathcal{G}} = \mathcal{P} \setminus \mathcal{W}$. The elements of $\conclusions{\mathcal{G}}$ are the \emph{conclusions of $\mathcal{G}$}.

We set $\portsoftype{\one}{\mathcal{G}} = \{ p \in \mathcal{P}; l(p) = \one \}$, $\portsoftype{\bot}{\mathcal{G}} = \{ p \in \mathcal{P}; l(p) = \bot \}$, 
$\portsoftype{\cod}{\mathcal{G}} = \{ p \in \mathcal{P}; l(p) = \cod \}$, $\portsoftype{\contr}{\mathcal{G}} = \{ p \in \mathcal{P} ; l(p) = \contr \}$, $\portsoftype{\circ}{\mathcal{G}} = \{ p \in \mathcal{P} ; l(p) = \circ \}$, 
$\exponentialports{\mathcal{G}} = \portsoftype{\cod}{\mathcal{G}} \cup \portsoftype{\contr}{\mathcal{G}} \cup \portsoftype{\circ}{\mathcal{G}}$, $\portsoftype{\tens}{\mathcal{G}} = \{ p \in \mathcal{P}; l(p) = \tens \}$, $\portsoftype{\parr}{\mathcal{G}} = \{ p \in \mathcal{P}; l(p) = \parr \}$ and $\multiplicativeports{\mathcal{G}} = \portsoftype{\tens}{\mathcal{G}} \cup \portsoftype{\parr}{\mathcal{G}}$.

A \emph{pre-ground-structure} is a differential pre-ground-structure $\mathcal{G}$ such that $\im{\target{\mathcal{G}}} \cap (\portsoftype{\cod}{\mathcal{G}} \cup \portsoftype{\circ}{\mathcal{G}}) = \emptyset$.

A \emph{differential ground-structure} (resp. a \emph{ground-structure}) is a differential pre-ground structure (resp. a pre-ground structure) $\mathcal{G}$ such that the reflexive transitive closure $<_\mathcal{G}$ of the binary relation $<$ on $\ports{\mathcal{G}}$ defined by $p < p' \textit{ iff } p = \target{\mathcal{G}}(p')$ is antisymmetric.
\end{definition}

For the semantics of PS's, we are given a set $A$ that does not contain any couple nor any $3$-tuple and such that $\ast \notin A$. We define, by induction on $n$, the set $D_{A, n}$ for any $n \in \Nat$:
\begin{itemize}
\item $D_{A, 0} = \{ +, - \} \times (A \cup \{ \ast \})$
\item $D_{A, n+1} = D_{A, 0} \cup (\{ +, - \} \times D_{A, n} \times D_{A, n}) \cup (\{ +, - \} \times \finitemultisets{D_{A, n}})$
\end{itemize}
We set $D_A = \bigcup_{n \in \Nat} D_{A, n}$.

Definition~\ref{defin: experiment in an untyped framework} is an adaptation of Definition~\ref{defin: experiment} in an untyped framework.

\begin{definition}
For any $\alpha \in D_A$, we define $\alpha^\perp \in D_A$ as follows:
\begin{itemize}
\item if $\alpha \in A$ and $\delta \in \{ +, - \}$, then $(\delta, \alpha)^\perp = (\delta^\perp, \alpha)$;
\item if $\alpha = (\delta, \ast)$ with $\delta \in \{ +, - \}$, then $\alpha^\perp = (\delta^\perp, \ast)$;
\item if $\alpha = (\delta, \alpha_1, \alpha_2)$ with $\delta \in \{ +, - \}$ and $\alpha_1, \alpha_2 \in D_A$, then $\alpha^\perp = (\delta^\perp, {\alpha_1}^\perp, {\alpha_2}^\perp)$;
\item if $\alpha = (\delta, [\alpha_1, \ldots, \alpha_m])$ with $\delta \in \{ +, - \}$ and $\alpha_1, \ldots, \alpha_m \in D_A$, then $\alpha^\perp = (\delta^\perp, [{\alpha_1}^\perp, \ldots, {\alpha_m}^\perp])$;
\end{itemize}
where $+^\perp = -$ and $-^\perp = +$.
\end{definition}

\begin{definition}\label{defin: experiment in an untyped framework}
For any differential $\circ$-PS $R$, we define, by induction on $\textit{depth}(R)$ the set of \emph{experiments of $R$}: it is is the set of triples $(R, e_\mathcal{P}, e_\mathcal{B})$, where $e_\mathcal{P}$ is a function $\portsatzero{R} \to D_A \cup \finitemultisets{D_A}$ and $e_\mathcal{B}$ is a function which associates to every $o \in \boxesatzero{R}$ a finite multiset of experiments of $B_R(o)$ such that
\begin{itemize}
\item for any $\{ p, q \} \in \axioms{\groundof{R}}$, we have $e_\mathcal{P}(p) = \alpha$, $e_\mathcal{P}(q)  = \alpha^\perp$ for some $\alpha \in D_A$;
\item for any $p \in \portsatzerooftype{\tens}{R}$ (resp. $p \in \portsatzerooftype{\parr}{R}$), for any $w_1, w_2 \in \wiresatzero{R}$ such that $\target{\groundof{R}}(w_1) = p = \target{\groundof{R}}(w_2)$, $w_1 \in \leftwires{\groundof{R}}$ and $w_2 \notin \leftwires{\groundof{R}}$, we have $e_\mathcal{P}(p) = (+, e_\mathcal{P}(w_1), e_\mathcal{P}(w_2))$ (resp. $e_\mathcal{P}(p) = (-, e_\mathcal{P}(w_1), e_\mathcal{P}(w_2))$);
\item for any $p \in \portsatzerooftype{\one}{R}$ (resp. $p \in \portsatzerooftype{\bot}{R}$), we have $e_\mathcal{P}(p) = (+, \ast)$ (resp. $e_\mathcal{P}(p) = (-, \ast)$);
\item for any $p \in \exponentialportsatzero{R}$,
we have $e(p) = \left\lbrace \begin{array}{ll} 
a & \textit{if $p \in \conclusionscirc{R}$;}\\
(-, a) & \textit{if $p \in \portsatzerooftype{\contr}{R}$;}\\
(+, a) & \textit{if $p \in \portsatzerooftype{\cod}{R}$;}
\end{array} \right.$
where 
\begin{eqnarray*}
a & = & \sum_{\begin{array}{c} w \in \wiresatzero{R}\\ \target{\groundof{R}}(w) = p \end{array}} [e_\mathcal{P}(w)] +\\ 
&  & \sum_{\begin{array}{c} o \in \boxesatzero{R} \end{array}} \sum_{e' \in \supp{e_\mathcal{B}(o)}}  ( \sum_{\begin{array}{c} q \in \conclusionsnotcirc{B_R(o)}\\ b_R(o)(q) = p \end{array}} e_\mathcal{B}(o)(e') \cdot [{e'}_\mathcal{P}(q)] + \sum_{\begin{array}{c} q \in \conclusionscirc{B_R(o)}\\ b_R(o)(q) = p \end{array}} e_\mathcal{B}(o)(e') \cdot {e'}_\mathcal{P}(q) ) 
\end{eqnarray*}
\end{itemize}
For any experiment $e = (R, e_\mathcal{P}, e_\mathcal{B})$, we set $\ports{e} = e_\mathcal{P}$ and $\boxes{e} = e_\mathcal{B}$.

For any differential $\circ$-PS $R$, we set $\sm{R}_A = \{ \restriction{\ports{e}}{\conclusions{R}} ; e \textit{ is an experiment of } R \}$.
\end{definition}

\begin{definition}\label{defin: injective}
Let $r \in \finitemultisets{D_A}$. We say that $r$ is \emph{injective} if, for every $\gamma \in A$, there are at most two occurrences of $\gamma$ in $r$. 

For any set $\mathcal{P}$, for any function $x : \mathcal{P} \to D_A$, we say that $x$ is \emph{injective} if $\sum_{p \in \mathcal{P}} [x(p)]$ is injective.

An experiment $e$ of a differential $\circ$-PS $S$ is said to be \emph{injective} if $\restriction{\ports{e}}{\conclusions{R}}$ is injective.
\end{definition}

\begin{definition}
Let $\sigma : A \to D_A$. For any $\alpha \in D_A$, we define $\sigma \cdot \alpha \in D_A$ as follows:
\begin{itemize}
\item if $\alpha \in A \cup \{ \ast \}$, then $\sigma \cdot (+, \alpha) = \sigma(\alpha)$ and $\sigma \cdot (-, \alpha) = \sigma(\alpha)^\perp$;
\item if $\delta \in \{ +, - \}$ and $\alpha_1, \alpha_2 \in D_A$, then $\sigma \cdot (\delta, \alpha_1, \alpha_2) = (\delta, \sigma \cdot \alpha_1, \sigma \cdot \alpha_2)$;
\item if $\delta \in \{ +, - \}$ and $\alpha_1, \ldots, \alpha_m \in D_A$, then $\sigma \cdot (\delta, [\alpha_1, \ldots, \alpha_m]) = (\delta, [\sigma \cdot \alpha_1, \ldots, \sigma \cdot \alpha_m])$.
\end{itemize}
For any set $\mathcal{P}$, for any function $x : \mathcal{P} \to D_A$, we define a function $\sigma \cdot x : \mathcal{P} \to D_A$ by setting: $(\sigma \cdot x)(p) = \sigma \cdot x(p)$ for any $p \in \mathcal{P}$.
\end{definition}

\begin{rem}
For any functions $\sigma, \sigma' : A \to D_A$, for any function $x : \mathcal{P} \to D_A$, we have $\sigma \cdot (\sigma' \cdot x) = (\sigma \cdot \sigma') \cdot x$.
\end{rem}

\begin{definition}
Let $S$ be a differential $\circ$-PS. Let $e$ be an experiment of $S$. Let $\sigma : A \to D_A$. We define, by induction of $\depthof{S}$, an experiment $\sigma \cdot e$ of $S$ by setting 
\begin{itemize}
\item $\ports{\sigma \cdot e} = \sigma \cdot \ports{e}$
\item $\boxes{\sigma \cdot e}(o) = \sum_{e_1 \in \supp{\boxes{e}(o_1)}} \boxes{e}(o_1)(e_1) \cdot [\sigma \cdot e_1]$ for any $o_1 \in \boxesatzero{S}$.
\end{itemize}
\end{definition}

Since we deal with untyped proof-nets, we cannot assume that the proof-nets are $\eta$-expanded and that experiments label the axioms only by atoms. That is why we introduce the notion of \emph{atomic experiment}:

\begin{definition}
For any differential $\circ$-PS $R$, we define, by induction on $\depthof{R}$, the set of \emph{atomic experiments of $R$}: it is the set of experiments $e$ of $R$ such that
\begin{itemize}
\item for any $\{ p, q \} \in \axioms{\groundof{R}}$, we have $\ports{e}(p), \ports{e}(q) \in \{ +, - \} \times A$;
\item and, for any $o_1 \in \boxesatzero{R}$, the multiset $\boxes{e}(o_1)$ is a multiset of atomic experiments of $B_R(o_1)$.
\end{itemize}
\end{definition}

\begin{fact}\label{fact: an experiment of a flat-pS induces an experiment of differential flat-PS}
Let $R$ be a $\circ$-PS. Let $e$ be an experiment of $R$. If $e$ is atomic, then $\mathcal{T}(e)$ is atomic.
\end{fact}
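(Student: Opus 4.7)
The plan is a direct induction on $\depthof{R}$. The Taylor operator $\mathcal{T}$ applied to an experiment $e$ of a $\circ$-PS produces an experiment of the differential $\circ$-PS $\taylor{\overline{e}}{\depthof{R}}$: the label assigned by $\mathcal{T}(e)$ to each port is inherited either from $\ports{e}$ directly (for ports at depth $0$ of $R$) or from $\ports{e_1}$ for some sub-experiment $e_1 \in \supp{\boxes{e}(o_1)}$ with $o_1 \in \boxesatzero{R}$ (for ports that come from unfolding the box $o_1$). Since atomicity concerns only the labels of axioms, and the axioms of $\taylor{\overline{e}}{\depthof{R}}$ are precisely copies (with fresh names) of the axioms of $R$ together with those of the $\circ$-PS's $B_R(o)$ for $o \in \boxesatzero{R}$, atomicity of $\mathcal{T}(e)$ reduces to atomicity of $e$ together with atomicity of each $\mathcal{T}(e_1)$.

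In the base case $\depthof{R} = 0$, we have $\boxesatzero{R} = \emptyset$, so the only contribution to $\mathcal{T}(e)$ comes from $e$ itself, and the conclusion is immediate from the assumption that $e$ is atomic. For the inductive step, assume the result holds for all $\circ$-PS's of strictly smaller depth. Fix $o_1 \in \boxesatzero{R}$ and $e_1 \in \supp{\boxes{e}(o_1)}$: by the definition of atomic experiments, each such $e_1$ is an atomic experiment of $B_R(o_1)$, which has depth $< \depthof{R}$, so the induction hypothesis yields that $\mathcal{T}(e_1)$ is atomic. Reassembling the pieces along the construction of $\taylor{\overline{e}}{\depthof{R}}$, each axiom $\{p,q\}$ of $\mathcal{T}(e)$ either belongs to $R$ at depth $0$ (hence is labeled by an atom from $\ports{e}$ by assumption) or is obtained from an axiom in some copy of $B_R(o_1)$ (hence is labeled by an atom from $\ports{\mathcal{T}(e_1)}$ by the induction hypothesis). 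In both cases the labels of $p$ and $q$ lie in $\{ +, - \} \times A$, proving that $\mathcal{T}(e)$ is atomic.

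The main (minor) obstacle is bookkeeping: one has to make explicit the extension of $\mathcal{T}$ from pseudo-experiments to experiments, namely by transporting the label function $\ports{e}$ port-wise along the construction of $\taylor{\overline{e}}{\depthof{R}}$ given in Definition~\ref{defin: Taylor}, and then check that axiom partitions and labels are preserved in the obvious way. Once this extension is pinned down, the statement is essentially by construction: no condition about injectivity of $e$, about the cardinality of $A$, or about the value of $\cosize{R}$ plays any role, and the induction closes immediately.
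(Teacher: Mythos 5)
Your proof is correct and is exactly the paper's argument, which is given there as a bare induction on $\depthof{R}$: the base case is immediate, and the inductive step uses precisely your observation that, by definition of atomic experiments, each $e_1 \in \supp{\boxes{e}(o_1)}$ is an atomic experiment of $B_R(o_1)$, which has strictly smaller depth, while every axiom of the expansion is a (renamed) copy of an axiom at depth $0$ of $R$ or of an axiom inside some $B_R(o_1)$. One small slip worth fixing: the induced experiment $\mathcal{T}(e)$ lives on the fully unfolded differential net $\taylor{\overline{e}}{0}$, not on $\taylor{\overline{e}}{\depthof{R}}$ (which equals $R$ itself by Fact~\ref{fact: Taylor[i] with i large} and would make the statement vacuous), but the rest of your description — ports inherited from sub-experiments, axioms arising from unfolding boxes — makes clear that this is what you actually argue.
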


\begin{proof}
By induction on $\depthof{R}$.
\end{proof}

\begin{definition}
Let $\mathcal{P}$ be a set. Let $\mathcal{D} \subseteq {(D_A)}^\mathcal{P}$. 
Let $x \in \mathcal{D}$, we say that $x$ is \emph{$\mathcal{D}$-atomic} if we have 
$$(\forall \sigma \in {(D_A)}^A) (\forall y \in \mathcal{D}) (\sigma \cdot y = x \Rightarrow (\forall \gamma \in \atoms{y}) \sigma(\gamma) \in A)$$
where $\atoms{y}$ is the set of atoms occurring in $\im{y}$.

We denote by $\mathcal{D}_\textit{At}$ the subset of $\mathcal{D}$ consisting of the $\mathcal{D}$-atomic elements of $\mathcal{D}$.
\end{definition}

For any PS $R$, any $\sm{R}_A$-atomic injective point is the result of some atomic experiment of $R$:

\begin{fact}\label{fact: from point to experiment}
Let $R$ be a $\circ$-PS. Let $x \in {(\sm{R}_A)}_{\textit{At}}$ injective. Then there exists an atomic experiment $e$ of $R$ such that $\restriction{e}{\conclusions{R}} = x$.
\end{fact}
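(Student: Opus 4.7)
The plan is to show that \emph{every} experiment $e'$ of $R$ satisfying $\restriction{\ports{e'}}{\conclusions{R}} = x$ is already atomic, whence the required $e$ is $e'$ itself; such an $e'$ exists since $x \in \sm{R}_A$. Fix such an $e'$.

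First, by induction on $\depthof{R}$, I would construct an atomic ``abstraction'' of $e'$: an atomic experiment $e_0$ of $R$ together with a substitution $\sigma_0 : A \to D_A$ such that $\sigma_0 \cdot e_0 = e'$. Concretely, pick a distinct fresh atom $\gamma \in A$ for every axiom occurrence in $e'$ --- each $\{p,q\} \in \axioms{\groundof{R}}$ at depth $0$, and recursively each axiom of each sub-experiment $e'_i \in \supp{\boxes{e'}(o)}$ for every box $o$, with distinct $\gamma$'s for distinct $e'_i$'s --- label the corresponding axiom of $e_0$ by $(+,\gamma), (-,\gamma)$, and set $\sigma_0(\gamma)$ to equal the label of $p$ in $e'$. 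For each box, take $\boxes{e_0}(o) = \sum n_i [e_{0,i}]$ where $e_{0,i}$ is the atomization of $e'_i$ and $n_i = \boxes{e'}(o)(e'_i)$. Since substitution commutes with the propagation rules for $\tens, \parr, \cod, \contr$-ports (as is immediate from the definition of $\sigma \cdot \alpha$), the equality $\sigma_0 \cdot e_0 = e'$ follows at every port.

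Next, set $y = \restriction{\ports{e_0}}{\conclusions{R}}$; then $y \in \sm{R}_A$ and $\sigma_0 \cdot y = x$. By the $\sm{R}_A$-atomicity of $x$, $\sigma_0$ sends every $\gamma \in \atoms{y}$ into $\{+,-\} \times A$. The key verification is that $\atoms{y}$ contains \emph{every} fresh atom used in $e_0$: any such atom labels an axiom sitting inside a sub-experiment selected with positive multiplicity at each depth above it, so its label propagates through the multiplicative and exponential cells to a conclusion of that sub-experiment, then --- with positive multiplicity --- enters the multiset at the image port via $b_R$, and is eventually carried to a depth-$0$ conclusion of $R$, since each ground-structure is acyclic and every port therefore descends to a conclusion.

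Combining these observations, $\sigma_0$ maps every atom labelling an axiom of $e_0$ into $\{+,-\} \times A$, so the axioms of $e' = \sigma_0 \cdot e_0$ are themselves atomically labelled. Hence $e'$ is atomic, completing the argument. The main obstacle is the bookkeeping in the atomization step: one must introduce \emph{distinct} fresh atoms for distinct axiom occurrences across different members of $\supp{\boxes{e'}(o)}$ and their recursive sub-experiments, so that a \emph{single} substitution $\sigma_0 : A \to D_A$ can recover all of $e'$. The atom-propagation claim used in the third paragraph is essential for invoking the atomicity hypothesis, but is routine to verify by induction on $\depthof{R}$.
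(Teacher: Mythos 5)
Your argument is correct and is essentially the paper's own: the paper proves the contrapositive — by induction on $(\depthof{R},\Card{\portsatzero{R}})$, every non-atomic injective experiment $e'$ admits an experiment $e$ and a substitution $\sigma$ with $\sigma\cdot e=e'$ and some $\gamma\in\atoms{\restriction{e}{\conclusions{R}}}$ with $\sigma(\gamma)\notin A$, so the result of $e'$ is not $\sm{R}_A$-atomic — and this is exactly your fresh-atom abstraction combined with the observation that every atom labelling an axiom (at any depth, in the support of the box multisets) propagates down to a conclusion of $R$. You abstract all axiom labels at once and conclude directly rather than exhibiting a single offending $\gamma$, but the mechanism is the same.
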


\begin{proof}
We prove, by induction on $(\depthof{R}, \Card{\portsatzero{R}})$ lexicographically ordered, that, for any non-atomic injective experiment $e'$ of $R$, there exist an experiment $e$ of $R$, a function $\sigma : A \to D_A$ such that $\sigma \cdot e = e'$ and $\gamma \in \atoms{\restriction{e}{\conclusions{R}}}$ such that $\sigma(\gamma) \notin A$.
\end{proof}

The converse does not necessarily hold: for some PS's $R$, there are results of atomic injective experiments of $R$ that are not $\sm{R}_A$-atomic. Indeed, consider Figure~\ref{fig: atomic experiments}. 
\begin{figure}[!t]
\centering
\scalebox{\scalefactfour}{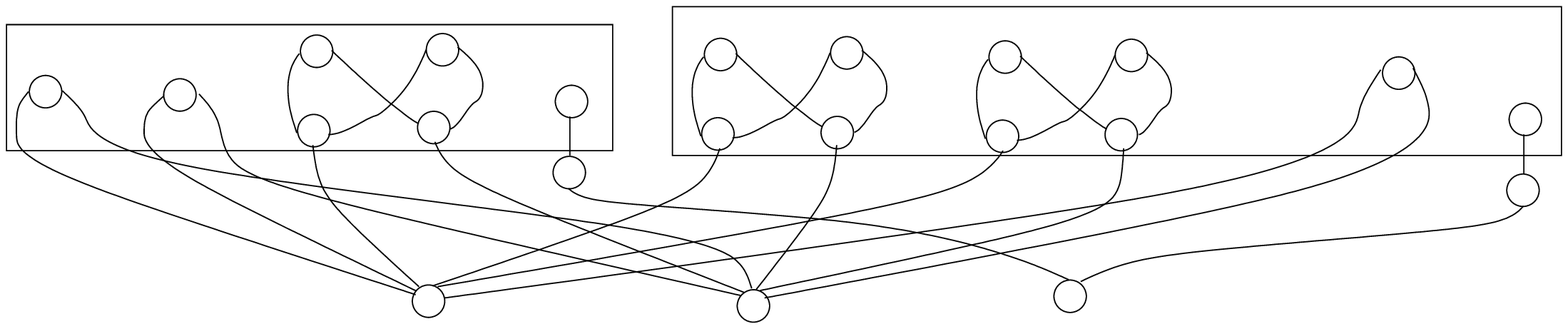}
\caption{PS $R''$}
\label{fig: atomic experiments}
\end{figure}
There exists an atomic injective experiment $e$ of $R''$ such that 
\begin{itemize}
\item $\ports{e}(p_1) = (-, [(+, \gamma_1), \ldots, (+, \gamma_7), (+, (+, \gamma_{8}), (+, \gamma_{9})), \ldots, (+, (+, \gamma_{22}), (+, \gamma_{23}))])$, 
\item $\ports{e}(p_2) = (-, [(-, \gamma_1), \ldots, (-, \gamma_7), (-, (-, \gamma_{8}), (-, \gamma_{9})), \ldots, (-, (-, \gamma_{22}), (-, \gamma_{23}))])$
\item and $\ports{e}(p_3) = (-, [(+, [(+, \ast), (+, \ast)]), (+, [(+, \ast), (+, \ast), (+, \ast)]) ])$,
\end{itemize}
where $\{ \gamma_1, \ldots, \gamma_{23} \} \subseteq A$. But $\restriction{e}{\{ p_1, p_2, p_3 \}}$ is not in ${(\sm{R''}_A)}_{\textit{At}}$: there exists an atomic injective experiment $e'$ of $R'$ such that 
\begin{itemize}
\item $\ports{e'}(p_1) = (-, [(+, \gamma_1), \ldots, (+, \gamma_8), (+, (+, \gamma_{10}), (+, \gamma_{11})), \ldots, (+, (+, \gamma_{22}), (+, \gamma_{23}))])$, 
\item $\ports{e'}(p_2) = (-, [(-, \gamma_1), \ldots, (-, \gamma_8), (-, (-, \gamma_{10}), (-, \gamma_{11})), \ldots, (-, (-, \gamma_{22}), (-, \gamma_{23}))])$ 
\item and $\ports{e'}(p_3) = (-, [(+, [(+, \ast), (+, \ast)]), (+, [(+, \ast), (+, \ast), (+, \ast)]) ])$;
\end{itemize}
we set $\sigma(\gamma) = \left\lbrace \begin{array}{ll} \gamma & \textit{if $\gamma \in A \setminus \{ \gamma_8 \}$;}\\ (+, (+, \gamma_8), (+, \gamma_9)) & \textit{if $\gamma = \gamma_8$;} \end{array} \right.$ - we have $\sigma \cdot \restriction{e'}{\{ p_1, p_2, p_3 \}} = \restriction{e}{\{ p_1, p_2, p_3 \}}$.

But it does not matter, because there are many enough atomic points:

\begin{fact}\label{fact: atomic are enough}
Let $R$ be a $\circ$-PS. For any $y \in \sm{R}_A$, there exist $x \in {(\sm{R}_A)}_\textit{At}$ and $\sigma : A \to D_A$ such that $\sigma \cdot x = y$. 
\end{fact}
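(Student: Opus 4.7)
The plan is a minimality argument based on syntactic size. First I would define a weight $|\alpha|$ on $D_A$ by setting $|(\delta,\gamma)|=1$ for $\gamma\in A$, $|(\delta,\ast)|=2$, $|(\delta,\alpha_1,\alpha_2)|=1+|\alpha_1|+|\alpha_2|$, and $|(\delta,[\alpha_1,\ldots,\alpha_m])|=1+\sum_i|\alpha_i|$, extending pointwise to tuples $z:\mathcal{P}\to D_A$ via $|z|=\sum_{p\in\mathcal{P}}|z(p)|$. The calibration $|(\delta,\ast)|>|(\delta,\gamma)|$ is the only subtle choice: a direct structural induction then yields $|\sigma\cdot z|\ge|z|$, with strict inequality precisely when some $\gamma\in\atoms{z}$ satisfies $\sigma(\gamma)\notin\{+,-\}\times A$.

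Fix $y\in\sm{R}_A$ and set $\mathcal{X}(y)=\{x\in\sm{R}_A \mid \exists\sigma,\ \sigma\cdot x = y\}$. This set is nonempty since it contains $y$ itself (via $\sigma=\mathrm{id}$), and $|\cdot|$ takes values in $\Nat$, so there exists $x_0\in\mathcal{X}(y)$ of minimal weight; fix $\sigma_0$ with $\sigma_0\cdot x_0=y$. I claim $x_0\in{(\sm{R}_A)}_{\textit{At}}$, which is exactly what is needed. Suppose for contradiction that it fails: there exist $\tau:A\to D_A$ and $x'\in\sm{R}_A$ with $\tau\cdot x'=x_0$ and some $\gamma\in\atoms{x'}$ with $\tau(\gamma)\notin\{+,-\}\times A$. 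By the weight estimate above, $|x'|<|x_0|$. Now define $\theta:A\to D_A$ by $\theta(\gamma')=\sigma_0\cdot\tau(\gamma')$; a routine induction on $D_A$ establishes the associativity identity $\sigma_0\cdot(\tau\cdot z)=\theta\cdot z$ for every $z$. Hence $\theta\cdot x'=\sigma_0\cdot x_0=y$, witnessing $x'\in\mathcal{X}(y)$ with $|x'|<|x_0|$, contradicting the minimality of $x_0$.

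The argument requires no result from the body of the paper and presents no obstacle of substance. The only two points that deserve careful checking are the calibration of the weights (so that an atom-to-$\ast$ substitution, which still takes a point strictly further from ``atomic'', produces a strict size increase) and the compositionality of the action $\sigma\cdot(-)$; both are immediate inductions on the structure of elements of $D_A$.
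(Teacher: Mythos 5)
Your strategy coincides with the paper's: a well-founded descent on a size measure for points, combined with the compositionality identity $\sigma\cdot(\tau\cdot z)=(\sigma\cdot\tau)\cdot z$, which the paper records as a standalone Remark; presenting the descent via a minimal element of $\mathcal{X}(y)$ rather than as an induction on the size of $y$ is only a cosmetic difference. The one step that fails as written is the calibration of your weight. Under your definition the signed \emph{empty} multiset has weight $|(\delta,[\,])|=1+0=1$, which equals the weight of a signed atom. Hence the claim that $|\sigma\cdot z|\geq|z|$ ``with strict inequality precisely when some $\gamma\in\atoms{z}$ satisfies $\sigma(\gamma)\notin\{+,-\}\times A$'' is false: if $\sigma(\gamma)=(\delta,[\,])$ (a legitimate value of $D_A$ that is not a signed atom) and $\sigma$ fixes the other atoms, then $|\sigma\cdot z|=|z|$. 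This matters because the witnesses $\tau$ and $x'$ in your contradiction step are supplied to you by the negation of $\mathcal{D}$-atomicity, so you cannot rule out $\tau(\gamma)=(\delta,[\,])$; in that case $|x'|<|x_0|$ fails and the minimality of $x_0$ yields no contradiction.

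The repair is the same move you already made for $\ast$: give every non-atomic constructor weight at least $2$, e.g.\ set $|(\delta,[\alpha_1,\ldots,\alpha_m])|=2+\sum_i|\alpha_i|$. Then $|\alpha|=1$ exactly when $\alpha$ is a signed atom, monotonicity is unaffected, strictness holds exactly when some atom of $z$ is sent outside $\{+,-\}\times A$, and the rest of your argument goes through verbatim. (For comparison, the paper measures a point by the least $i$ with $y(p)\in D_{A,i}$, i.e.\ by nesting depth, and asserts the same strict decrease; that measure needs at least as much care, since substituting an atom that does not sit on a maximal-depth branch leaves the depth unchanged. Your additive size, once recalibrated, is the more robust choice.)
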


\begin{proof}
By induction on $\size{\sum_{p \in \conclusions{R}} [y(p)]}$, where $\size{r} \in \Nat$ is defined for any $r \in \finitemultisets{D_A}$ as follows: $\size{r} = \sum_{\alpha \in \supp{r}} r(\alpha) \cdot \min \{ i \in \Nat ; \alpha \in D_{A, i} \}$: if $y \in {[\sm{R}_A)}_\textit{At}$, then we can set $x = y$ and $\sigma = id_A$; if $y \notin {(\sm{R}_A)}_{\textit{At}}$, then there exist a function $\sigma' : A \to D_A$, $y' \in \sm{R}_A$ such that $\sigma' \cdot y' = y$ and $\gamma \in \atoms{y'}$ such that $\sigma'(\gamma) \notin A$, hence $\size{\sum_{p \in \conclusions{R}} [y'(p)]} < \size{\sum_{p \in \conclusions{R}} [y(p)]}$. By induction hypothesis, there exist $x \in {(\sm{R}_A)}_\textit{At}$ and $\sigma'' : A \to D_A$ such that $\sigma'' \cdot x = y'$. We set $\sigma = \sigma' \cdot \sigma''$: we have $\sigma \cdot x = (\sigma' \cdot \sigma'') \cdot x = \sigma' \cdot (\sigma'' \cdot x) = \sigma' \cdot y' = y$.
\end{proof}

\end{document}